\newtheorem{theorem}{Theorem}[section]
\newtheorem{lem}[theorem]{Lemma}
\newtheorem{cor}[theorem]{Corollary}
\newtheorem{prop}[theorem]{Proposition}
\theoremstyle{definition}
\newtheorem{defi}[theorem]{Definition}
\newtheorem{exa}{\bf Example}
\theoremstyle{remark}
\newcommand{\F}{\mathcal F}
\newcommand{\R}{\mathbb R}
\newcommand{\N}{\mathbb N}
\newcommand{\p}{\varphi}
\newcommand{\h}{h}
\newcommand{\eps}{\varepsilon}
\newcommand{\M}{{\mathcal M}}
\newcommand{\lf}{\ell_f}
\newcommand{\Lg}{\ell_g}
\numberwithin{equation}{section}
\begin{document}

\title{The edit distance for Reeb graphs of surfaces}

\author{B. Di Fabio}
\address{Dipartimento
di Matematica, Universit\`a di Bologna, P.zza di Porta S. Donato
5, I-$40126$ Bologna, Italia\newline ARCES, Universit\`a di
Bologna, via Toffano $2/2$, I-$40125$ Bologna, Italia}
\email{barbara.difabio@unibo.it}

\author{C. Landi}
\address{Dipartimento
di Scienze e Metodi dell'Ingegneria, Universit\`a di Modena e
Reggio Emilia, Via Amendola 2, Pad. Morselli, I-42100 Reggio
Emilia, Italia\newline ARCES, Universit\`a di Bologna, via Toffano
$2/2$, I-$40125$ Bologna, Italia} \email{claudia.landi@unimore.it}

\subjclass[2010]{Primary 05C10, 68T10; Secondary 54C30}

%
%

\keywords{shape similarity, graph edit distance, Morse function,
natural stratification}

\begin{abstract}
Reeb graphs are structural descriptors that capture shape
properties of a topological space from the perspective of a chosen
function. In this work we define a combinatorial metric for Reeb
graphs of orientable surfaces in terms of the cost necessary to
transform one graph into another by edit operations. The main
contributions of this paper are the stability property and the
optimality of this edit distance. More precisely, the stability
result states that changes in the functions, measured by the
maximum norm, imply not greater changes in the corresponding Reeb
graphs, measured by the edit distance. The optimality result
states that our edit distance discriminates Reeb graphs better
than any other  metric for Reeb graphs of surfaces satisfying the stability
property.
\end{abstract}

\maketitle

\section*{Introduction}
In shape comparison, a widely used scheme is to measure the
dissimilarity between descriptors associated with each shape rather
than to match shapes directly. 
Reeb graphs describe shapes from topological and geometrical
perspectives. In this framework, a shape is modeled as a
topological space $X$ endowed with a scalar function $f:X\to\R$.
The role of $f$ is to explore geometrical properties of the space
$X$. The Reeb graph of  $f$ is obtained by shrinking each
connected component of a level set of $f$ to a single point
\cite{reeb46}. Reeb graphs have been used as an effective tool for
shape analysis and description tasks since \cite{ShKuKe91,Sh91}.

One of the most important questions is whether Reeb graphs are
robust against perturbations that may occur because of noise and
approximation errors in the data acquisition process. Whereas in
the past researchers dealt with this problem developing
heuristics so that Reeb graphs would be resistant to connectivity
changes caused by simplification, subdivision and remesh, and
robust against noise and certain changes due to deformation
\cite{HiSh*01,BiMa*06}, in the last years the question of Reeb
graph stability has been investigated from the theoretical point
of view. In \cite{DiLa12} an edit distance between Reeb graphs
of curves endowed with Morse functions is introduced and shown to
yield stability. Importantly, despite the combinatorial nature of
this distance, it coincides with the natural pseudo-distance
between shapes \cite{DoFr04}, thus showing the maximal
discriminative power for this sort of distances. Very recently a
functional distortion distance between Reeb graphs has been
proposed in  \cite{BaGeWa}, with proven  stable and discriminative
properties. The functional distortion distance  is based on
continuous maps between the topological spaces realizing the Reeb
graphs, so that it is not combinatorial in its definition.
Noticeably, it allows for comparison of  non-homeomorphic spaces
meaning that it can be used  to deal also with artifacts that
change the homotopy type of the space, although as a consequence it cannot  fully
discriminate shapes and stability is not
proven in that case.

In this paper we deal with the comparison problem for Reeb graphs
of surfaces. Indeed the case of surfaces seems to us  the most
interesting area of application  of the Reeb graph as a shape
descriptor. As a tradeoff between generality and simplicity, we
confine ourselves to the case of smooth compact orientable
surfaces without boundary endowed with simple Morse functions.

The basic properties we consider important for a metric between
Reeb graphs are: the robustness to perturbations of the input
functions; the ability to discriminate functions on the same
manifold; the deployment of the combinatorial nature of graphs.
For this reason, we apply to the case of surfaces the same
underlying ideas as used  in \cite{DiLa12} for curves. Starting from
Reeb graphs labeled on the vertices by the function values, the following steps are
carried out: first, a set of admissible edit operations is detected to
transform a labeled Reeb graph into another; then a
suitable cost is associated to each edit operation; finally, a
combinatorial dissimilarity measure between labeled Reeb graphs,
called an \emph{edit distance}, is defined in terms of the
least cost necessary to transform one graph into another by edit
operations. However, the passage from curves to surfaces is not
automatic since Reeb graphs of surfaces are structurally different
from those of curves. For example, the degree of vertices is
different for Reeb graphs of curves and surfaces. Therefore, the
set of edit operations as well as their costs cannot be directly
imported from the case of curves but need to be suitably defined.
In conclusion, our edit distance between Reeb graphs belongs to
the family of Graph Edit Distances \cite{Gao*10}, widely used in
pattern analysis.

Our first main result is that changes in the functions, measured
by the maximum norm, imply not greater changes in this edit
distance, yielding the stability property under function
perturbations. To prove this result, we track the changes in the
Reeb graphs as the function varies along a linear path avoiding
degeneracies. From the stability property, we deduce that the
edit distance between the Reeb graphs of two functions $f$ and
$g$ defined on a surface is a lower bound for the natural
pseudo-distance between $f$ and $g$ obtained by minimizing the
change in the functions due to the application of a
self-diffeomorphism of the manifold, with respect to the maximum
norm. The natural pseudo-distance can be thought as a way to
compare $f$ and $g$ directly, while the edit distance provides
an indirect comparison between $f$ and $g$ through their Reeb
graphs. Thus, by virtue of the stability result, the edit
distance provides a combinatorial tool to estimate the natural
pseudo-distance.

Our second contribution is the proof that the edit distance
between Reeb graphs of surfaces actually coincides with the
natural pseudo-distance. This is proved by showing that for every
edit operation on a Reeb graph there is a self-homeomorphism of
the surface whose  cost is not greater than that of
the considered edit operation. This result implies that the
edit distance is actually a metric and not only a
pseudo-metric. Morever it shows that the edit distance is an
optimal distance for Reeb graphs of surfaces in that it has the
maximum discriminative power among all the distances between Reeb
graphs of surfaces with the stability property. 

In conclusion, this paper shows that the results of \cite{DiLa12}
for curves also hold in the more interesting case of surfaces.
\\

\noindent The paper is organized as follows. In Section
\ref{labreebgraphs} we recall the basic properties of labeled Reeb
graphs of orientable surfaces. In Section
\ref{SectionDeformations} we define the edit deformations
between labeled Reeb graphs, and show that through a finite
sequence of these deformations we can always transform a Reeb
graph into another. In Section \ref{edit} we define the cost
associated with each type of edit deformation and the edit
distance in terms of this cost. Section \ref{stab} illustrates the
robustness of Reeb graphs with respect to the edit distance.
Eventually, Section \ref{lowbound} provides relationships between
our edit distance and other stable metrics: the natural
pseudo-distance, the bottleneck distance and the functional distortion distance.\\

A number of questions remain open and are not treated in this
paper. The most important one is how to  compute  the edit
distance. Indeed, whereas in some particular cases we can deduce
the value of the edit distance from  the lower bounds provided
by the bottleneck distance of persistence diagrams or the
functional distortion distance of Reeb graphs, in general we do
not know how to compute it. A second open problem is to which
extent the theory developed in this paper for the smooth category
can be transported into the piecewise linear category. A third
question that would deserve investigation is how to generalize the
edit distance to compare functions on non-homeomorphic surfaces
as well, and the relationship with the functional distortion
distance in that case.

\section{Labeled Reeb graphs of orientable surfaces}\label{labreebgraphs}

Hereafter, $\M$ denotes a connected, closed (i.e.
compact and without boundary), orientable, smooth surface of genus
$\mathfrak{g}$, and $\F(\M)$ the set of $C^{\infty}$ real functions on
$\M$.

For $f\in \F(\M)$, we denote by $K_f$ the set of its critical
points. If $p\in K_f$, then the real number $f(p)$ is called a
\emph{critical value} of $f$, and the set $\{q\in \M: q\in
f^{-1}(f(p))\}$ is called a \emph{critical level} of $f$.
Moreover, a critical point $p$ is called
\emph{non-degenerate} if the Hessian matrix of $f$ at $p$ is
non-singular. The \emph{index} of a non-degenerate critical point
$p$ of $f$ is the dimension of the largest subspace of the tangent
space to $\M$ at $p$ on which the Hessian matrix is negative definite. In
particular, the index of a point $p\in K_f$ is equal to 0,1, or 2
depending on whether $p$ is a minimum, a saddle, or a maximum
point of $f$.

A function $f\in\F(\M)$ is called a \emph{Morse function} if all its
critical points are non-degenerate. Besides, a Morse function is
said to be \emph{simple} if each critical level contains exactly
one critical point.  The set of simple Morse functions will be
denoted by $\F^0(\M)$, as a reminder that it is a sub-manifold of $\F(\M)$
of co-dimension 0 (see also Section \ref{stab}).

\begin{defi}
Let $f\in \F^0(\M)$, and define on $\M$ the following equivalence
relation: for every $p,q\in\M$, $p\sim_f q$ whenever $p,q$ belong to
the same connected component of $f^{-1}(f(p))$. The quotient space
$\M/\sim_f$ is the \emph{Reeb graph} associated with $f$.
\end{defi}

Our assumption that $f$ is a simple Morse function allows us to consider the space $\M/\sim_f$ as a graph whose  vertices  correspond bijectively to the critical points of $f$. For this reason, in the following, we will often
identify vertices with the corresponding critical points.

\begin{prop}(\cite{reeb46})
The Reeb graph $\Gamma_f$ associated with $f\in\F^0(\M)$ is a
finite and connected simplicial complex of dimension 1. A vertex of $\Gamma_f$ has degree equal to 1 if it corresponds to a critical point
of $f$ of index 0 or 2, while it has degree equal to 2,3, or 4 if it corresponds to a
critical point of $f$ of index 1.
\end{prop}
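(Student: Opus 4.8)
The plan is to combine the Reeb quotient construction with the local normal forms given by the Morse Lemma. Connectedness is the easiest part: since $\M$ is connected and the quotient projection $\M\to\Gamma_f$ is continuous and surjective, $\Gamma_f$ is connected as the continuous image of a connected space.

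To obtain the finite $1$-complex structure, I would first record that a Morse function on the compact surface $\M$ has only finitely many critical points, hence finitely many critical values $c_1<\cdots<c_k$. For every regular value $t$ the level set $f^{-1}(t)$ is a compact $1$-manifold, that is a finite disjoint union of circles, each being a single $\sim_f$-class and thus one point of $\Gamma_f$. Over each open interval $(c_i,c_{i+1})$ of regular values the preimage $f^{-1}((c_i,c_{i+1}))$ is, by integrating the gradient flow, diffeomorphic to a finite disjoint union of cylinders, each of which collapses to a single open edge of $\Gamma_f$ on which the induced map $\bar f\colon\Gamma_f\to\R$ is strictly monotone. Taking the images of the critical points as vertices and these arcs as open edges exhibits $\Gamma_f$ as a finite graph, hence, after barycentric subdivision, a $1$-dimensional simplicial complex.

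For the degrees, I would work entirely in a Morse chart around a critical point $p\in K_f$ with value $c=f(p)$, using the local form $f=c\pm x^2\pm y^2$. The degree of the vertex $[p]$ equals the number of edges of $\Gamma_f$ incident to it, and since by simplicity the connected component $C$ of $f^{-1}(c)$ containing $p$ carries no other critical point, the only level-set components that limit onto $C$ as $t\to c^{\pm}$ are those meeting a small neighborhood of $p$. Hence the degree is $A^-+A^+$, where $A^{\pm}$ counts the distinct connected components of $f^{-1}(c\pm\eps)$ that intersect that neighborhood for small $\eps>0$. For an index $0$ (resp. index $2$) point the model $f=c+x^2+y^2$ (resp. $f=c-x^2-y^2$) gives one small circle on one side and nothing on the other, so $A^-+A^+=1$ and the degree is $1$. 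For an index $1$ point the model $f=c-x^2+y^2$ gives two local branches for $t<c$ and two for $t>c$; each pair lies in either one or two distinct global components, so $A^{\pm}\in\{1,2\}$ and the degree is $A^-+A^+\in\{2,3,4\}$.

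I expect the delicate step to be the saddle analysis, specifically the claim that the number of incident edges is exactly $A^-+A^+$ and that no far-away circles merge at $[p]$. This requires the simple-Morse hypothesis to guarantee that the critical level component $C$ is a figure-eight, a single degree-$4$ vertex with two arcs closing up, so that all merging and splitting of level circles occurs solely at $p$. I would justify this by following the two lower and two upper local branches along $C$ and checking that outside the Morse chart the level $c$ is regular, whence nearby regular circles track $C$ smoothly and contribute no extra incidences.
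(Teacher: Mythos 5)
Your argument is correct. Note that the paper itself gives no proof of this proposition: it is stated as a classical result and attributed to \cite{reeb46}, so there is no internal argument to compare against. What you write is the standard proof: connectedness as a continuous image, the finite $1$-complex structure from the product (cylinder) structure of $f^{-1}$ over intervals of regular values, and the degree count from the Morse local normal forms $c\pm x^2\pm y^2$. You correctly isolate the one genuinely delicate point, namely that for a saddle the incident edges are counted exactly by the components of the nearby regular levels meeting a Morse chart around $p$; your justification (the critical component $C$ is a figure-eight by the simple-Morse hypothesis, and away from $p$ the level $c$ consists of regular points so nearby level circles track the arcs of $C\setminus\{p\}$ into the chart) closes that gap. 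No errors to report.
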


Throughout this paper, Reeb graphs are regarded as combinatorial
graphs and not as topological spaces. The vertex set of $\Gamma_f$ will be denoted by
$V(\Gamma_f)$, and its edge set by $E(\Gamma_f)$. Moreover, if
$v_1,v_2\in V(\Gamma_f)$ are adjacent vertices, i.e., connected by
an edge, we will write $e(v_1,v_2)\in E(\Gamma_{f})$.

Our assumptions that $\M$ is orientable, compact and without
boundary ensure that there are no vertices of degree 2 or 4.
Moreover, if $p,q,r$ denote the number of minima, maxima, and saddle points of $f$,
from the relationships between the Euler characteristic of $\M$,
$\chi(\M)$, and $p,q,r$, i.e. $\chi(\M)=p+q-r$, and between
$\chi(\M)$ and the genus $\mathfrak{g}$ of $\M$, i.e.
$\chi(\M)=2-2\mathfrak{g}$, it follows that the cardinality of
$V(\Gamma_f)$, which is $p+q+r$, is also equal to
$2(p+q+\mathfrak{g}-1)$, i.e. is even in number. The minimum
number of vertices of a Reeb graph is achieved whenever $p=q=1$,
and consequently $r=2\mathfrak{g}$. In this case the cardinality
of $V(\Gamma_f)$ is equal to $2\mathfrak{g}+2$.  In general, if $\M$ has genus $\mathfrak{g}$ then
$\Gamma_{f}$ has exactly $\mathfrak{g}$ linearly independent
cycles. We will call a cycle of length $m$ in the graph  an
$m$-\emph{cycle}. These observations motivate the following definition.

\begin{defi}\label{minimal-canonical}
We shall call \emph{minimal} the Reeb graph $\Gamma_{f}$ of a
function $f$ having $p=q=1$. Moreover, we say that $\Gamma_{f}$ is
\emph{canonical} if it is minimal and all its cycles, if any, are
2-cycles.
\end{defi}

We  underline that our definition of canonical Reeb graph
is slightly different from the one in \cite{Ku99}. This choice has
been done to simplify the proof of Proposition \ref{connected}.

Examples of minimal and canonical Reeb graphs are displayed in
Figure \ref{mini-cano}. 
\begin{center}
\begin{figure}[htbp]
\psfrag{S}{}\psfrag{G}{$\Gamma_f$}\psfrag{L}{$(\Gamma_f,
\lf)$}\psfrag{v1}{$v_1$}\psfrag{v2}{$v_2$}\psfrag{v3}{$v_3$}\psfrag{v4}{$v_4$}\psfrag{v5}{$v_5$}\psfrag{v6}{$v_6$}\psfrag{v7}{$v_7$}\psfrag{v8}{$v_8$}\psfrag{f}{$f$}
\psfrag{a}{$(a)$}\psfrag{b}{$(b)$}\psfrag{c}{$(c)$}
\includegraphics[height=4cm]{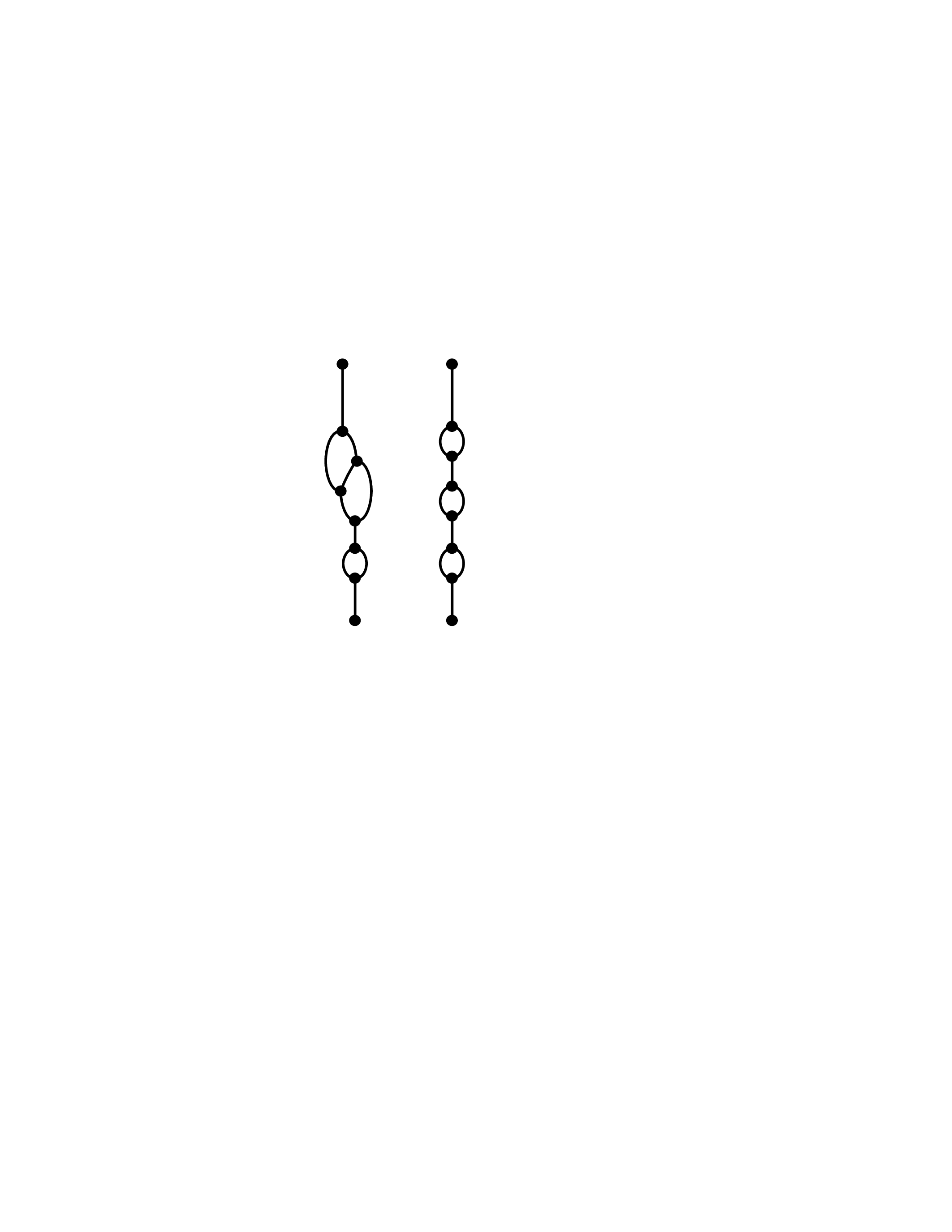}
\caption{\footnotesize{Examples of minimal Reeb graphs. The graph
on the right is also canonical.}}\label{mini-cano}
\end{figure}
\end{center}

In what follows, we label the vertices of $\Gamma_f$ by equipping
each of them with the value of $f$ at the corresponding critical
point. We denote such a labeled graph by $(\Gamma_f, \lf)$, where
$\lf:V(\Gamma_f)\to\R$ is the restriction of $f:\M\to\R$ to $K_f$.
In a labeled Reeb graph, each vertex $v$ of degree 3 has at least
two of its adjacent vertices, say $v_1,v_2$, such that
$\lf(v_1)<\lf(v)<\lf(v_2)$. An example is displayed in Figure
\ref{labeledReeb}.

\begin{figure}[htbp]
\psfrag{M}{$\M$}\psfrag{L}{$(\Gamma_f,
\lf)$}\psfrag{v1}{$a_1$}\psfrag{v2}{$a_2$}\psfrag{v3}{$a_3$}\psfrag{v4}{$a_4$}\psfrag{v5}{$a_5$}\psfrag{v6}{$a_6$}\psfrag{v7}{$a_7$}\psfrag{v8}{$a_8$}\psfrag{v9}{$a_9$}\psfrag{v10}{$a_{10}$}
\psfrag{f}{$f$}\psfrag{a1}{$a_1$}\psfrag{a2}{$a_2$}\psfrag{a3}{$a_3$}\psfrag{a4}{$a_4$}
\psfrag{a5}{$a_5$}\psfrag{a6}{$a_6$}\psfrag{a7}{$a_7$}\psfrag{a8}{$a_8$}\psfrag{a9}{$a_9$}\psfrag{a10}{$a_{10}$}
\psfrag{a}{$(a)$}\psfrag{b}{$(b)$}\psfrag{c}{$(c)$}
\includegraphics[height=4cm]{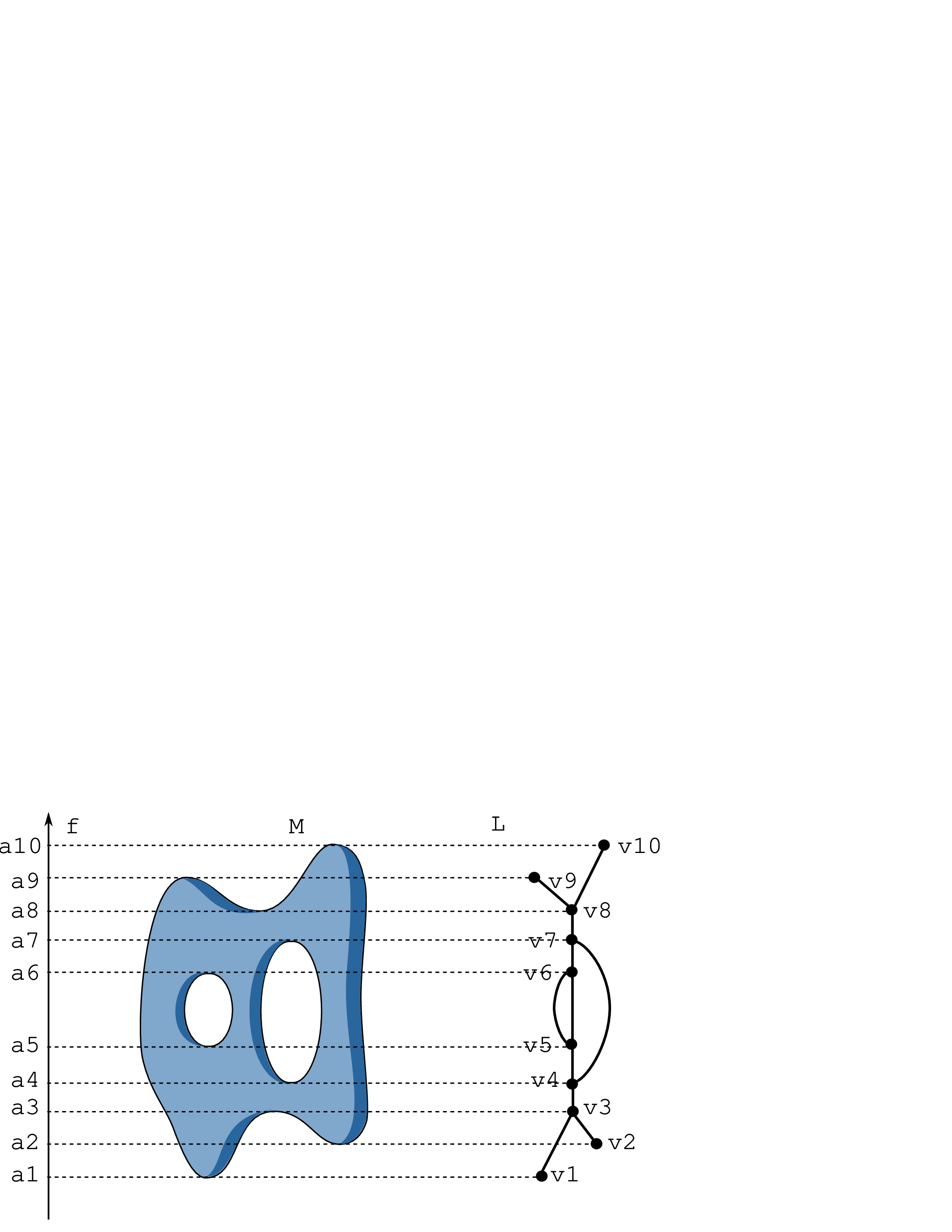}
\caption{\footnotesize{Left: the height function $f:\M\to\R$;
center: the surface $\M$ of genus $\mathfrak{g}=2$; right: the
associated labeled Reeb graph $(\Gamma_f, \lf)$.}}\label{labeledReeb}
\end{figure}

Let us consider the realization problem, i.e. the problem of
constructing a smooth surface and a simple Morse function on it from a graph on an even number
of vertices, all of which are of degree 1 or 3, appropriately
labeled. We need the following definition.

\begin{defi}\label{isolabel}
We shall say that two labeled Reeb graphs $(\Gamma_{f}, \lf),
(\Gamma_{g}, \Lg)$ are \emph{isomorphic}, and we write
$(\Gamma_{f}, \lf)\cong (\Gamma_{g}, \Lg)$, if there exists a
graph isomorphism $\Phi: V(\Gamma_{f}) \rightarrow V(\Gamma_{g})$
such that, for every $v\in V(\Gamma_f)$, $\lf(v) = \Lg(\Phi(v))$ (i.e.
$\Phi$ preserves edges and vertex labels).
\end{defi}

\begin{prop}[Realization Theorem]\label{realizth}
Let $(\Gamma,\ell)$ be a labeled graph, where $\Gamma$ is a graph
with $m$ linearly independent cycles, on an even number of
vertices, all of which are of degree 1 or 3, and
$\ell:V(\Gamma)\rightarrow \R$ is an injective function such that,
for any vertex $v$ of degree 3, at least two among its adjacent
vertices, say $w,w'$, are such that $\ell(w)<\ell(v)<\ell(w')$.
Then an orientable closed surface $\M$ of genus $\mathfrak{g}=m$,
and a simple Morse function $f:\M\to \R$ exist such that
$(\Gamma_{f}, \lf)\cong (\Gamma,\ell)$.
\end{prop}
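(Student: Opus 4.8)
The plan is to construct the surface $\M$ and the Morse function $f$ directly from the combinatorial data of $(\Gamma,\ell)$ by building $\M$ piece by piece along the level sets prescribed by $\ell$, using the labels to dictate the heights of the critical points. The guiding principle is the reverse of the Reeb construction: a Reeb graph should be realizable as the quotient of a function whose level-set behavior is encoded by the edges and vertices of $\Gamma$. Since $\ell$ is injective, I can order the vertices $v_1,\dots,v_n$ so that $\ell(v_1)<\ell(v_2)<\cdots<\ell(v_n)$, and I will think of $f$ as a ``height'' whose slices I reconstruct by sweeping from the lowest value upward.

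First I would set up the sweep. Choose regular values $c_0<\ell(v_1)<c_1<\ell(v_2)<c_2<\cdots<c_{n-1}<\ell(v_n)<c_n$ interleaving the labels. For a regular value $c_i$ lying strictly between two consecutive critical values, the preimage $f^{-1}(c_i)$ must be a disjoint union of circles, one for each edge of $\Gamma$ that the horizontal line at height $c_i$ crosses; thus the number of circles in the $i$-th regular slice equals the number of edges $e(v_j,v_k)$ with $\ell(v_j)<c_i<\ell(v_k)$. I would build $\M$ as a union of elementary cobordism pieces, one for each vertex, glued along these systems of circles. The crucial point is to match each vertex of $\Gamma$ to the correct elementary surface handle according to its degree and the sign pattern of its labeled neighbors, so that passing the critical level changes the circle system exactly as the graph demands.

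Next I would specify the local model at each vertex, which is where the degree hypothesis does all the work. A degree-1 vertex will be realized by a cap: if it is a local minimum of $\ell$ among its (single) neighbor it becomes an index-0 critical point (a disk glued to create a new circle), and if it is a local maximum it becomes an index-2 point (a disk capping off a circle). A degree-3 vertex will be realized by an index-1 saddle, which locally merges or splits circles: the condition that at least two neighbors $w,w'$ satisfy $\ell(w)<\ell(v)<\ell(w')$ guarantees that the three incident edges split into either two-below/one-above or one-below/two-above, which are exactly the two possible configurations of a saddle of a simple Morse function (one circle splitting into two, or two circles merging into one). I would verify that in every case the local surface piece is an orientable cobordism and that gluing it in changes the circle count and connectivity precisely so as to reproduce the incidences recorded at $v$. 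Assembling these pieces from bottom to top yields a compact orientable surface without boundary (the top cap closes off the last circle), carrying a simple Morse function $f$ whose critical points are in label-preserving bijection with $V(\Gamma)$.

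Finally I would check that the Reeb graph of the constructed $f$ is isomorphic to $(\Gamma,\ell)$ and that the genus comes out right. By construction the critical points of $f$ correspond bijectively to the vertices, with $\lf$ equal to $\ell$ on the nose, and the edges of $\Gamma_f$ trace exactly the families of circles in the regular slices, so the graph isomorphism is immediate. For the genus, I would invoke the Euler-characteristic bookkeeping already recorded in the text: writing $p,q,r$ for the numbers of minima, maxima, and saddles realized, $\chi(\M)=p+q-r=2-2\mathfrak{g}$, and combining this with the fact that $\Gamma$ has exactly $m$ independent cycles forces $\mathfrak{g}=m$, as required. The main obstacle, and the step deserving the most care, is the \emph{consistency of the gluing}: I must show that the circle system produced at the bottom of each elementary piece matches the circle system left by all the pieces below it, so that the local models can actually be patched into a single smooth closed orientable surface rather than merely a formal cobordism. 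Handling this cleanly is precisely why the canonical form of Definition \ref{minimal-canonical} and the careful labeling hypotheses were introduced, and I expect the argument to proceed by induction on the number of vertices, maintaining as an inductive invariant that after processing the first $k$ vertices the partial surface has boundary exactly the circle system dictated by the edges crossing height $c_k$.
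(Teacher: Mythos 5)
Your construction is correct and is essentially the paper's own proof: the paper simply defers to Theorem~2.1 of the cited reference \cite{MaSa11}, whose argument is exactly your bottom-to-top sweep gluing caps at degree-1 vertices and orientable saddle cobordisms at degree-3 vertices, with the inductive invariant on the boundary circle system that you state. One small correction: the canonical form of Definition~\ref{minimal-canonical} plays no role here (it is introduced only for Proposition~\ref{connected}); the gluing consistency is guaranteed solely by the degree and label-ordering hypotheses on $(\Gamma,\ell)$.
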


\begin{proof}
Under our assumption on the degree of vertices of $\Gamma$, $\M$
and $f$ can be constructed as in the proof of Thm.~2.1 in
\cite{MaSa11}.
\end{proof}

We now deal with the uniqueness problem, up to isomorphism of
labeled Reeb graphs. First of all we consider the following two
equivalence  relations on $\F^0(\M)$.

\begin{defi}\label{equivalence}
Let $\mathcal D(\M)$ be the set of self-diffeomorphisms of $\M$.
Two functions $f,g\in\F^0(\M)$ are called \emph{right-equivalent}
(briefly, \emph{$R$-equivalent}) if there exists $\xi\in\mathcal
D(\M)$ such that $f=g\circ\xi$. Moreover, $f,g$ are called
\emph{right-left equivalent} (briefly, \emph{RL-equivalent}) if
there exist $\xi\in\mathcal D(\M)$ and an orientation preserving
self-diffeomorphism $\eta$ of $\R$ such that $ f=\eta\circ
g\circ\xi$.
\end{defi}

These equivalence relations on functions are mirrored by Reeb graphs
isomorphisms. 

\begin{prop}[Uniqueness Theorem]\label{unique}
If $f,g\in\F^0(\M)$, then
\begin{enumerate}
\item $f$ and $g$ are RL-equivalent if and only if their Reeb
graphs $\Gamma_{f}$ and $\Gamma_{g}$ are isomorphic by an
isomorphism $\Phi:V(\Gamma_{f})\to V(\Gamma_{g})$ that preserves
the vertex order, i.e., for every $v,w\in V(\Gamma_f)$,
$\lf(v)<\lf(w)$ if and only if $\Lg(\Phi(v))< \Lg(\Phi(w))$; \item $f$ and
$g$ are $R$-equivalent
 if and only if their labeled
Reeb graphs $(\Gamma_{f},\lf)$ and  $(\Gamma_{g},\Lg)$ are
isomorphic.
 \end{enumerate}
\end{prop}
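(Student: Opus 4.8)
The plan is to prove statement (2) first and then to deduce statement (1) from it; in both statements the forward implications are routine, so the substance lies in the two converses, and of these the converse of (2) carries essentially all the weight. For the forward implication of (2), suppose $f=g\circ\xi$ with $\xi\in\mathcal D(\M)$. Pre-composition with a diffeomorphism carries critical points to critical points preserving their index (the Hessians being related by congruence via $D\xi$) and carries each connected component of a level set $f^{-1}(c)$ to a connected component of $g^{-1}(c)$, since $f^{-1}(c)=\xi^{-1}(g^{-1}(c))$. Hence $\xi$ descends to the quotients and yields a bijection $\Phi:V(\Gamma_f)\to V(\Gamma_g)$ preserving adjacency; as $\lf(v)=f(v)=g(\xi(v))=\Lg(\Phi(v))$, it preserves labels, so $(\Gamma_f,\lf)\cong(\Gamma_g,\Lg)$. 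The forward implication of (1) is identical except that $f=\eta\circ g\circ\xi$: post-composition by $\eta$ alters neither critical points nor level-set components, and since $\eta$ is increasing it preserves the order of the critical values, so the induced $\Phi$ preserves the vertex order.

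For the converse of (2), the idea is to realize the isomorphism $\Phi$ by a diffeomorphism built by sweeping through the critical levels from bottom to top. Since $\Phi$ preserves labels, $f$ and $g$ share the critical values $c_1<\cdots<c_k$; fix regular values $s_0<c_1<s_1<\cdots<c_k<s_k$ and set $\M_i=f^{-1}((-\infty,s_i])$, $N_i=g^{-1}((-\infty,s_i])$. I would construct, by induction on $i$, diffeomorphisms $\xi_i:\M_i\to N_i$ with $g\circ\xi_i=f|_{\M_i}$ inducing $\Phi$ on the subgraph already swept. The inductive step crosses the single critical value $c_i$: on the regular collar $f^{-1}([s_{i-1},c_i-\eps])$ the gradient flow gives a product structure, a disjoint union of cylinders indexed by the edges of $\Gamma_f$ crossing that level, so $\xi_{i-1}$ extends cylinder-by-cylinder; across $c_i$ one invokes the Morse normal form, attaching a disk at a minimum or maximum and a pair-of-pants at a degree-three saddle, the combinatorial type of the attachment being exactly the local picture of $\Phi$ at the corresponding vertex. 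Composing these models with $\xi_{i-1}$ and smoothing near the regular levels $s_i$ yields $\xi_i$, and $\xi=\xi_k$ is then the required self-diffeomorphism with $f=g\circ\xi$.

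The converse of (1) reduces to (2). Given an order-preserving isomorphism $\Phi$, the critical values $\Lg(\Phi(v))$ and $\lf(v)$ occur in the same order, so I can pick an orientation-preserving diffeomorphism $\eta$ of $\R$ sending each critical value of $g$ to the corresponding critical value of $f$. Then $\eta\circ g$ has the same critical points and level-set components as $g$ but carries the labels of $f$, so $\Phi$ becomes a label-preserving isomorphism between $(\Gamma_f,\lf)$ and $(\Gamma_{\eta\circ g},\ell_{\eta\circ g})$. By (2) there is $\xi\in\mathcal D(\M)$ with $f=(\eta\circ g)\circ\xi=\eta\circ g\circ\xi$, which is exactly $RL$-equivalence.

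The main obstacle is the inductive construction in the converse of (2): the diffeomorphisms coming from the Morse normal forms and from the cylinder collars must be assembled into a single smooth map, and one must check that the pair-of-pants attachment at each saddle can be chosen to respect both the branching pattern prescribed by $\Phi$ and the matching already built below, so that the gluing data are pinned down by the labeled graph and no $R$-inequivalent realization can arise. This is precisely where the orientability of $\M$ and the restriction to simple Morse functions are used, to keep the local models standard and the gluings unambiguous; alternatively, one may phrase this step as a uniqueness companion to the construction underlying the Realization Theorem (Proposition~\ref{realizth}).
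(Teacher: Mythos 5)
Your overall architecture is sound but runs in the opposite direction to the paper's: the paper takes statement (1) as known, citing \cite{Ku98,Sh03}, obtains the forward half of (2) from (1) together with the observation that $R$-equivalent functions share critical values, and closes the converse of (2) by applying (1) to get $f=\eta\circ g\circ\xi$ and then invoking \cite[Lemma 1]{Ku09} to dispose of $\eta$ once the critical values of $f$ and $h=g\circ\xi$ are seen to agree. You instead prove (2) directly by a geometric construction and derive (1) from it; your reduction of (1) to (2) via an orientation-preserving $\eta$ matching critical values is correct, as are both forward implications.

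The gap is exactly where you say it is, and flagging it does not close it. The converse of (2) --- that a label-preserving graph isomorphism forces $R$-equivalence --- is the entire nontrivial content of the theorem, and your inductive sweep leaves unproved the two claims that carry it: (i) that the extension of $\xi_{i-1}$ across a saddle level is pinned down by the local picture of $\Phi$ up to post-composition with a $g$-level-preserving self-diffeomorphism (this requires showing that the boundary identifications of the attached pair of pants are unique up to isotopy and twists along level circles, which is where orientability enters, and it is precisely the place where ``no $R$-inequivalent realization can arise'' has to be argued rather than asserted); and (ii) that the cylinder-by-cylinder and pants-by-pants gluings can be made \emph{smooth}, since Definition \ref{equivalence} demands a diffeomorphism. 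On point (ii), note that the paper's own Lemma \ref{c(T)=R} performs essentially your construction and deliberately settles for a homeomorphism, remarking that not requiring differentiability ``facilitates the gluing process''; it recovers $R$-equivalence from such a homeomorphism only via the separate result of \cite{CaDiLa13} that $d_N$ is a metric. As written, your proposal therefore replaces the paper's citations with a sketch whose hard steps are named rather than executed; to make it a proof you would either have to carry out (i) and (ii) in detail or, as the paper does, delegate them to \cite{Ku98,Sh03,Ku09}.
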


\begin{proof} For the proof of statement (1) see \cite{Ku98,Sh03}. As for statement (2), we note that two $R$-equivalent functions
are, in particular, $RL$-equivalent. Hence, by statement (1),
their Reeb graphs are isomorphic by an isomorphism that preserves
the vertex order. Since $f$ and $g$ necessarily have the same
critical values, this isomorphism also preserves labels.
Vice-versa, if $(\Gamma_{f},\lf)$ and  $(\Gamma_{g},\Lg)$ are
isomorphic, then $f$ and $g$ have the same critical values.
Moreover, by statement (1), there exist $\xi\in\mathcal D(\M)$ and
an orientation preserving self-diffeomorphism $\eta$ of $\R$ such
that $ f=\eta\circ g\circ\xi$. Let us set $h=g\circ\xi$. The
function $h$ belongs to $\F^0(\M)$ and has the same critical
points with the same indexes as $f$, and the same critical
values as $g$ and hence as $f$. Thus, we can apply \cite[Lemma 1]{Ku09} to $f$ and
$h$ and deduce the existence of a self-diffeomorphism $\xi'$ of
$\M$ such that $f=h\circ \xi'$. Thus $f=g\circ \xi\circ \xi'$,
yielding that $f$ and $g$ are $R$-equivalent. A direct proof of the $R$-equivalence of functions with isomorphic labeled Reeb graphs is also obtainable from Lemma~\ref{c(T)=R}.
\end{proof}

\section{edit deformations between labeled Reeb
graphs}\label{SectionDeformations}

In this section we list the edit deformations admissible to
transform labeled Reeb graphs into one another when different simple Morse
functions on the same surface are considered. We introduce at
first elementary deformations, then, by virtue of the Realization
Theorem (Proposition \ref{realizth}), the deformations obtained by
their composition.

Elementary deformations allow us to  insert or delete a vertex of degree 1 together with an adjacent vertex of degree 3   (deformations of \emph{birth} type (B)   and \emph{death} type (D)),  maintain  the same vertices and edges while changing the vertex
labels (deformations of \emph{relabeling} type (R)), or  change some vertices adjacencies and labels (deformations of type (K$_1$), (K$_2$), (K$_3$) introduced by  Kudryavtseva in   \cite{Ku99}). A sketch of these elementary deformations can be
found in Table \ref{deformations}. The formal definition is as follows.

\begin{defi}\label{elementaryDef}
With the convention of denoting the open interval with endpoints
$a,b\in\R$ by $]a,b[$, the elementary deformations of type  \rm{(B), (D), (R),
(K$_i$)}, $i=1,2,3$, are defined as follows.

\begin{itemize}[leftmargin=0.7cm]
\item [\mbox{\rm{(B)}}] $T$ is an
\emph{elementary deformation} of $(\Gamma_{f},\lf)$ \emph{of type}
\rm{(B)} if, for a fixed edge $e(v_1, v_2)\in
E(\Gamma_{f})$, with $\lf(v_{1})<\lf(v_{2})$,  $T(\Gamma_{f},\lf)$ is a labeled graph $(\Gamma,\ell)$
such that
\begin{itemize}
\item $V(\Gamma)=V(\Gamma_{f})\cup \{u_{1},u_{2}\}$; \item
$E(\Gamma)= \left(E(\Gamma_{f})-\{e(v_1, v_2)\}\right)\cup\{e(v_1,
u_1), e(u_1, u_2), e(u_1, v_2)\}$; \item ${\ell}_{|V(\Gamma_{f})}= \lf$ and 
$\lf(v_1)<\ell(u_i)<\ell(u_j)<\lf(v_2)$, with
$\ell^{-1}(]\ell(u_i),\ell(u_j)[)=\emptyset$, $i,j\in\{1,2\},i\neq
j$.
\end{itemize}

\item [\mbox{\rm{(D)}}] $T$ is an \emph{elementary deformation} of $(\Gamma_{f},\lf)$
\emph{of type} \rm{(D)} if, for fixed edges $e(v_1, u_1)$, $e(u_1, u_2)$,
$e(u_1, v_2)\in E(\Gamma_{f})$, $u_2$ being of degree 1, such that
$\lf(v_1)<\lf(u_i)<\lf(u_j)<\lf(v_2)$, with
$\lf^{-1}(]\lf(u_i),\lf(u_j)[)=\emptyset$, $i,j\in\{1,2\},i\neq
j$,  $T(\Gamma_{f},\lf)$ is a labeled graph
$(\Gamma,\ell)$ such that
\begin{itemize}
\item $V(\Gamma)=V(\Gamma_{f})-\{u_{1},u_{2}\}$; \item $E(\Gamma)=
\left(E(\Gamma_{f})-\{e(v_1, u_1), e(u_1, u_2), e(u_1,
v_2)\}\right)\cup\{e(v_1, v_2)\}$; \item
$\ell={\lf}_{|V(\Gamma_{f})-\{u_1,u_2\}}$.
\end{itemize}

\item[\mbox{\rm{(R)}}] $T$ is an \emph{elementary deformation} of
$(\Gamma_{f},\lf)$ \emph{of type} \rm{(R)} if $T(\Gamma_{f},\lf)$
is a labeled graph $(\Gamma,\ell)$ such that
\begin{itemize} \item $\Gamma=\Gamma_f$; \item $\ell:V(\Gamma)\to\R$ induces the same vertex-order as $\lf$ except for
at most two non-adjacent vertices, say $u_1,u_2$, with $\lf(u_1)<\lf(u_2)$ and
$\lf^{-1}(]\lf(u_1),\lf(u_2)[)=\emptyset$, for which
$\ell(u_1)>\ell(u_2)$ and
$\ell^{-1}(]\ell(u_2),\ell(u_1)[)=\emptyset$.
\end{itemize}

\item[\mbox{\rm{(K$_1$)}}] $T$ is an
\emph{elementary deformation} of $(\Gamma_{f},\lf)$ \emph{of type}
\rm{(K$_1$)} if, for fixed edges $e(v_1, u_1)$, $e(u_1,
u_2)$, $e(u_1, v_4)$, $e(u_2, v_2)$, $e(u_2, v_3) \in E(\Gamma_{f})$,
with two among $v_2, v_3, v_4$ possibly coincident,
$\lf(v_1)<\lf(u_1)<\lf(u_2)<\lf(v_2),\lf(v_3)$, $\lf(v_4)$, and
$\lf^{-1}(]\lf(u_1),\lf(u_2)[)=\emptyset$ (resp. $\lf(v_2)$,
$\lf(v_3)$, $\lf(v_4)<\lf(u_2)<\lf(u_1)$ $<\lf(v_1)$, and
$\lf^{-1}(]\lf(u_2),\lf(u_1)[)=\emptyset$),  $T(\Gamma_{f},\lf)$ is a labeled graph
$(\Gamma,\ell)$ such that:
\begin{itemize} \item
$V(\Gamma)=V(\Gamma_{f})$; \item $E(\Gamma)=
\left(E(\Gamma_{f})-\{e(v_1, u_1), e(u_2,
v_2)\}\right)\cup\{e(v_1,u_2),e(u_1,v_2)\}$; \item
${\ell}_{|V(\Gamma)-\{u_1,u_2\}}=\lf$ and
$\lf(v_1)<\ell(u_2)<\ell(u_1)<\lf(v_2)$, $\lf(v_3)$, $\lf(v_4)$,
with $\ell^{-1}(]\ell(u_2),\ell(u_1)[)$ $=\emptyset$ (resp.
$\lf(v_2)$, $\lf(v_3)$, $\lf(v_4)<\ell(u_1)<\ell(u_2)<\lf(v_1)$,
with $\ell^{-1}(]\ell(u_1),\ell(u_2)[)=\emptyset$).
\end{itemize}

\item[\mbox{\rm{(K$_2$)}}] $T$ is an
\emph{elementary deformation} of $(\Gamma_{f},\lf)$ \emph{of type}
\rm{(K$_2$)} if, for fixed edges $e(v_1, u_1)$, $e(v_2,
u_1)$, $e(u_1, u_2)$,  $e(u_2, v_3)$, $e(u_2, v_4) \in E(\Gamma_{f})$,
with $u_1,u_2$ of degree 3, $v_2,v_3$ possibly coincident with
$v_1,v_4$, respectively, and $\lf(v_1),\lf(v_2)$
$<\lf(u_1)<\lf(u_2)<\lf(v_3),\lf(v_4)$, with
$\lf^{-1}(]\lf(u_1),\lf(u_2)[)=\emptyset$,  $T(\Gamma_{f},\lf)$ is a labeled graph
$(\Gamma,\ell)$ such that:
\begin{itemize} \item
$V(\Gamma)=V(\Gamma_{f})$; \item $E(\Gamma)=
\left(E(\Gamma_{f})-\{e(v_1, u_1), e(u_2,
v_3)\}\right)\cup\{e(v_1,u_2), e(u_1,v_3)\}$; \item
${\ell}_{|V(\Gamma)-\{u_1,u_2\}}=\lf$ and $\lf(v_1)$,
$\lf(v_2)<\ell(u_2)<\ell(u_1)<\lf(v_3)$, $\lf(v_4)$, with
$\ell^{-1}(]\ell(u_2),\ell(u_1)[)=\emptyset$.
\end{itemize}

\item[\mbox{\rm{(K$_3$)}}] $T$ is an
\emph{elementary deformation} of $(\Gamma_{f},\lf)$ \emph{of type}
\rm{(K$_3$)} if, for fixed edges $e(v_1, u_2)$, $e(u_1,
u_2)$, $e(v_2, u_1)$, $e(u_1, v_3)$, $e(u_2, v_4) \in E(\Gamma_{f})$,
with $u_1,u_2$ of degree 3, $v_2,v_3$ possibly coincident with
$v_1,v_4$, respectively, and
$\lf(v_1),\lf(v_2)<\lf(u_2)<\lf(u_1)<\lf(v_3),\lf(v_4)$, with
$\lf^{-1}(]\lf(u_2),\lf(u_1)[)=\emptyset$,  $T(\Gamma_{f},\lf)$ is a labeled graph
$(\Gamma,\ell)$ such that:
\begin{itemize} \item
$V(\Gamma)=V(\Gamma_{f})$; \item $E(\Gamma)=
\left(E(\Gamma_{f})-\{e(v_1, u_2), e(u_1,
v_3)\}\right)\cup\{e(v_1,u_1), e(u_2,v_3)\}$; \item
${\ell}_{|V(\Gamma)-\{u_1,u_2\}}=\lf$ and $\lf(v_1)$,
$\lf(v_2)<\ell(u_1)<\ell(u_2)<\lf(v_3)$, $\lf(v_4)$, with
$\ell^{-1}(]\ell(u_1),\ell(u_2)[)=\emptyset$.
\end{itemize}
\end{itemize}
\end{defi}

\begin{table}[htbp]
\begin{center}
\psfrag{a1}{$\lf(v_1)$}\psfrag{b1}{$\ell(u_1)$}\psfrag{b2}{$\ell(u_2)$}\psfrag{a2}{$\lf(v_2)$}
\psfrag{c1}{$\lf(u_1)$}\psfrag{c2}{$\lf(u_2)$}
\psfrag{a3}{$\lf(v_3)$}\psfrag{b3}{$\ell(v_3)$}\psfrag{b4}{$\ell(v_4)$}\psfrag{a4}{$\lf(v_4)$}
\psfrag{d1}{$\ell(v_1)$}\psfrag{d2}{$\ell(v_2)$}
\psfrag{a5}{$\lf(v_5)$}\psfrag{b5}{$\ell(v_5)$}\psfrag{b6}{$\ell(v_6)$}\psfrag{a6}{$\lf(v_6)$}
\psfrag{a7}{$\lf(v_7)$}\psfrag{b7}{$\ell(v_7)$}\psfrag{b8}{$\ell(v_8)$}\psfrag{a8}{$\lf(v_8)$}
\psfrag{B}{(B)}\psfrag{D}{(D)}\psfrag{R}{(R)}\psfrag{H}{(R$_2$)}\psfrag{K1}{(K$_1$)}\psfrag{K4}{(K$_3$)}
\psfrag{K3}{(K$_2$)}\psfrag{K2}{(K$_2$)}\psfrag{K'2}{(K$_3$)}
\begin{tabular}{c}
\includegraphics[width=\linewidth]{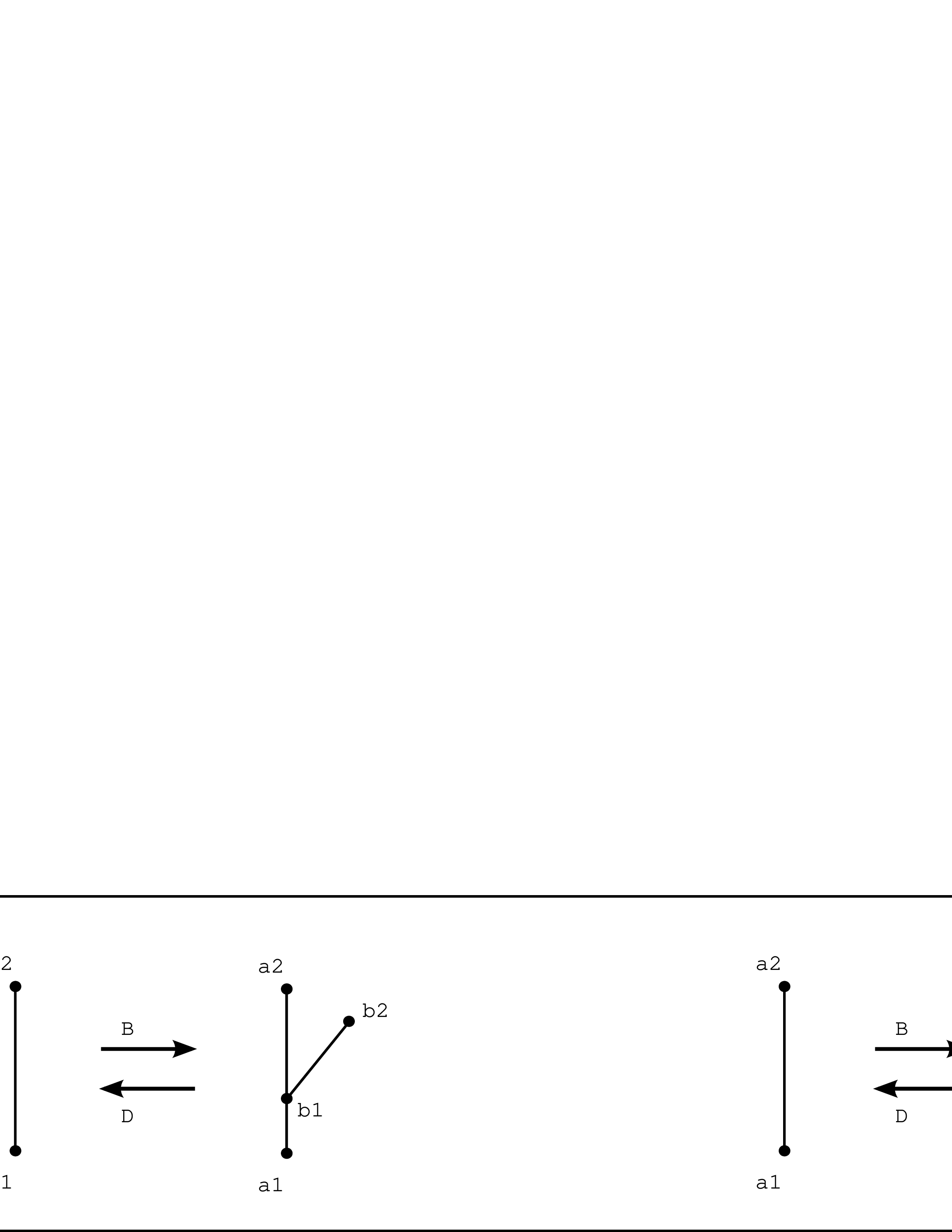}\\
\includegraphics[width=\linewidth]{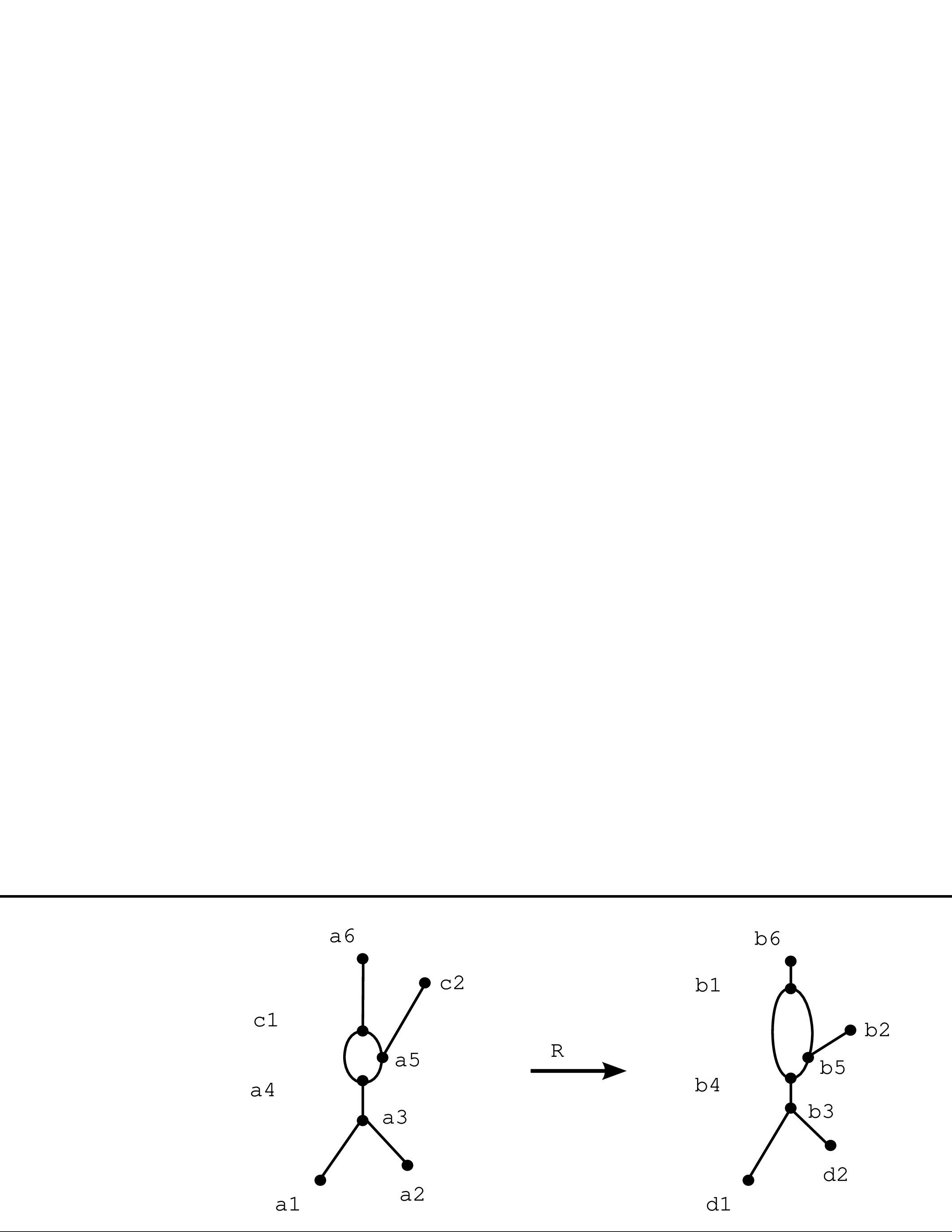}\\
\includegraphics[width=\linewidth]{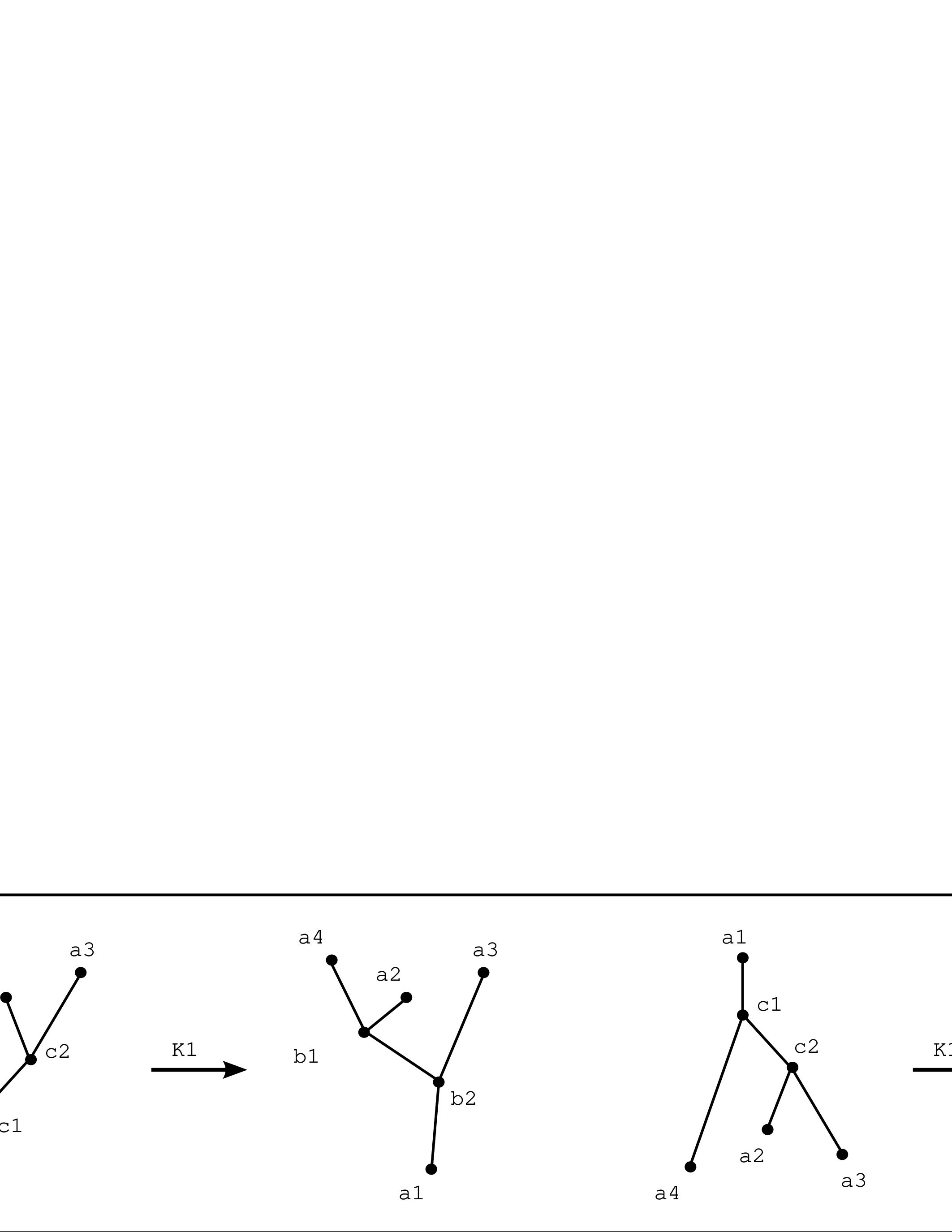}\\
\includegraphics[width=\linewidth]{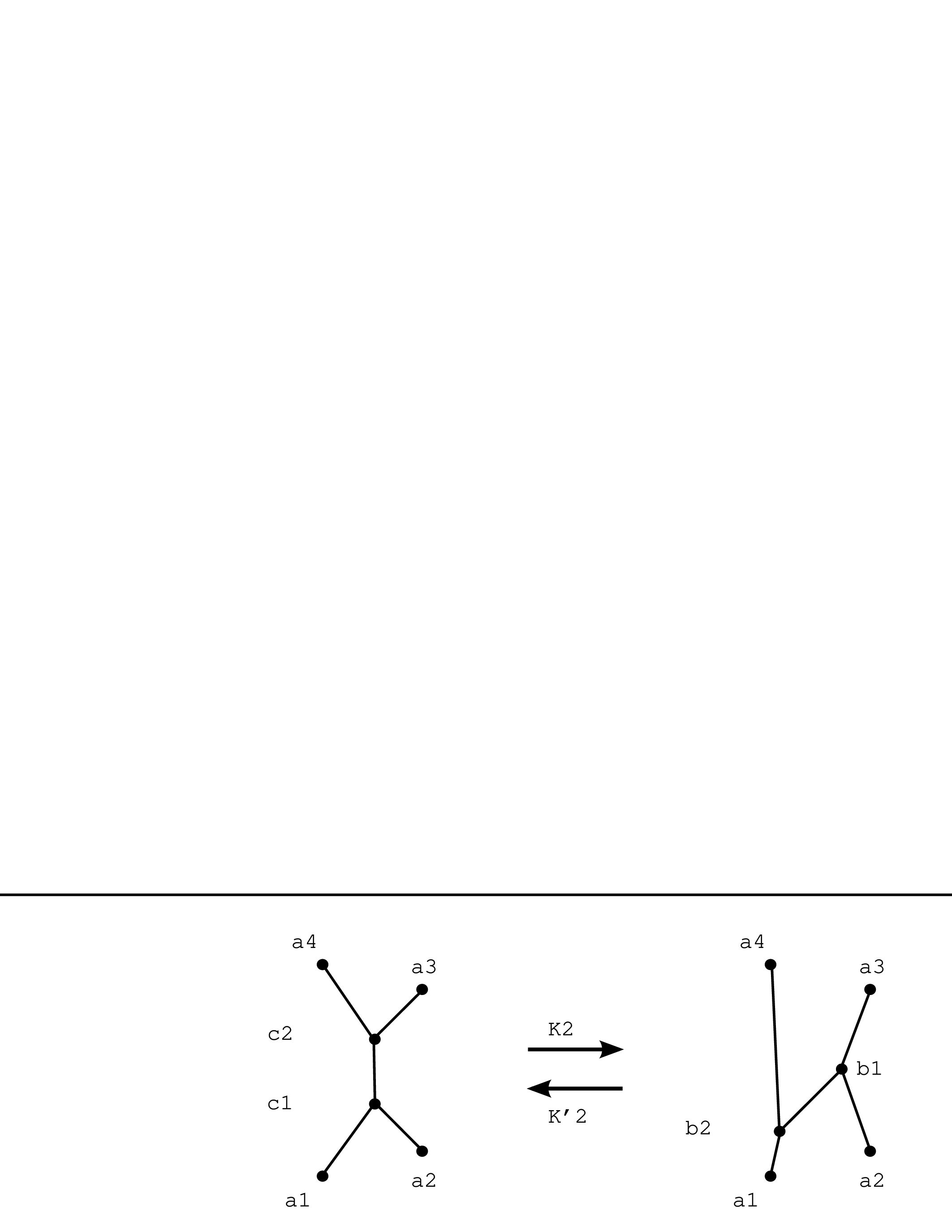}\\
\end{tabular}
\caption{\footnotesize{A schematization of the elementary deformations of a labeled Reeb
graph provided by Definition \ref{elementaryDef}.}}\label{deformations}
\end{center}
\end{table}

We underline that the definition of the deformations of type (B),
(D) and (R) is essentially different from the definition of
analogous deformations in the case of Reeb graphs of curves as
given in \cite{DiLa12}, even if the associated cost will be the
same (see Section \ref{edit}). This is because the degree of the
involved vertices is 2 for Reeb graphs of closed curves, whereas
it is 1 or 3 for Reeb graphs of surfaces.

\begin{prop}\label{defGl}
Let $f\in\F^0(\M)$ and $(\Gamma,\ell)=T(\Gamma_f,\lf)$ for some elementary deformation $T$. Then
there exists $g\in\F^0(\M)$ such that
$(\Gamma_g,\Lg)\cong(\Gamma,\ell)$.
\end{prop}

\begin{proof}
The claim follows from Proposition \ref{realizth}.
\end{proof}

As a consequence of Proposition \ref{defGl}, we can apply
elementary deformations iteratively. This fact is used in the next
Definition \ref{def}. Given an elementary deformation $T$ of $(\Gamma_{f},\lf)$ and an
elementary deformation $S$ of $T(\Gamma_{f},\lf)$, the
juxtaposition $ST$ means applying first $T$ and then  $S$.

\begin{defi}\label{def}
We shall call {\em deformation} of $(\Gamma_{f},\lf)$ any finite
ordered sequence $T=(T_1,T_2,\ldots ,T_r)$ of elementary
deformations such that $T_1$ is an elementary deformation of
$(\Gamma_{f},\lf)$, $T_2$ is an elementary deformation of
$T_1(\Gamma_{f},\lf)$, ..., $T_r$ is an elementary deformation of
$T_{r-1}T_{r-2}\cdots T_1(\Gamma_{f},\lf)$. We shall denote by
$T(\Gamma_{f},\lf)$ the result of the deformation
$T_rT_{r-1}\cdots T_1$ applied to $(\Gamma_{f},\lf)$.
\end{defi}

In the rest of the paper we write
${\mathcal T}((\Gamma_{f},\lf), (\Gamma_{g},\Lg))$ to denote the set of deformations turning $(\Gamma_{f},\lf)$ into $(\Gamma_{g},\Lg)$ up to isomorphism:
$${\mathcal T}((\Gamma_{f},\lf), (\Gamma_{g},\Lg))=\{T=(T_1,\ldots,T_n), n\ge 1: T(\Gamma_{f},\lf)\cong(\Gamma_{g},\Lg)\}.$$
We now introduce the concept of inverse deformation.

\begin{defi}\label{definverse}
Let $T\in{\mathcal T}((\Gamma_{f},\lf), (\Gamma_{g},\Lg))$ and  let $\Phi$ be the labeled graph isomorphism between $T(\Gamma_{f},\lf)$ and $(\Gamma_{g},\Lg)$.  We denote  by $T^{-1}$, and call it the \emph{inverse} of $T$ in ${\mathcal T}((\Gamma_{g},\Lg), (\Gamma_{f},\lf))$, the deformation  that acts on the vertices, edges, and labels of  $(\Gamma_{g},\Lg)$ as follows:  identifying    $T(\Gamma_{f},\lf)$ with $(\Gamma_{g},\Lg)$ via $\Phi$, 

\begin{itemize}
\item if $T$ is an elementary deformation of type (D)
deleting two vertices, then $T^{-1}$ is of type (B) inserting the
same vertices, with the same labels, and viceversa;

\item if $T$ is  an elementary deformation of type (R) relabeling vertices of
$V(\Gamma_{f})$, then $T^{-1}$ is again of type (R) relabeling
these vertices in the inverse way;

\item if $T$ is  an elementary deformation of type (K$_1$) relabeling two
vertices, then $T^{-1}$ is again of type (K$_1$) relabeling the
same vertices in the inverse way;

\item if $T$ is  an elementary deformation of type (K$_2$) relabeling two
vertices, then $T^{-1}$ is of type (K$_3$) relabeling the same
vertices in the inverse way, and viceversa;

\item if $T=(T_1, \ldots, T_r)$, then $T^{-1}=(T^{-1}_r, \ldots,
T^{-1}_1)$.
\end{itemize}
\end{defi}

%

From the fact that $T^{-1}T(\Gamma_{f},\lf)\cong (\Gamma_{f},\lf)$ it follows that the set ${\mathcal
T}((\Gamma_{f},\lf), (\Gamma_{g},\Lg))$, when non-empty,  always contains infinitely many deformations.
We end the section proving that for $f,g\in \F^0(\M)$ it  is always non-empty. 
We first need two lemmas which are widely inspired by \cite[Lemma 1 and Thm.
1]{Ku99}, respectively.

\begin{lem}\label{pathLemma}
Let $(\Gamma_{f},\lf)$ be a labeled Reeb graph. The following
statements hold:
\begin{itemize}
\item[$(i)$] For any $u,v\in V(\Gamma_f)$ corresponding to two
minima or two maxima of $f$, there exists a deformation $T$ such
that $u$ and $v$ are adjacent to the same vertex $w$ in
$T(\Gamma_{f},\lf)$. \item[$(ii)$] For any $m$-cycle $C$ in
$\Gamma_{f}$, $m\ge 2$, there exists a deformation $T$ such that
$C$ is a 2-cycle in $T(\Gamma_{f},\lf)$.
\end{itemize}
\end{lem}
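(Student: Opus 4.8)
The plan is to prove both statements as in Kudryavtseva's canonical-form argument \cite{Ku99}, reading the sliding deformations (K$_1$), (K$_2$), (K$_3$) as the combinatorial counterpart of transposing the order of two index-$1$ critical values that are joined by an edge, and the relabeling (R) as the transposition of two critical values that are \emph{not} joined by an edge. The overarching idea is that, just as in a sorting argument, interleaving (R)-transpositions and (K)-slides lets us reorder critical values while keeping track of how the graph rewires, and that this suffices to force the desired adjacency.

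For $(i)$ I would first reduce the two-maxima case to the two-minima case by applying the statement to $-f$, whose labeled Reeb graph is $(\Gamma_f,\lf)$ turned upside down; so assume $u,v$ are minima. Since minima have degree $1$, each is attached to a single saddle. I would introduce as complexity measure the number $N$ of saddles lying, with respect to the order induced by $\lf$, strictly between $u,v$ and the lowest saddle $w$ at which the two components containing $u$ and $v$ merge, and argue by induction on $N$. If $N=0$ then $u$ and $v$ are already both adjacent to $w$ and there is nothing to prove. Otherwise I would look at the saddle $s$ sitting immediately above $u$ (or $v$) and at the saddle directly above $s$: according to whether each of them is a \emph{split} saddle (one descending, two ascending edges) or a \emph{merge} saddle (two descending, one ascending edge), exactly one of (K$_1$), (K$_2$), (K$_3$) applies and transposes $s$ with the saddle above it, after possibly a preliminary (R)-move that clears any critical value out of the open interval between them so that the ``$\lf^{-1}(]\cdot,\cdot[)=\emptyset$'' hypotheses of those deformations are met. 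Such a slide strictly lowers $N$, and the induction closes.

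For $(ii)$ I would use that every cycle $C$ coming from a simple Morse function has a unique lowest vertex $\alpha$ and a unique highest vertex $\beta$, which are respectively a split and a merge saddle whose two cycle-edges both ascend (resp. descend); thus $C$ is the union of two ascending arcs from $\alpha$ to $\beta$, and $C$ is a $2$-cycle precisely when both arcs have length $1$. Arguing by induction on $m$, if $m>2$ at least one arc carries an intermediate saddle, and I would single out the intermediate saddle $x$ adjacent to $\beta$. The vertex $x$ has degree $3$, with two edges on $C$ and one branch leaving $C$; depending on whether this branch ascends or descends, the pair $(x,\beta)$ matches the configuration of (K$_3$) or of the descending (``resp.'') version of (K$_1$), and the corresponding slide moves $x$ above $\beta$, off $C$, reconnecting the two cycle-neighbours of $x$ directly and thereby shortening $C$ to an $(m-1)$-cycle. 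By Proposition \ref{realizth} each intermediate graph is realizable, so the slides may be iterated until $C$ becomes a $2$-cycle.

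I expect the genuine difficulty in both parts to be the bookkeeping of this case analysis rather than any single conceptual step: one must check that in each local configuration the strict order inequalities and the emptiness conditions demanded by the definitions of (K$_1$), (K$_2$), (K$_3$) and of (R) really hold, if necessary after inserting an auxiliary (R)-move, and that the chosen complexity measure ($N$ in part $(i)$, the cycle length $m$ in part $(ii)$) strictly decreases at each step instead of merely being shuffled around. The delicate sub-cases are the degenerate coincidences explicitly permitted in Definition \ref{elementaryDef} — when two of the $v_i$ coincide, or when the branch leaving an intermediate cycle-vertex itself reattaches to $C$ — where verifying that a legal deformation still exists requires the most care.
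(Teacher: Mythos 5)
Your general strategy --- reorder critical values by interleaving (R)-transpositions with (K)-slides and induct on a complexity measure --- is in the right spirit, but both inductions as written have genuine gaps. The most serious is in part $(ii)$: the claim that a cycle $C$ is the union of two \emph{ascending} arcs joining its lowest vertex $\alpha$ to its highest vertex $\beta$ is false. Consider the $4$-cycle $w_0w_1w_2w_3$ with $\lf(w_0)<\lf(w_2)<\lf(w_1)<\lf(w_3)$, a minimum attached below each of $w_0,w_2$ and a maximum attached above each of $w_1,w_3$; this satisfies the hypotheses of Proposition \ref{realizth}, so it is a labeled Reeb graph on the torus. The arc $w_0w_1w_2w_3$ goes up, down, and up again, and the cycle-neighbour $w_2$ of $\beta=w_3$ has \emph{both} of its cycle-edges ascending (its descending edge is the off-cycle one), so it matches neither of the two local configurations ((K$_3$), or the descending version of (K$_1$)) on which your induction step relies; the other cycle-neighbour of $\beta$ is $\alpha$ itself, so your step has no legal move here. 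Part $(i)$ is also not secured: when the saddle $s$ and the saddle directly above it in the value order are \emph{not} adjacent in the graph, the transposition is an (R)-move that merely permutes labels among the saddles lying between $u,v$ and $w$, so your measure $N$ need not decrease; moreover the K-moves rewire the graph, so the identity of the ``lowest merge saddle'' $w$ anchoring the definition of $N$ can change from step to step.

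The paper avoids both problems by inducting on the \emph{length} of a fixed, not necessarily monotone, simple path $\gamma$ from $u$ to $v$ (respectively, of the cycle viewed as a closed path based at its lowest vertex), and by performing the local surgery at the vertex $w_i$ of $\gamma$ carrying the \emph{largest} label. That choice pins down the local picture: both path-edges at $w_i$ descend, so $w_i$ is a joining saddle with ascending third edge, and after clearing the relevant value-intervals by (R)-moves one of finitely many configurations (Figure \ref{pathReduction}) applies, each resolved by (K$_1$)/(K$_3$)-moves that shorten $\gamma$ by one; the base case $m=2$ is exactly the desired conclusion. To repair your outline you should either adopt this path-length induction or first prove that every cycle can be made monotone by finitely many slides --- which amounts to the same work.
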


\begin{proof}
Let us prove statement $(i)$ assuming that in $(\Gamma_{f},\lf)$
there exist two vertices $u,v$ corresponding to two minima of $f$.
The  case of maxima is analogous.

Let us consider a path $\gamma$ on $\Gamma_{f}$ having $u,v$ as
endpoints, whose length is $m\ge 2$, and the finite sequence of
vertices through which it passes is $(w_0,w_1,\ldots,w_m)$, with
$w_0= u, w_m= v$, and $w_i\neq w_j$ for $i\neq j$. We aim at showing
that there exists a deformation $T$ such that in
$T(\Gamma_{f},\lf)$ the vertices $u,v$ are adjacent to the same
vertex $w$, with $w\in\{w_1,\dots,w_{m-1}\}$, and thus the path
$\gamma$ is transformed by $T$ into a path $\gamma'$ which is of length 2
and passes through the vertices $u,w,v$.

If $m=2$, then it is sufficient to take $T$ as the deformation of
type (R) such that $T(\Gamma_{f},\lf)=(\Gamma_{f},\lf)$ since
$\gamma$ already coincides with $\gamma'$. If $m>2$, let $w_i=\mathrm{argmax} \{\lf(w_j): w_j \mbox{ with } 0\le j\le  m\}$.  It holds that $w_i\neq
u,v$  because $u,v$ are minima of $f$ and is unique because $f$ is simple. It is easy to observe
that, in a neighborhood of $w_i$, possibly after a finite sequence
of deformations of type (R), the graph  gets one of the
configurations shown in Figure~\ref{pathReduction}~$(a)-(e)$
(left).

\begin{figure}[htbp]
\begin{center}
\psfrag{wi}{$w_i$}\psfrag{w+1}{$w_{i+1}$}\psfrag{w-1}{$w_{i-1}$}\psfrag{w-2}{$w_{i-2}$}
\psfrag{w-3}{$w_{i-3}$}\psfrag{w}{$w$}\psfrag{K1}{(K$_1$)}\psfrag{K5}{(K$_3$)}
\begin{tabular}{cc}
\includegraphics[height=2.8cm]{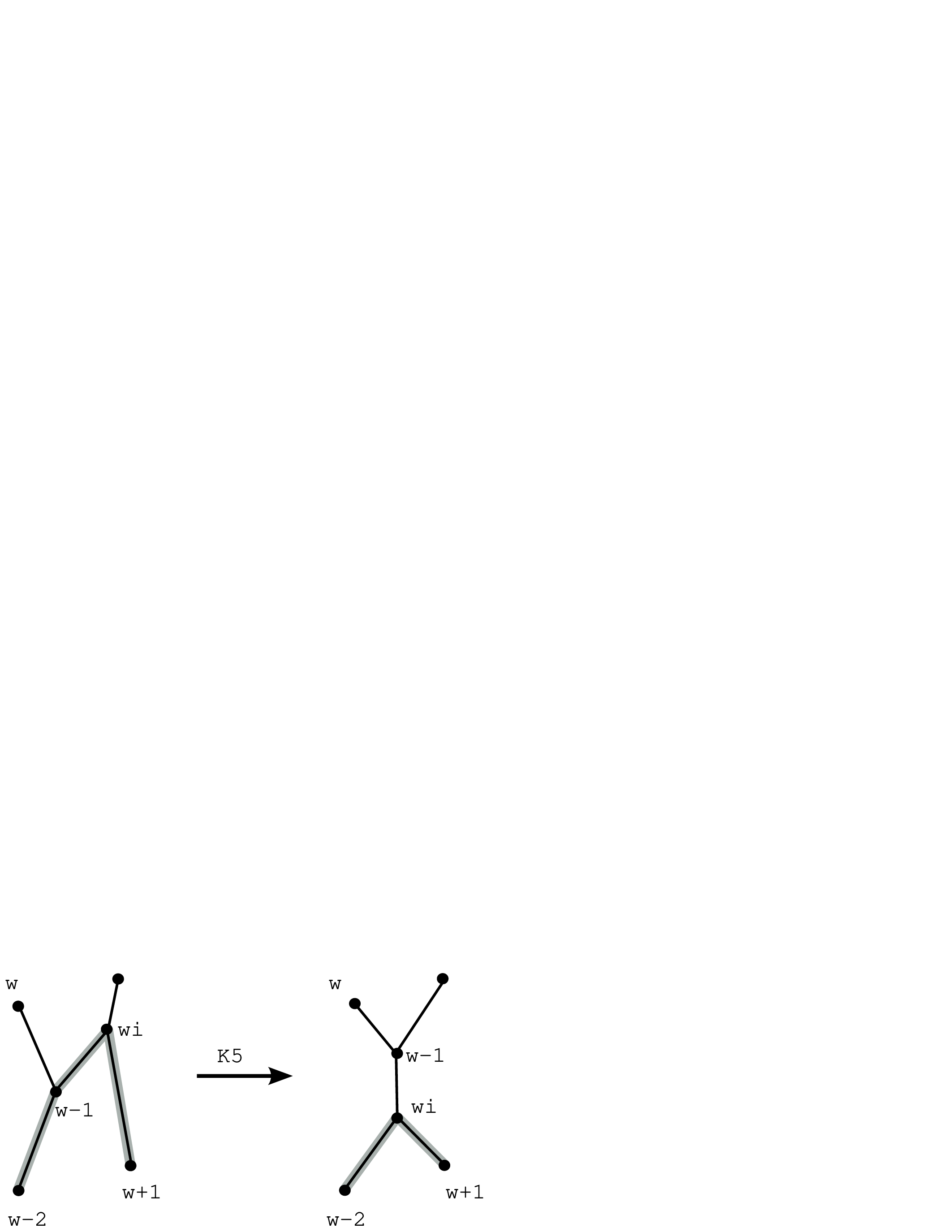}&\includegraphics[height=2.8cm]{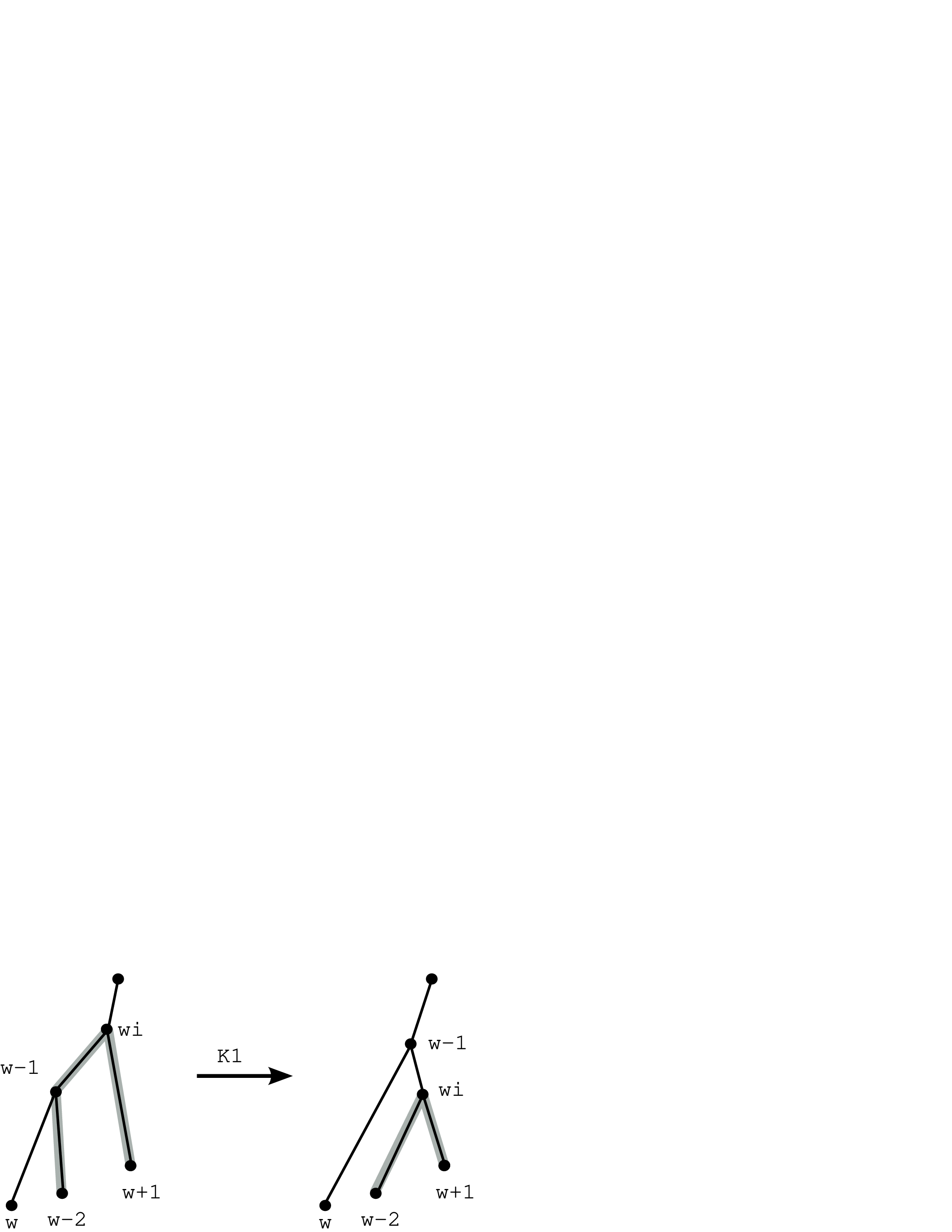}\\
$(a)$&$(b)$\\
\end{tabular}
\begin{tabular}{c}
\includegraphics[height=2.8cm]{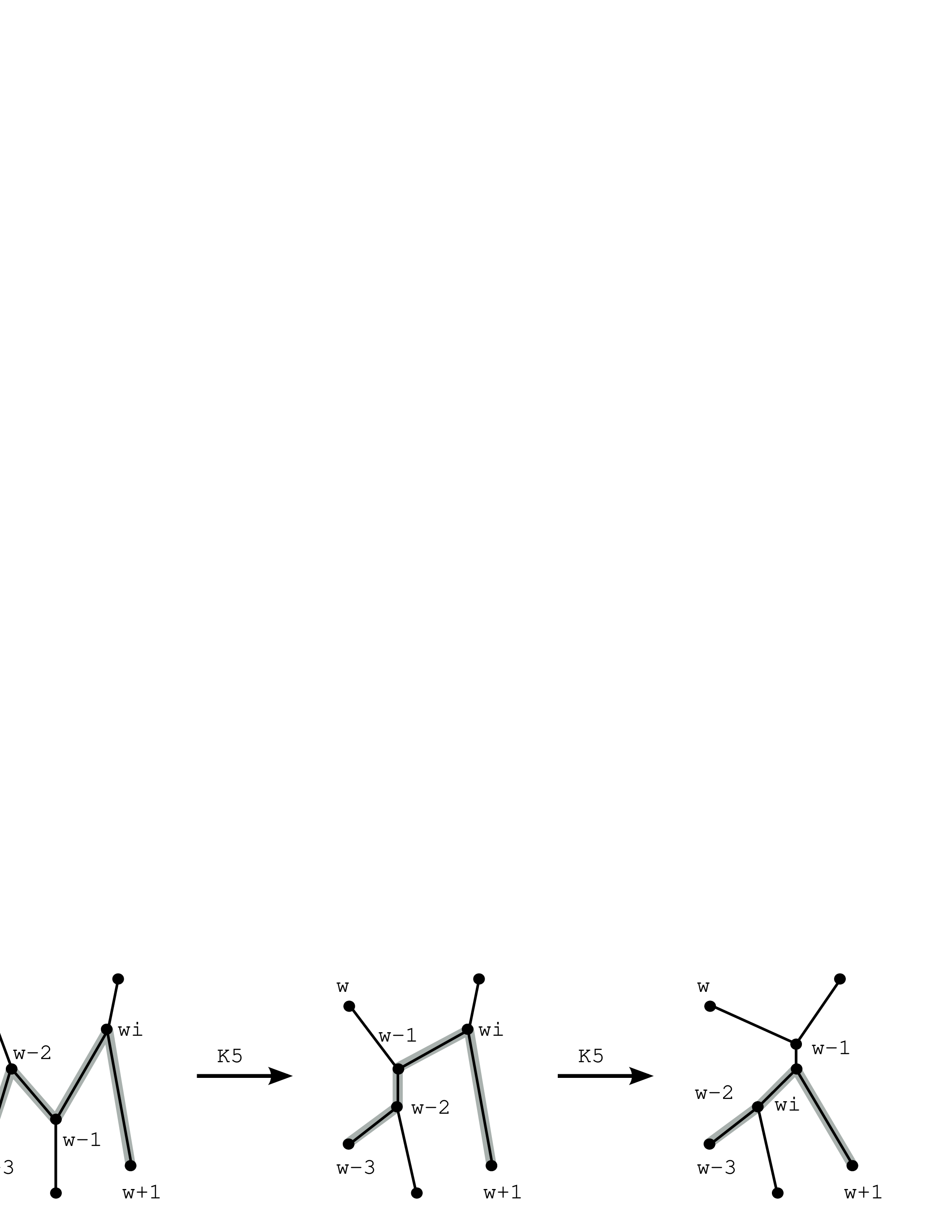}\\
$(c)$\\
\includegraphics[height=2.8cm]{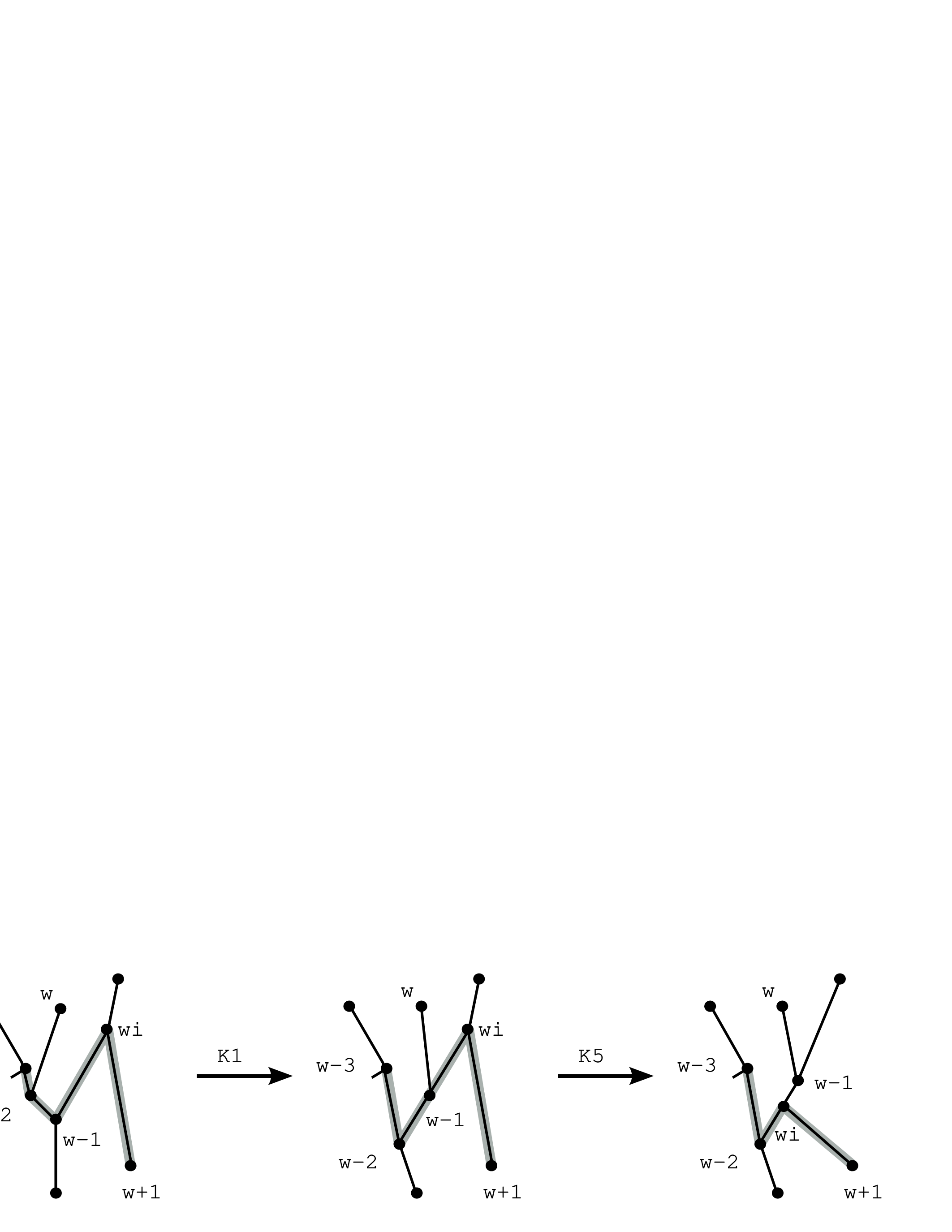}\\
$(d)$\\
\includegraphics[height=2.8cm]{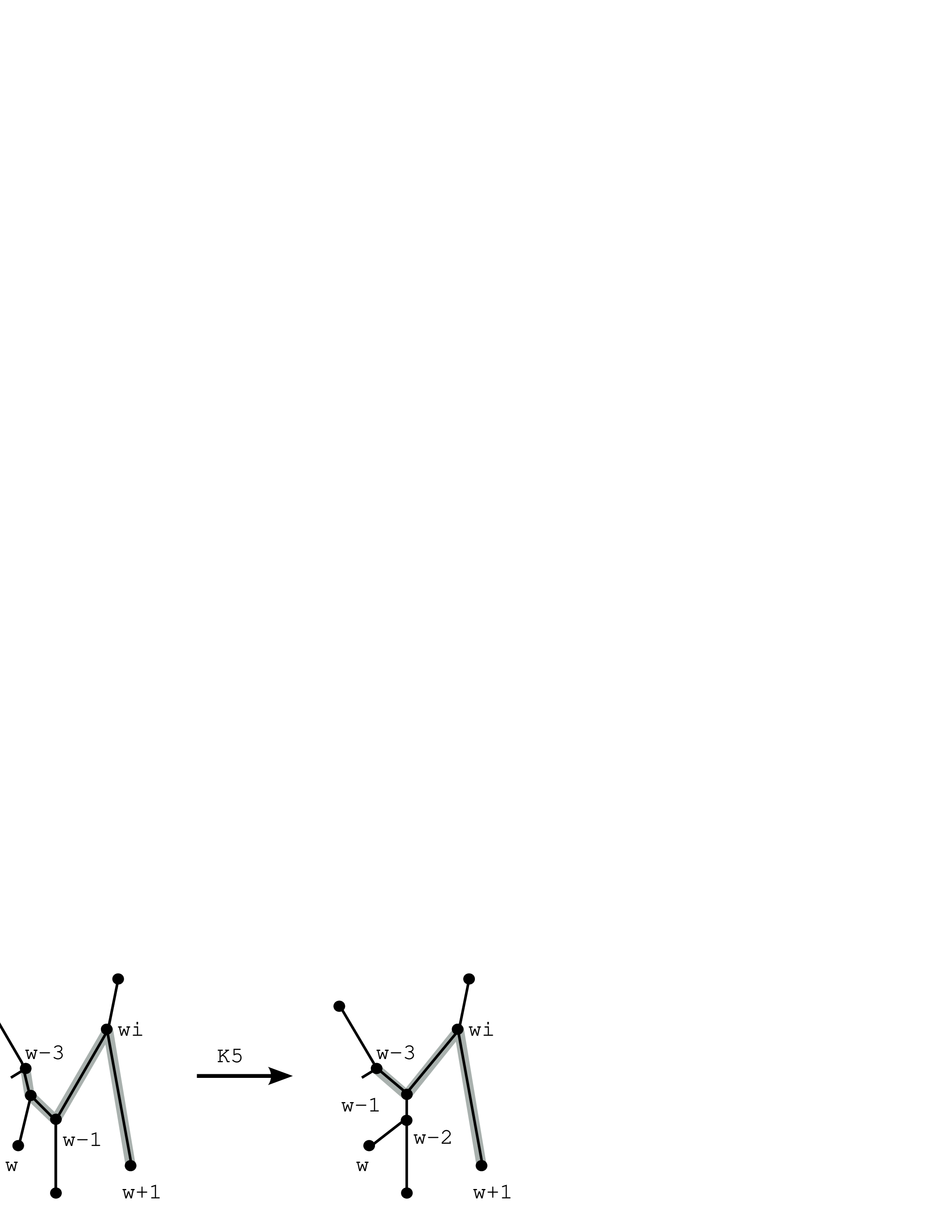}\\
$(e)$\\
\end{tabular}
\caption{\footnotesize{Possible configurations of a simple path on
a labeled Reeb graph in a neighborhood of its maximum point, and
elementary deformations which reduce its length. To facilitate the
reader, $f$ has been represented as the height function, so that
$\lf(w_a)<\lf(w_b)$ if and only if $w_a$ is lower than $w_b$ in
the pictures.}}\label{pathReduction}
\end{center}
\end{figure}

The same figure shows that a finite sequence of deformations of
type (K$_1$), (K$_3$), and, possibly, (R) transforms the simple path $\gamma$, which has length $m$, into a simple path of length $m-1$. Iterating this
procedure, we deduce the desired claim.

The proof of statement $(ii)$ is analogous to that of statement
$(i)$, provided that $\gamma$ is taken to be an $m$-cycle with
$u\equiv v$ of degree 3, and
$u=\mathrm{argmin}\{\ell_f(w_j):w_j \mbox{ with } 0\le j\le m-1\}$.
\end{proof}
%

\begin{lem}\label{canonicalLemma}
Every labeled Reeb graph $(\Gamma_{f},\lf)$ can be transformed
into a canonical one through a finite sequence of elementary
deformations.
\end{lem}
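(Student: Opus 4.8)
The plan is to reach a canonical graph in two phases: first make the graph \emph{minimal} (reduce it to a single minimum and a single maximum), then turn every remaining cycle into a $2$-cycle. Throughout, I will use only the two reduction procedures supplied by Lemma~\ref{pathLemma} together with the elementary deformations of Definition~\ref{elementaryDef}, keeping track of the two quantities that matter: the numbers $p,q$ of minima and maxima, and the number $\mathfrak{g}$ of independent cycles (the cyclomatic number).

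\textbf{Phase 1 (make it minimal).} Suppose $p\ge 2$, so that $f$ has two minima $u,v$. By Lemma~\ref{pathLemma}$(i)$ there is a deformation making $u$ and $v$ adjacent to a common vertex $w$; since $u,v$ are minima of degree $1$ lying below $w$, the degree-$3$ vertex $w$ has exactly these two neighbours below it and a single neighbour $x$ above it. After bubbling the higher of $u,v$ up to lie immediately below $w$ by a finite sequence of relabellings of type (R) — each admissible, because a minimum is adjacent only to $w$ and hence non-adjacent to every vertex it must cross in value — the configuration formed by $w$, this minimum, the other minimum, and $x$ is exactly of the shape required by a deformation of type (D). Applying it deletes one minimum together with $w$ and joins the remaining minimum directly to $x$. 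This leaves $q$ unchanged, lowers $p$ by one, and, since (D) removes two vertices and two edges while preserving connectedness, does not change $\mathfrak{g}$. Iterating drives $p$ to $1$; the symmetric argument applied to the maxima (using the ``maxima'' case of Lemma~\ref{pathLemma}$(i)$) drives $q$ to $1$. The graph is now minimal, with $2\mathfrak{g}$ saddles and $\mathfrak{g}$ independent cycles.

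\textbf{Phase 2 (make all cycles $2$-cycles).} Next I repeatedly invoke Lemma~\ref{pathLemma}$(ii)$, which turns a prescribed $m$-cycle into a $2$-cycle using only deformations of type (K$_1$), (K$_3$), and (R). Crucially, all three of these keep the vertex set fixed, so Phase 2 never alters $p,q$ and the graph stays minimal throughout. What remains is to certify that \emph{all} cycles, and not merely the ones treated individually, end up of length $2$ — equivalently, that the graph becomes the ``necklace'' of $\mathfrak{g}$ bigons chained from the minimum to the maximum, whose only simple cycles are those $2$-cycles. I would argue this by induction on a complexity measure such as the number of saddles not yet belonging to a $2$-cycle (or the total length of a shortest cycle basis), choosing at each step a cycle of length $>2$ placed extremally, reducing it by Lemma~\ref{pathLemma}$(ii)$, and checking that the chosen measure strictly decreases; finitely many steps then produce a canonical graph.

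\textbf{Main obstacle.} The delicate point is precisely the interaction between cycles in Phase 2: a single saddle may belong to two cycles, so the $(\mathrm{K})$-deformations that shorten one cycle can a priori lengthen others, and one must exclude the possibility of reducing one cycle at the cost of lengthening another with no net gain. I expect to resolve this by fixing the order in which cycles are processed — for instance, peeling off a bigon adjacent to the maximum and then inducting on $\mathfrak{g}$, so that each reduction is confined to a region disjoint from the already-canonical part — thereby making the complexity measure genuinely monotone and guaranteeing termination at a canonical graph.
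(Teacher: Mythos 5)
Your Phase~1 is correct and is essentially the paper's first step: use Lemma~\ref{pathLemma}$(i)$ to bring two minima onto a common degree-3 vertex $w$, clear the label interval between the higher minimum and $w$ by admissible (R)-swaps (admissible exactly for the reason you give: a degree-1 vertex is adjacent only to $w$), and then delete the pair by a (D)-move; induction on $p+q$ finishes this phase.

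The gap is in Phase~2, and you have named it yourself without closing it. You write that you \emph{would} argue by induction on a complexity measure and that you \emph{expect} to resolve the interaction between cycles by processing them in a suitable order, but no candidate measure is shown to decrease, and the ones you propose (number of saddles not yet on a 2-cycle, total length of a shortest cycle basis) are not obviously monotone, precisely because the (K$_1$)/(K$_3$)-moves inside Lemma~\ref{pathLemma}$(ii)$ reroute edges at saddles that may be shared by several cycles. The paper closes exactly this gap with an extremal choice plus a protection argument: at each stage it picks a splitting saddle $v$ such that $v$ is the \emph{lowest} vertex of \emph{every} cycle through it, reduces one such cycle to a 2-cycle while keeping $v$ as its lowest vertex, and calls $v'$ the top of the resulting bigon; the extremality of $\lf(v)$ then forces that no other cycle contains both $v$ and $v'$, so $v$, $v'$ and their incident edges are never touched by later applications of Lemma~\ref{pathLemma}$(ii)$. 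This is the invariant that makes the already-created bigons permanent and bounds the process by $\mathfrak{g}$ iterations. Your sketch gestures at it (``a cycle placed extremally'', ``peeling off a bigon''), but the actual verification that each newly created 2-cycle is disjoint from everything subsequent reductions can move is the missing step, and without it termination at a canonical graph is not established.
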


\begin{proof}

Our proof is in two steps: first we show how to transform an
arbitrary Reeb graph into a minimal one; then how to reduce a
minimal Reeb graph to the canonical form.

The first step is by induction on $s=p+q$, with $p$ and $q$
denoting the number of minima and maxima of $f$. If $s=2$, then
$\Gamma_f$ is already minimal  (see Definition
\ref{minimal-canonical}). Let us assume that any Reeb graph  with
$s\ge 2$ vertices of degree 1 can be transformed into a minimal
one through a certain deformation. Let $\Gamma_{f}$ have $s+1$
vertices of degree 1. Thus, at least one between $p$ and $q$ is
greater than one. Let $p>1$ (the case $q>1$ is analogous). By
Lemma~\ref{pathLemma}~$(i)$, if $u,v$ correspond to two minima of
$f$, we can construct a deformation $T$ such that in
$T(\Gamma_{f},\lf)$ these vertices are both adjacent to a certain
vertex $w$ of degree 3. Let $T(\Gamma_{f},\lf)= (\Gamma,\ell)$,
with $\ell(u)<\ell(v)<\ell(w)$. If there exists a vertex $w'\in
\ell^{-1}(]\ell(v),\ell(w)[)$, since $v,w'$ cannot be adjacent, we
can apply a deformation of type (R)
 relabeling only $v$, and get a new labeling $\ell'$ such that
 $\ell'(w')$ is not contained in $]\ell'(v),\ell'(w)[$. Possibly repeating
this procedure finitely many times, we get a new labeling, that
for simplicity we still denote by $\ell$, such that
$\ell^{-1}(]\ell(v),\ell(w)[)=\emptyset$.  Hence, through a
deformation of type (D) deleting $v,w$, the resulting labeled Reeb
graph  has  $s$ vertices of
degree 1. By the inductive hypothesis, it can be transformed into
a minimal Reeb graph.

Now we prove the second step. Let $\Gamma_{f}$ be  a minimal Reeb
Graph, i.e. $p=q=1$. The total number of splitting saddles (i.e.
vertices of degree 3 for which there are two higher adjacent
vertices) of $\Gamma_{f}$ is $\mathfrak{g}$. If $\mathfrak{g}=0$,
then $\Gamma_{f}$ is already canonical. Let us consider the case
$\mathfrak{g}\ge 1$. Let $v\in V(\Gamma_{f})$ be a splitting
saddle such that, for every cycle $C$ containing $v$,
$\lf(v)=\underset{w\in C}\min\{\lf(w)\}$, and let $C$ be one of
these cycles. By Lemma \ref{pathLemma}~$(ii)$, there exists a
deformation $T$ that transforms $C$ into a 2-cycle,  still having
$v$ as the lowest vertex. Let $v'$ be the highest vertex in this
$2$-cycle. We observe that no other cycles of $T(\Gamma_f,\lf)$
contain $v$ and $v'$, otherwise the initial assumption on $\lf(v)$
would be contradicted. Hence $v$, $v'$ and the edges adjacent to
them are not touched when applying again Lemma
\ref{pathLemma}~$(ii)$ to reduce the length of another cycle.
Therefore, iterating the same argument on a different splitting
saddle, after at most $\mathfrak{g}$ iterations  (actually  at
most $\mathfrak{g}-1$ would suffice)  $\Gamma_{f}$ is transformed
into a canonical Reeb graph.
\end{proof}

\begin{prop}\label{connected}
Let $f,g\in\F^0(\M)$. The set ${\mathcal T}((\Gamma_{f},\lf),
(\Gamma_{g},\Lg))$ is non-empty.
\end{prop}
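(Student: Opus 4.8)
The plan is to use Lemma~\ref{canonicalLemma} as the engine and to connect the two resulting canonical graphs through their rigidity. First I would apply Lemma~\ref{canonicalLemma} to obtain a deformation $A$ transforming $(\Gamma_f,\lf)$ into a canonical labeled Reeb graph $C_f$, and a deformation $B$ transforming $(\Gamma_g,\Lg)$ into a canonical labeled Reeb graph $C_g$. Since $f$ and $g$ are defined on the same surface $\M$, both $C_f$ and $C_g$ are canonical graphs of the same genus $\mathfrak{g}$. If I can produce a deformation $E$ carrying $C_f$ into $C_g$, then the concatenation of $A$, $E$, and the inverse deformation $B^{-1}$ (which is well defined and lies in ${\mathcal T}(C_g,(\Gamma_g,\Lg))$ by Definition~\ref{definverse}) yields an element of ${\mathcal T}((\Gamma_f,\lf),(\Gamma_g,\Lg))$, proving the set is non-empty.

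The crux is therefore to connect two canonical graphs of genus $\mathfrak{g}$, and the point of Definition~\ref{minimal-canonical} is that this is almost immediate because such a graph is rigid. Indeed, a canonical graph has exactly one minimum $m$ and one maximum $M$, and its only cycles are $\mathfrak{g}$ disjoint $2$-cycles. Contracting each $2$-cycle to a point produces a tree in which every contracted cycle has degree $2$ while $m,M$ have degree $1$; hence this tree is a path, so the $2$-cycles are arranged in a single chain $m - B_1 - \cdots - B_{\mathfrak{g}} - M$, and the underlying graph is determined up to isomorphism. Moreover the labeling is forced: $m$ is below its unique neighbour, and then, reading along the chain, the requirement that each degree-$3$ saddle have both a lower and a higher adjacent vertex forces the two saddles of each $B_i$ to be ordered and forces the values to increase monotonically from $m$ to $M$. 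Consequently any canonical graph of genus $\mathfrak{g}$ carries one and the same vertex order.

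It follows that $C_f$ and $C_g$ are isomorphic by a vertex-order-preserving graph isomorphism, so they differ only in the actual label values assigned to corresponding vertices. A single deformation $E$ of type (R) that leaves the vertex order unchanged (the clause ``at most two non-adjacent vertices'' in Definition~\ref{elementaryDef} admits the degenerate case of no transposition, i.e.\ a pure relabeling) transforms $C_f$ into a graph isomorphic to $C_g$. This supplies the missing middle deformation and, composed with $A$ and $B^{-1}$ as above, completes the argument.

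I expect the main obstacle to be the rigidity step: verifying carefully that the contracted graph is forced to be a path and that the degree-$3$ conditions pin down a unique vertex order, so that the two canonical forms are genuinely order-isomorphic. This is exactly where the equality of genera, guaranteed by $f,g\in\F^0(\M)$, is used. The remaining bookkeeping---concatenating the three deformations and invoking Definition~\ref{definverse} to reverse $B$---is routine.
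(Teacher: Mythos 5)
Your proposal is correct and follows essentially the same route as the paper's own proof: reduce both graphs to canonical form via Lemma~\ref{canonicalLemma}, join the two canonical forms by a single elementary deformation of type (R), and return via the inverse deformation. The only difference is that you spell out the rigidity of canonical graphs of a given genus (chain of disjoint $2$-cycles with a forced vertex order), a point the paper asserts more briefly by noting that the canonical forms are isomorphic apart from labels since they come from the same surface.
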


\begin{proof}
By Lemma \ref{canonicalLemma} we can find two deformations $T_f$
and $T_g$ transforming $(\Gamma_{f},\lf)$ and $(\Gamma_{g},\Lg)$,
respectively, into canonical Reeb graphs. Apart from the labels, $\Gamma_{f}$ and $\Gamma_{g}$ are
isomorphic because associated with the same surface $\M$. Hence,
$T_f(\Gamma_{f},\lf)$ can be transformed into a graph isomorphic
to $T_g(\Gamma_{g},\Lg)$ through an elementary deformation of type
(R), say $T_{\mbox{\tiny{R}}}$. Thus $(\Gamma_{g},\Lg)\cong
T_g^{-1}T_{\mbox{\tiny{R}}}T_f(\Gamma_{f},\lf)$, i.e.
$T_g^{-1}T_{\mbox{\tiny{R}}}T_f\in{\mathcal T}((\Gamma_{f},\lf),
(\Gamma_{g},\Lg))$.
\end{proof}

A simple example illustrating the above proof is given in Figure
\ref{Tnoempty}.
\begin{figure}[htbp]
\psfrag{F}{$(\Gamma_{f},\lf)$}\psfrag{G}{$(\Gamma_{g},\Lg)$}\psfrag{R}{(R)}
\psfrag{K1}{(K$_1$)}\psfrag{K2}{(K$_2$)}\psfrag{K3}{(K$_3$)}\psfrag{D}{(D)}\psfrag{B}{(B)}
\includegraphics[width=12cm]{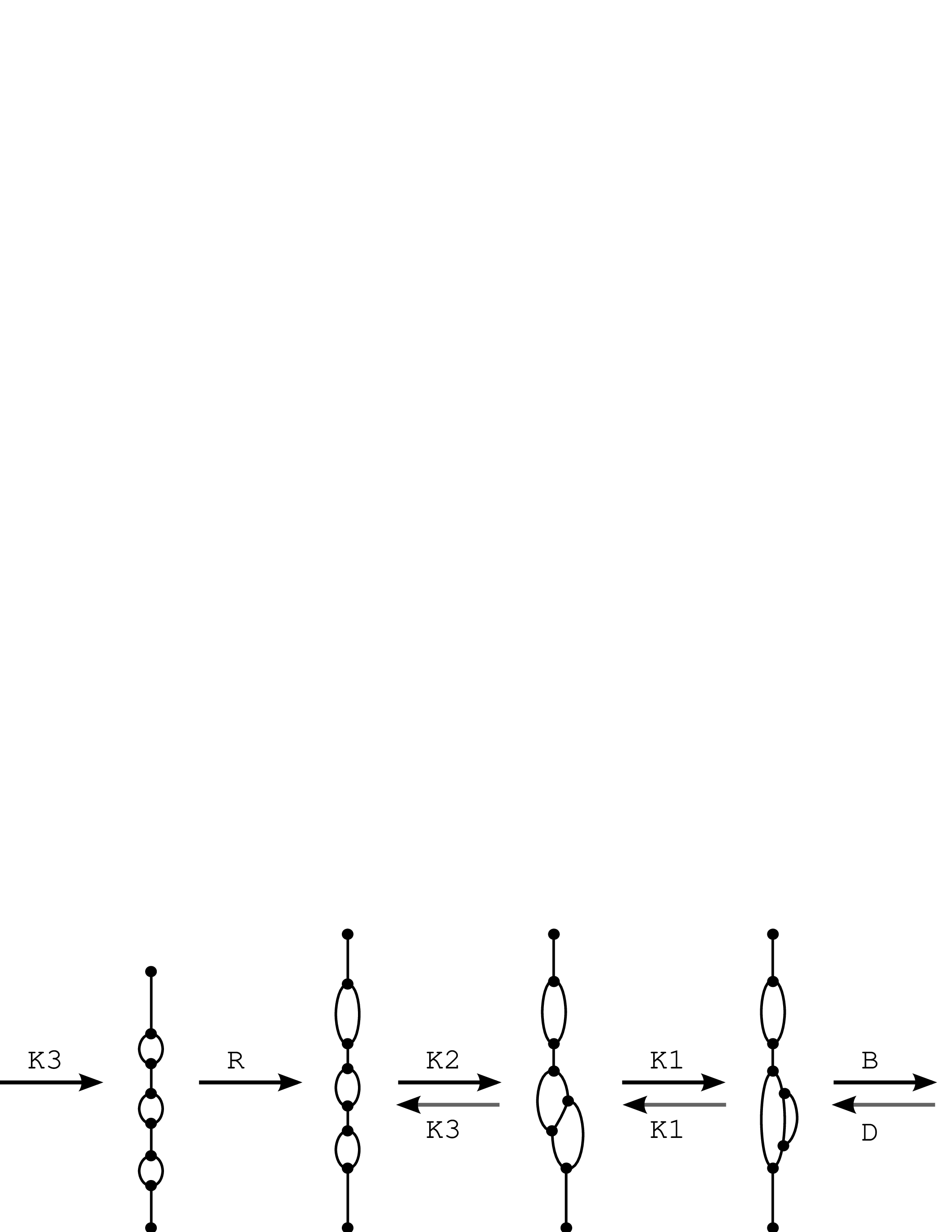}
\caption{\footnotesize{Using the procedure followed in the proof
of Proposition \ref{connected}, the leftmost labeled Reeb graph is
transformed into the rightmost one applying first the deformation
which reduces the former into its canonical form, then an
elementary deformation of type (R), and eventually the inverse of
the deformation which reduces the latter into its canonical
form.}}\label{Tnoempty}
\end{figure}

\section{edit distance between labeled Reeb graphs}\label{edit}

In this section we introduce an edit distance between labeled
Reeb graphs, in terms of the cost necessary to transform one graph
into another.

We begin  by defining the cost of a deformation. For the sake of
simplicity, in view of Proposition \ref{defGl}, whenever
$(\Gamma_{g},\Lg)\cong(\Gamma,\ell)$, we identify $V(\Gamma_g)$
with $V(\Gamma)$, and $\Lg$ with $\ell$. For all the notation
referring to the elementary deformations, see Definition
\ref{elementaryDef}.

\begin{defi}\label{cost 1}
Let $T\in{\mathcal T}((\Gamma_{f},\lf), (\Gamma_{g},\Lg))$ be a
 deformation.
\begin{itemize}
\item For $T$ elementary of type (B), inserting the vertices $u_1,u_2\in
V(\Gamma_{g})$,  the associated cost is
$$c(T)=\frac{|\Lg(u_1)-\Lg(u_2)|}{2}.$$
\item For $T$ elementary of type (D), deleting the vertices $u_1,u_2\in
V(\Gamma_{f})$, the associated cost is
$$c(T)=\frac{|\lf(u_1)-\lf(u_2)|}{2}.$$
\item For $T$ elementary of type (R), relabeling the vertices $v\in
V(\Gamma_{f})=V(\Gamma_{g})$, the associated cost is
$$c(T)=\underset{v \in V(\Gamma_{f})}\max|\lf(v)-\Lg(v)|.$$
\item For $T$ elementary of type (K$_i$), with $i=1,2,3$, relabeling the
vertices $u_1,u_2\in V(\Gamma_{f})$, the associated cost is
$$c(T)=\max\{|\lf(u_1)-\Lg(u_1)|,|\lf(u_2)-\Lg(u_2)|\}.$$
\item For $T\in{\mathcal T}((\Gamma_{f},\lf),
(\Gamma_{g},\Lg))$,  with $T=(T_1,\ldots ,T_r)$, the associated cost is
$$c(T)=\underset{i=1}{\overset{r}\sum} c(T_i).$$
\end{itemize}
\end{defi}

\begin{prop}\label{inverse}
For every deformation $T\in{\mathcal T}((\Gamma_{f},\lf),
(\Gamma_{g},\Lg))$, $c(T^{-1})=c(T)$.
\end{prop}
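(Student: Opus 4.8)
The plan is to reduce the identity to the case of a single elementary deformation, and then to check that each type of elementary deformation has exactly the same cost as its inverse. Writing $T=(T_1,\ldots,T_r)$, Definition~\ref{definverse} gives $T^{-1}=(T_r^{-1},\ldots,T_1^{-1})$, while Definition~\ref{cost 1} defines the cost additively, so that $c(T)=\sum_{i=1}^r c(T_i)$ and $c(T^{-1})=\sum_{i=1}^r c(T_i^{-1})$. Since addition is commutative, it suffices to prove $c(S^{-1})=c(S)$ for every \emph{elementary} deformation $S$; the general identity then follows by summing over $i$.

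First I would dispose of the pair (B)/(D). If $S$ is of type (D) deleting $u_1,u_2$, then by Definition~\ref{definverse} its inverse $S^{-1}$ is of type (B) reinserting the same two vertices with the same labels. Hence the labels entering the cost formula for $S^{-1}$ coincide with $\lf(u_1),\lf(u_2)$ entering the cost of $S$, and both costs equal $\tfrac{1}{2}|\lf(u_1)-\lf(u_2)|$. The reverse direction (B)$\to$(D) is identical.

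For the remaining cases the essential observation is that every cost in Definition~\ref{cost 1} is built from absolute values of label differences, and $|a-b|=|b-a|$. When $S$ is of type (R), its inverse is again of type (R) relabeling the same vertices in the reverse way, so passing to $S^{-1}$ merely exchanges the roles of $\lf$ and $\Lg$ at each vertex; the cost $\max_{v}|\lf(v)-\Lg(v)|$ is thus unchanged. The case (K$_1$)$\to$(K$_1$) is handled the same way, the maximum now being taken over the two relabeled vertices $u_1,u_2$.

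The only case demanding slightly more attention is the pair (K$_2$)/(K$_3$), since here the type of the deformation changes under inversion. However, the cost formula $\max\{|\lf(u_1)-\Lg(u_1)|,|\lf(u_2)-\Lg(u_2)|\}$ is literally the same for types (K$_1$), (K$_2$) and (K$_3$), and depends only on the label change at the two relabeled vertices $u_1,u_2$. Since $S^{-1}$ relabels exactly these same two vertices in the inverse way, exchanging $\lf$ and $\Lg$, symmetry of the absolute value once more yields $c(S^{-1})=c(S)$. I do not expect a genuine obstacle here; the only point requiring care is to confirm, by reading off Definition~\ref{definverse}, that in each case the inverse deformation acts on exactly the same vertices and produces exactly the reversed label assignment, so that the cost terms are matched up correctly under the identification via the isomorphism $\Phi$.
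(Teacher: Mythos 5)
Your proposal is correct and follows essentially the same route as the paper: decompose $T$ into elementary deformations, observe that each elementary deformation has the same cost as its inverse (by Definitions~\ref{cost 1} and~\ref{definverse}), and sum. The paper states this in one line, leaving the case-by-case verification implicit, whereas you spell it out; there is no substantive difference.
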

\begin{proof}
It is sufficient to observe that, for every deformation $T=(T_1,
\ldots, T_r)$ such that $T(\Gamma_{f}, \lf)\cong(\Gamma_{g},
\Lg)$, Definitions \ref{cost 1} and \ref{definverse} imply the
following equalities:
\begin{equation*}
c(T)=\underset{i=1}{\overset{r}\sum}c(T_i)=\underset{i=1}{\overset{r}\sum}c(T_i^{-1})=c(T^{-1}).
\end{equation*}
\end{proof}


\begin{theorem}\label{editdist}
For every $f,g\in\F^0(\M)$, we set
$$d_E((\Gamma_{f}, \lf),(\Gamma_{g},\Lg))=\inf_{T\in
\mathcal{T}((\Gamma_{f}, \lf),(\Gamma_{g},\Lg))}c(T).$$ It holds that
$d_E$ is a pseudo-metric on isomorphism classes of labeled Reeb
graphs.
\end{theorem}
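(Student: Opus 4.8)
The plan is to verify the three defining properties of a pseudo-metric---non-negativity with vanishing on the diagonal, symmetry, and the triangle inequality---together with the well-definedness of $d_E$ on isomorphism classes. Non-negativity is immediate: every elementary cost in Definition \ref{cost 1} is an absolute value or a maximum of absolute values, hence $c(T)\ge 0$ for every deformation $T$, and therefore the infimum satisfies $d_E\ge 0$. To see that $d_E((\Gamma_{f},\lf),(\Gamma_{f},\lf))=0$, I would exhibit the elementary deformation of type (R) that leaves every label unchanged; it satisfies $T(\Gamma_{f},\lf)=(\Gamma_{f},\lf)$ and has cost $\max_{v}|\lf(v)-\lf(v)|=0$, so the infimum over $\mathcal T((\Gamma_{f},\lf),(\Gamma_{f},\lf))$ is $0$. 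Well-definedness on isomorphism classes follows from the fact that $\mathcal T$ is defined only up to isomorphism of the target (and that labeled graphs are themselves regarded up to isomorphism), so composing with the relevant isomorphisms sets up a cost-preserving correspondence between the deformation sets attached to isomorphic pairs.

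For symmetry I would invoke Proposition \ref{inverse}. Given any $T\in\mathcal T((\Gamma_{f},\lf),(\Gamma_{g},\Lg))$, its inverse $T^{-1}$ lies in $\mathcal T((\Gamma_{g},\Lg),(\Gamma_{f},\lf))$ and satisfies $c(T^{-1})=c(T)$. Since $T\mapsto T^{-1}$ is an involution, this is a cost-preserving bijection between the two deformation sets, whence the two infima coincide and $d_E((\Gamma_{f},\lf),(\Gamma_{g},\Lg))=d_E((\Gamma_{g},\Lg),(\Gamma_{f},\lf))$.

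The substance of the proof is the triangle inequality. Given $f,g,h\in\F^0(\M)$, I would take an arbitrary $T\in\mathcal T((\Gamma_{f},\lf),(\Gamma_{g},\Lg))$ and an arbitrary $S\in\mathcal T((\Gamma_{g},\Lg),(\Gamma_{h},\ell_h))$, and form their concatenation $ST$, applying $T$ first and then $S$. The key point to check is that this concatenation is legitimate: $T(\Gamma_{f},\lf)$ is only isomorphic to $(\Gamma_{g},\Lg)$, not literally equal, so $S$ must be applied to $T(\Gamma_{f},\lf)$ through the labeled-graph isomorphism $\Phi$, exactly as in the identification convention used for inverses in Definition \ref{definverse}. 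Once this identification is made, $ST\in\mathcal T((\Gamma_{f},\lf),(\Gamma_{h},\ell_h))$, and because the cost of a deformation is by definition the sum of the costs of its elementary constituents, concatenation is additive: $c(ST)=c(T)+c(S)$. Hence $d_E((\Gamma_{f},\lf),(\Gamma_{h},\ell_h))\le c(T)+c(S)$; fixing $\varepsilon>0$, choosing $T$ and $S$ within $\varepsilon$ of the respective infima, and letting $\varepsilon\to 0$ yields
$$d_E((\Gamma_{f},\lf),(\Gamma_{h},\ell_h))\le d_E((\Gamma_{f},\lf),(\Gamma_{g},\Lg))+d_E((\Gamma_{g},\Lg),(\Gamma_{h},\ell_h)).$$
I expect the main obstacle to be bookkeeping rather than conceptual: making the concatenation across the isomorphism $\Phi$ rigorous, and confirming that the additivity of cost survives this identification. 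Finally, since $d_E$ may well vanish on non-isomorphic graphs, only the pseudo-metric property (and not the metric property) is asserted here, in accordance with the statement.
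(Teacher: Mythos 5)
Your proof is correct and follows essentially the same route as the paper's: the identity relabeling of type (R) for the coincidence property, Proposition \ref{inverse} for symmetry, and concatenation of deformations with additivity of cost for the triangle inequality, which the paper dispatches as ``the standard way.'' The only ingredient the paper invokes that you omit is Proposition \ref{connected}, which guarantees that $\mathcal{T}((\Gamma_{f},\lf),(\Gamma_{g},\Lg))$ is non-empty and hence that the infimum defining $d_E$ is a finite real number rather than $+\infty$.
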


\begin{proof}
By Proposition \ref{connected}, $d_E$ is a real number.
The coincidence property can be verified by observing that the
deformation of type (R) such that $T(\Gamma_{f}, \lf)=(\Gamma_{f},
\lf)$ has a cost equal to 0; the symmetry property is
a consequence of Proposition \ref{inverse}; the triangle
inequality can be proved in the standard way.
\end{proof}

In order to say that $d_E$ is actually a metric, we need to prove that if $d_E((\Gamma_{f},
\lf),(\Gamma_{g},\Lg))$ vanishes then $(\Gamma_{f}, \lf)\cong
(\Gamma_{g},\Lg)$. This will be done in Section
\ref{lowbound}. Nevertheless, for simplicity, we already refer to
$d_E$ as to the {\em edit distance}.

The following proposition shows that when a labeled Reeb graph can
be transformed into another one through a finite sequence of
deformations of type (D), the same transformation can be realized
also through a cheaper deformation which involves a relabeling of
vertices. Analogous propositions can be given for other types of
deformations. These results yield, in some cases, sharper
estimates of the edit distance between labeled Reeb graphs.

\begin{prop}\label{DversusDR}
For $T\in{\mathcal T}((\Gamma_{f},\lf), (\Gamma_{g},\Lg))$, if
$T=(T_1,\ldots ,T_n)$ and $T_i$ is an elementary deformation of
type (D) for each  $i=1,\ldots, n$, then there exists a
deformation $S\in{\mathcal T}((\Gamma_{f},\lf),
(\Gamma_{g},\Lg))$, with $S=(S_0,S_1\ldots, S_n)$ such that
$S_{0}$ is an elementary deformation of type (R), $S_1,\ldots,
S_{n}$ are elementary deformations of type (D), and
$c(S)=\underset{i=1,\dots,n}\max c(T_i)$. Hence $c(S)<c(T)$ when
$n>1$.
\end{prop}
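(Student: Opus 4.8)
The plan is to understand what a sequence of $n$ deletions does to the labels, and then replace all the relabeling implicit in these deletions by a single type (R) operation performed \emph{up front}, after which the deletions are carried out with their labels already in the final target positions. Each type (D) deformation deletes a pair of vertices $u_1, u_2$ (a degree-1 vertex together with an adjacent degree-3 vertex) whose labels satisfy $\lf(v_1)<\lf(u_i)<\lf(u_j)<\lf(v_2)$ with nothing in the open interval between them, and it costs $|\lf(u_1)-\lf(u_2)|/2$. Since all of $T_1,\dots,T_n$ are deletions, the vertices of the final graph $(\Gamma_g,\Lg)$ form a subset of $V(\Gamma_f)$, and $\Gamma_g$ is obtained from $\Gamma_f$ purely by removing $2n$ vertices and reconnecting edges; crucially the deletions never change the labels of the vertices that survive (the label map is just the restriction $\ell={\lf}_{|V(\Gamma_f)-\{u_1,u_2\}}$ at each step). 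So $\Lg = \lf$ on the surviving vertex set $V(\Gamma_g)$.

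First I would define $S_0$ to be a type (R) deformation that relabels the $2n$ vertices destined to be deleted. The idea is to pre-position the to-be-deleted pairs so that, instead of deleting a pair $(u_1,u_2)$ at its original label interval of half-width $|\lf(u_1)-\lf(u_2)|/2$, I collapse each such pair to labels that are as close together as the deletion permits, or more precisely so that I can reorder which deletion is "most expensive" to realize the max rather than the sum. The key observation driving the cost is that after the relabeling $S_0$ and the subsequent deletions $S_1,\dots,S_n$, the graphs $S_n\cdots S_1 S_0(\Gamma_f,\lf)$ and $T_n\cdots T_1(\Gamma_f,\lf)$ have the same underlying graph and the same labels on surviving vertices (namely $\Lg$), hence are isomorphic to $(\Gamma_g,\Lg)$, so indeed $S\in\mathcal{T}$. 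The work is to choose the relabeling so that $c(S)=\max_i c(T_i)$.

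The main step is the cost bookkeeping. I would argue that we can relabel the deleted vertices (within the constraints allowed by (R), i.e. only swapping adjacent-in-order non-adjacent-in-graph vertices, applied repeatedly) so that each pair $\{u_1^{(i)},u_2^{(i)}\}$ is brought to a configuration where the pair is adjacent in the vertex order and can be deleted at cost $0$ after the single relabeling has already paid for moving them. The cost of $S_0$ as a type (R) operation is $\max_{v}|\lf(v)-\ell(v)|$ where $\ell$ is the relabeling; by choosing $\ell$ to move each deleted vertex inward toward its partner by at most the relevant half-interval, this max equals $\max_i |\lf(u_1^{(i)})-\lf(u_2^{(i)})|/2=\max_i c(T_i)$, while the deletions $S_1,\dots,S_n$ then cost $0$ each. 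Summing gives $c(S)=\max_i c(T_i)$, and since $c(T)=\sum_i c(T_i)\ge \max_i c(T_i)$ with strict inequality once $n>1$ (as each $c(T_i)>0$, the labels of a deleted pair being distinct), we get $c(S)<c(T)$.

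I expect the main obstacle to be verifying that a single type (R) deformation can achieve the required simultaneous relabeling of all $2n$ deleted vertices while respecting the definition of (R): Definition~\ref{elementaryDef} permits relabeling that changes the vertex-order only for \emph{at most two non-adjacent} vertices per elementary (R). One must therefore either (a) observe that $S_0$ may itself be a composite of several elementary (R) moves whose cost still telescopes to the maximum displacement $\max_i c(T_i)$ (since the cost of an (R) is the max over vertices, composing order-preserving relabelings that move disjoint vertex sets keeps the total cost at the maximum single displacement), or (b) show the target labeling is reachable by order-preserving moves that do not require any order swaps at all, in which case the displacement bound is immediate. The delicate point is ensuring the intermediate labelings remain valid labeled Reeb graphs (injective labels, the degree-3 vertex condition) throughout, and that the final deletions each satisfy the empty-open-interval condition $\ell^{-1}(]\ell(u_i),\ell(u_j)[)=\emptyset$ required by type (D); this is where I would spend the most care, exploiting that the original deletions $T_i$ already certified these emptiness conditions and that our inward relabeling only shrinks the gaps.
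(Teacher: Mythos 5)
Your overall strategy --- a single up-front type (R) relabeling that squeezes each to-be-deleted pair together, followed by $n$ cheap deletions --- is exactly the paper's, and your option (b) (an order-preserving relabeling, so that no order swaps are needed and $S_0$ is a single legal (R) move) is the route the paper takes. However, the two points you defer are precisely where the proof's content lies, and one of your quantitative claims is not achievable as stated. First, the deletions $S_1,\dots,S_n$ cannot cost $0$: labels must remain injective, so each deleted pair has distinct labels and $c(S_i)=|\ell(u_1^{(i)})-\ell(u_2^{(i)})|/2>0$. The paper handles this by contracting each pair only to within $2\eps$ of a midpoint, so that $c(S_0)=\max_i c(T_i)-\eps$ and $\sum_i c(S_i)\le n\eps$, and then concludes by the arbitrariness of $\eps$; your accounting ($c(S_0)=\max_i c(T_i)$ exactly, deletions free) passes through an impossible intermediate labeling.

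Second, and more substantively, you do not address the nesting of the deletion intervals, which is the combinatorial heart of the argument. The emptiness condition certified by $T_i$ holds in the graph obtained after $T_1,\dots,T_{i-1}$ have already removed their vertices, so the open interval $]\lf(v_i),\lf(w_i)[$ may well contain the labels of earlier-deleted pairs: the intervals $[\lf(v_i),\lf(w_i)]$ form a genuinely nested family. A pair nested deep inside a wide interval must therefore be displaced by far more than its own half-width (it has to be carried to near the midpoint of the outermost interval containing it), so ``move each deleted vertex inward toward its partner by at most the relevant half-interval'' is false if ``relevant'' refers to that vertex's own interval. The paper's fix is to partially order the intervals by inclusion, observe that the maximal intervals are pairwise disjoint, contract each maximal interval affinely onto a $2\eps$-neighborhood of its midpoint (carrying the nested pairs along by the same affine map), and then verify (i) that this relabeling preserves the vertex order, hence is one legal (R) move whose subsequent deletions still satisfy the empty-interval condition, and (ii) that every vertex's displacement is bounded by the half-width of the \emph{maximal} interval containing it, giving $c(S_0)=\max_i c(T_i)-\eps$. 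Without this analysis, both the legality of $S_0$ and the bound on its cost are unjustified, so the proposal as written has a genuine gap at its central step.
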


\begin{proof}
Let $T=(T_1,\ldots ,T_n)$, with each $T_i$ of type (D), and let
$v_i,w_i$ be the vertices of $\Gamma_f$ deleted by $T_i$. It is not
restrictive to assume that $\lf(v_i)<\lf(w_i)$. For $n=1$, it is
sufficient to take $S_0$ as the elementary deformation of type (R)
such that $S_0(\Gamma_{f},\lf)=(\Gamma_{f},\lf)$ and $S_1=T_1$.
For $n>1$, for every $i,j$ with $1\le i,j\le n$, let us set
$T_i\preceq T_j$ if and only if $[\lf(v_i),\lf(w_i)]\subseteq
[\lf(v_j),\lf(w_j)]$. Let us denote by $T_{r_1}, \ldots ,T_{r_m}$
the maximal elements of the poset $(\{T_1,\ldots,T_n\},\preceq)$.

We observe that, for $1\le i\le n$,    there exists exactly one
value $k$, with $1\le k\le m$, for which
$[\lf(v_{i}),\lf(w_{i})]\subseteq [\lf(v_{r_k}),\lf(w_{r_k})]$.
Moreover, $[\lf(v_{i}),\lf(w_{i})]\cap
[\lf(v_{r_h}),\lf(w_{r_h})]=\emptyset$ for every $h\ne k$ because
$T_i$ is an elementary deformation of type (D).

To define $S_0$, we take $\ell:V(\Gamma_f)\to \R$  as
follows. Let $0<\eps<\underset{k=1,\ldots,m}\min\dfrac{\lf(w_{r_k})-\lf(v_{r_k})}{2}$. For $1\le k\le m$, we set
$\ell(v_{r_k})=\dfrac{\lf(w_{r_k})+\lf(v_{r_k})}{2}-\eps$ and
$\ell(w_{r_k})=\dfrac{\lf(w_{r_k})+\lf(v_{r_k})}{2}+\eps$. Next,
for $1\le i\le n$, assuming $[\lf(v_{i}),\lf(w_{i})]\subseteq
[\lf(v_{r_k}),\lf(w_{r_k})]$, we let $\lambda_{i},\mu_{i}\in
[0,1]$ be the unique values for which
$\lf(v_{i})=(1-\lambda_{i})\lf(v_{r_k})+\lambda_{i}\lf(w_{r_k})$
and $\lf(w_{i})=(1-\mu_{i})\lf(v_{r_k})+\mu_{i}\lf(w_{r_k})$, and
we set $\ell(v_{i})=(1-\lambda_{i})\ell(v_{r_k})+\lambda_{i}\ell(w_{r_k})$
and $\ell(w_{i})=(1-\mu_{i})\ell(v_{r_k})+\mu_{i}\ell(w_{r_k})$.
We observe that $\ell$ preserves the vertex order induced by $\lf$
and, therefore, $S_0$  defined by setting 
$S_0(\Gamma_f,\lf)= (\Gamma_f,\ell)$ is an elementary deformation of type (R).  
By Definition \ref{cost 1},
the cost of $S_0$ is {\setlength\arraycolsep{2pt}\begin{eqnarray*}
c(S_0)&=&\underset{\tiny{ i=1,\ldots,
n}}\max\left\{\max\left\{\left|\lf(v_{i})-\ell(v_{i})\right|,\left|\lf(w_{i})-\ell(w_{i})\right|\right\}\right\}.
\end{eqnarray*}}
A direct computation shows that $\ell(v_{i})-\lf(v_{i})\le
\ell(v_{r_k})-\lf(v_{r_k})$ and $\lf(v_{i})-\ell(v_{i})\le
\lf(w_{r_k})-\ell(w_{r_k})$. Analogously,
$\ell(w_{i})-\lf(w_{i})\le \ell(v_{r_k})-\lf(v_{r_k})$ and
$\lf(w_{i})-\ell(w_{i})\le \lf(w_{r_k})-\ell(w_{r_k})$. Hence
{\setlength\arraycolsep{2pt}\begin{eqnarray}\label{c(S0)}
c(S_0)&=&\underset{\tiny{ k=1,\ldots, m}}\max\left\{\max\left\{\ell(v_{r_k})-\lf(v_{r_k}),\lf(w_{r_k})-\ell(w_{r_k})\right\}\right\}\nonumber\\
&=&\underset{k=1,\dots,m}\max\dfrac{\lf(w_{r_k})-\lf(v_{r_k})}{2}-\eps
= \underset{k=1,\dots,m}\max c(T_{r_k})-\eps.
\end{eqnarray}}

Now we set $S_i$, for $i=1,\ldots ,n$, to be  the elementary
deformation of type (D) that deletes the vertices $v_i,w_i$  from $S_0(\Gamma_f,\lf)$. If $[\lf(v_{i}),\lf(w_{i})]\subseteq
[\lf(v_{r_k}),\lf(w_{r_k})]$, then
{\setlength\arraycolsep{2pt}\begin{eqnarray}\label{c(S1)}
c(S_i)&=&\dfrac{\ell(w_{i})-\ell(v_{i})}{2}\le\dfrac{\ell(w_{r_k})-\ell(v_{r_k})}{2}=\eps.
\end{eqnarray}}

Setting $S=(S_0,S_1,\ldots ,S_n)$, we have
$S\in\mathcal{T}((\Gamma_f,\ell_f),(\Gamma_g,\ell_g))$, and by
formulas (\ref{c(S0)}) and (\ref{c(S1)}):
{\setlength\arraycolsep{2pt}\begin{eqnarray*}
c(S)&=&c(S_0)+\underset{i=1}{\overset{n}\sum}c(S_{i})\le\underset{k=1,\dots,m}\max
c(T_{r_k})-\eps+n\cdot \eps.
\end{eqnarray*}}
Therefore, $\underset{k=1,\dots,m}\max c(T_{r_k})-\eps\le c(S)\le
\underset{k=1,\dots,m}\max c(T_{r_k})+(n-1)\eps$. By the
arbitrariness of $\eps$,  we get $c(S)=\underset{k=1,\dots,m}\max
c(T_{r_k})$, yielding the claim.

\end{proof}


\section{Stability}\label{stab}

This section is devoted to proving that Reeb graphs of orientable
surfaces are stable under function perturbations. More precisely,
it will be shown that arbitrary changes in simple Morse functions
with respect to the $C^0$-norm imply not greater changes in the
edit distance between the associated labeled Reeb graphs.
Formally:
\begin{theorem}\label{global}
For every $f, g \in \F^0(\M)$, letting
${\|f-g\|}_{C^0}=\underset{p\in\M}\max|f(p)-g(p)|$, we have
$$d_E((\Gamma_{f},\lf),(\Gamma_{g},\Lg))\le {\|f-g\|}_{C^0}.$$
\end{theorem}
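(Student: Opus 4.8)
The plan is to establish the inequality $d_E((\Gamma_f,\lf),(\Gamma_g,\Lg))\le \|f-g\|_{C^0}$ by exhibiting, for any $\eps>0$, a deformation $T\in\mathcal{T}((\Gamma_f,\lf),(\Gamma_g,\Lg))$ whose cost is at most $\|f-g\|_{C^0}+\eps$, so that the infimum defining $d_E$ is bounded above as claimed. Since $d_E$ is defined as an infimum over deformations, and stability statements of this kind are local in nature, the natural strategy is to interpolate between $f$ and $g$ and track how the Reeb graph changes along the way.

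The core idea is to consider the linear path $f_t=(1-t)f+tg$ for $t\in[0,1]$, so that $f_0=f$ and $f_1=g$. First I would argue that, after an arbitrarily small perturbation, this path can be assumed to pass through $\F^0(\M)$ (simple Morse functions) for all but finitely many parameter values $0=t_0<t_1<\cdots<t_N=1$, at which exactly one degeneracy occurs. This requires a transversality/genericity argument invoking the natural stratification of $\F(\M)$ announced in the section heading: the complement of $\F^0(\M)$ is a stratified set whose top (codimension one) strata correspond to the elementary ways a simple Morse function can degenerate, namely a birth/death of a cancelling saddle–extremum pair, or two critical values crossing. The plan is then to show that crossing each such codimension-one stratum changes the labeled Reeb graph by exactly one elementary deformation: a transition through a birth/death stratum realizes a (B) or (D) deformation, while a transition through a stratum where two critical values coincide realizes an (R) or one of the (K$_i$) deformations, according to the adjacency pattern of the two critical points whose values swap.

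The crux of the argument is the cost bookkeeping. For each elementary deformation $T_k$ occurring at the crossing near $t_k$, I would bound $c(T_k)$ in terms of how much the function changes locally. The key quantitative observation is that at a critical-value crossing the two relevant labels coincide at the crossing parameter, and on either side their difference is controlled by $\|f-g\|_{C^0}$; and for a birth/death the two new/deleted vertices have labels that pinch together to equal values at the crossing, so the half-difference cost $\tfrac12|\ell(u_1)-\ell(u_2)|$ is small. Concretely, since the critical values of $f_t$ move Lipschitz-continuously in $t$ (indeed $|f_t(p)-f_s(p)|\le |t-s|\,\|f-g\|_{C^0}$ at corresponding critical points), one can arrange that the total cost accumulated over the whole path telescopes to at most $\|f-g\|_{C^0}+\eps$. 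The delicate point is that naively summing the costs of the $(R)$ and $(K_i)$ relabelings along the path could overcount; the remedy, in the spirit of Proposition~\ref{DversusDR}, is to group consecutive relabelings and bound their combined effect by the maximum displacement of a single label rather than by the sum, using that each critical point's value travels monotonically from $f(p)$ toward the corresponding value of $g$.

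I expect the main obstacle to be precisely this transversality setup together with the matching between analytic degeneracies of the path and the combinatorial elementary deformations of Definition~\ref{elementaryDef}. One must verify that the generic path meets only the codimension-one strata, that at each such point the Reeb graph transition is genuinely one of the listed elementary types (in particular that the adjacency configuration at a value-swap always produces an (R) or (K$_i$) move and never something outside the admissible list), and that no two degeneracies collide. Handling the endpoints $f$ and $g$ themselves — which are already in $\F^0(\M)$ but which the small perturbation slightly moves — requires two extra cheap (R) deformations at the ends whose cost tends to $0$ with the perturbation size, and these must be absorbed into the final $\eps$. Once the path is set up and each crossing is identified with its elementary deformation and cost, assembling the global bound is a matter of careful but routine summation.
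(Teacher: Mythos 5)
Your proposal follows essentially the same route as the paper: perturb $f,g$ slightly and connect them by a linear path transversal to the codimension-one stratum $\F^1(\M)$ of the natural stratification, identify each crossing with a single elementary deformation of type (B), (D), (R) or (K$_i$) whose cost can be made arbitrarily small, bound the relabeling cost on each sub-segment lying in $\F^0(\M)$ by the segment's length times $\|f-g\|_{C^0}$ so the total telescopes, and absorb the endpoint perturbations into $\eps$. This is precisely the structure of the paper's Lemmas on the path $h(\lambda)$ and the concluding induction on the number of crossings, so the approach is correct and not genuinely different.
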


We observe that such a result is strictly related the way the cost
of an elementary deformation of type (R) was defined as the  following Example \ref{example} shows.

\begin{exa}\label{example}
Let $f,g:\M\to\R$ with $f,g\in\F^0(\M)$ as illustrated in
Figure~\ref{exastab}. 

\begin{figure}[htbp]
\begin{center}
\psfrag{q1}{$q_1$} \psfrag{p1}{$p_1$}\psfrag{q2}{$q_2$}
\psfrag{p2}{$p_2$}\psfrag{q3}{$q_3$} \psfrag{p3}{$p_3$}
\psfrag{q}{$q$} \psfrag{q'}{$q'$} \psfrag{p}{$p$}
\psfrag{p'}{$p'$}\psfrag{c1}{$c_1$}
\psfrag{c2}{$c_2$}\psfrag{c3}{$c_3$}\psfrag{c1+a}{$c_1+a$}
\psfrag{c2+a}{$c_2+a$}\psfrag{c3+a}{$c_3+a$}\psfrag{b}{$b$}\psfrag{d}{$d$}
\psfrag{L}{$(\Gamma_{f},\lf)$} \psfrag{S}{$(\Gamma_{g},\Lg)$}
\psfrag{f}{$f$} \psfrag{g}{$g$} \psfrag{M}{$\max z$}
\psfrag{m}{$\min z$}
\includegraphics[height=4cm]{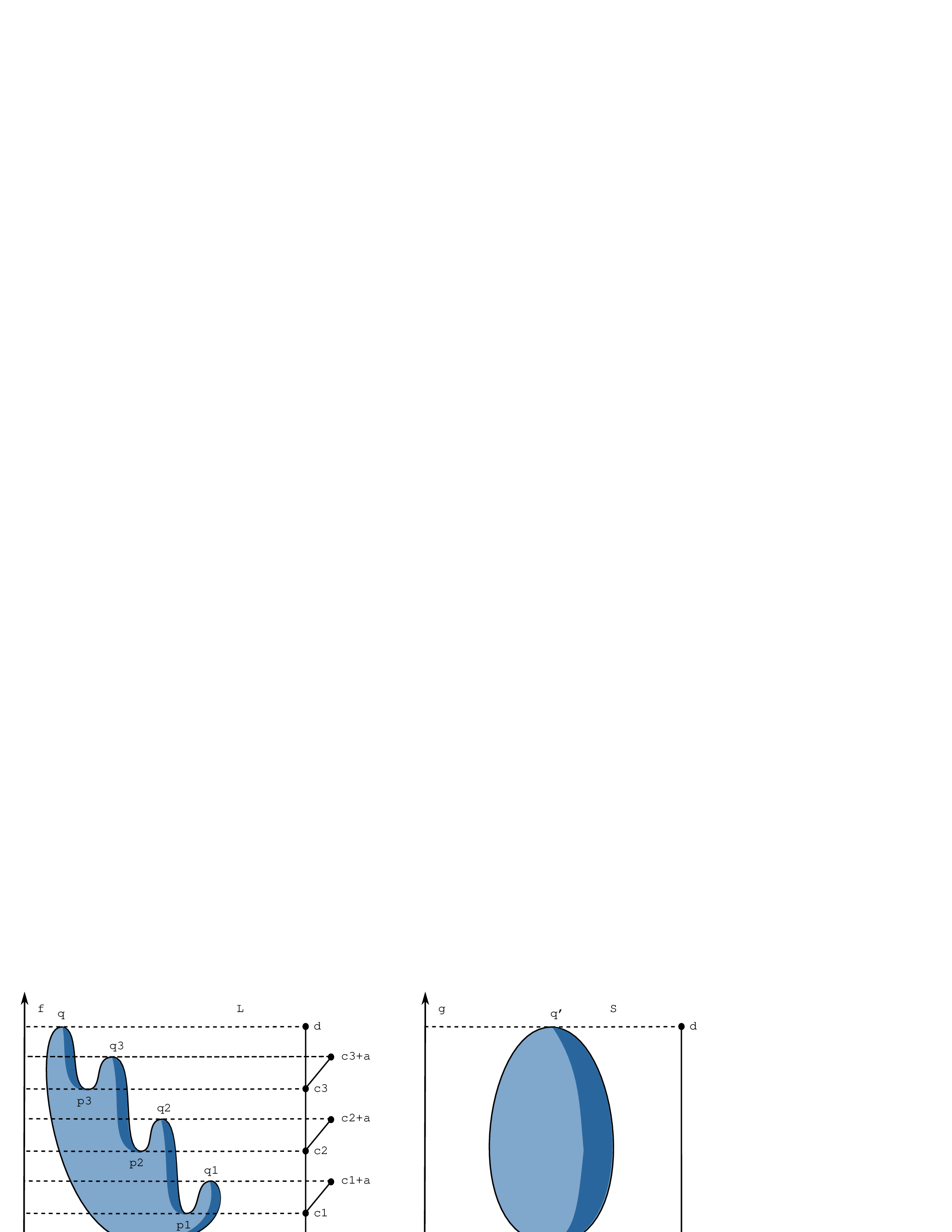}
\caption{\footnotesize{The functions $f,g\in\F^0(\M)$ considered
in Example \ref{example}.}}\label{exastab}
\end{center}
\end{figure}

Let $f(q_i)-f(p_i)=a$, $i=1,2,3$. Up to
re-parameterization of $\M$, we have
${\|f-g\|}_{C^0}=\frac{a}{2}$. The
deformation $T$ that deletes the three edges $e(p_i,q_i)\in
E(\Gamma_f)$ has cost $c(T)=3\cdot\frac{a}{2}$, implying $d_E((\Gamma_{f},\lf),(\Gamma_{g},\Lg))\le 3\cdot{\|f-g\|}_{C^0}$. On the other
hand, applying Proposition~\ref{DversusDR} we see that actually
$d_E((\Gamma_{f},\lf),(\Gamma_{g},\Lg))\le{\|f-g\|}_{C^0}$.
Indeed, for every $0<\epsilon<\frac{a}{2}$, we can apply to
$(\Gamma_{f},\lf)$ a deformation of type (R), that relabels the
vertices $p_i,q_i$, $i=1,2,3$, in such a way that $\lf(p_i)$ is
increased by $\frac{a}{2}-\epsilon$, and $\lf(q_i)$ is decreased
by $\frac{a}{2}-\epsilon$, composed with three deformations of
type (D) that delete $p_i$, $q_i$ and the edge $e(p_i,q_i)$,
for $i=1,2,3$ respectively. 
\end{exa}

In order to prove Theorem \ref{global},  we consider the set
$\F(\M)$ of smooth real-valued functions on $\M$ endowed with the
$C^2$ topology, which may be defined as follows. Let
$\{U_\alpha\}$ be a coordinate covering of $\M$ with coordinate
maps $\varphi_\alpha: U_\alpha\to \R^2$, and let $\{C_\alpha\}$ be
a compact refinement of $\{U_\alpha\}$. For every positive
constant $\delta >0$ and every $f\in\F(\M)$, define $N(f,\delta)$
as the subset of $\F(\M)$ consisting of all maps $g$ such that,
denoting $f_\alpha=f\circ \varphi_\alpha^{-1}$ and
$g_\alpha=g\circ \varphi_\alpha^{-1}$, it holds that
$\underset{i+j\le 2}\max\left|\frac{\partial^{i+j}}{\partial
x^i\partial y^j}(f_\alpha-g_\alpha)\right|<\delta$ at all points
of $\varphi_\alpha(C_\alpha)$. The $C^2$ topology is the topology
obtained by taking the sets $N(f,\delta)$ as a base of
neighborhoods.

Next we consider the strata $\F^0(\M)$ and $\F^1(\M)$ of the
\emph{natural stratification} of $\F(\M)$, as presented by Cerf in
\cite{Cerf70}.

\begin{itemize}
\item The stratum $\F^0(\M)$ is the set of simple Morse functions.
\item The stratum $\F^1(\M)$ is the disjoint union of two sets
$\F^1_{\alpha}(\M)$ and $\F^1_{\beta}(\M)$, where
\begin{itemize}
\item $\F^1_{\alpha}(\M)$ is the set of functions whose critical
levels contain exactly one critical point, and the critical points
are all non-degenerate, except exactly one. \item
$\F^1_{\beta}(\M)$ is the set of Morse functions whose critical
levels contain at most one critical point, except for one level
containing exactly two critical points.
\end{itemize}
\end{itemize}
$\F^1(\M)$ is a sub-manifold of co-dimension 1 of
$\F^0(\M)\cup\F^1(\M)$, and the complement of
$\F^0(\M)\cup\F^1(\M)$ in $\F(\M)$ is of co-dimension greater than
1. Hence, given two functions $f,g\in\F^0(\M)$, we can always find
$\widehat{f},\widehat{g}\in\F^0(\M)$ arbitrarily near to $f, g$,
respectively, for which
\begin{itemize}
\item $\widehat{f}$, $\widehat{g}$ are RL-equivalent to
$f$, $g$, respectively,
\end{itemize}
and the path $h(\lambda)=(1-\lambda)\widehat{f}+\lambda
\widehat{g}$, with $\lambda\in [0,1]$, is such that
\begin{itemize}
\item $h(\lambda)$ belongs to $\F^0(\M)\cup\F^1(\M)$ for every
$\lambda\in [0,1]$; \item $h(\lambda)$ is transversal to
$\F^1(\M)$.
\end{itemize}
As a consequence, $h(\lambda)$ belongs to $\F^1(\M)$ for at most a
finite collection of values $\lambda$, and does not traverse
strata of co-dimension greater than 1 (see, e.g., \cite{EdHa02}).\\

With these preliminaries set, the stability theorem will be proven
by considering a path that connects $f$ to $g$ via $\widehat{f}$,
$h(\lambda)$, and $\widehat{g}$ as aforementioned. This path can
be split into a finite number of linear sub-paths whose endpoints
are such that the stability theorem holds on them, as will be
shown in some preliminary lemmas. In conclusion,
Theorem~\ref{global} will be proven by applying the triangle
inequality of the edit distance. \\

In the following preliminary lemmas, $f$ and $g$ belong to $\F^0(\M)$
and $h:[0,1]\to\F(\M)$ denotes their convex linear combination:
$h(\lambda)=(1-\lambda)f+\lambda g$.

\begin{lem}\label{pathH}
${\|h(\lambda')-h(\lambda'')\|}_{C^0}=|\lambda'-\lambda''|\cdot{\|f-g\|}_{C^0}$
for every $\lambda',\lambda''\in[0,1]$.
\end{lem}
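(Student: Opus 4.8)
The plan is a direct pointwise computation, exploiting the fact that $h$ is an affine (in fact convex-linear) path in the variable $\lambda$. Writing out the two convex combinations, $h(\lambda')=(1-\lambda')f+\lambda' g$ and $h(\lambda'')=(1-\lambda'')f+\lambda'' g$, I would subtract them and collect the coefficients of $f$ and of $g$ separately. The coefficient of $f$ becomes $(1-\lambda')-(1-\lambda'')=\lambda''-\lambda'$ and the coefficient of $g$ becomes $\lambda'-\lambda''$, so the two terms combine into a single scalar multiple:
$$h(\lambda')-h(\lambda'')=(\lambda'-\lambda'')(g-f).$$

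Next I would invoke the definition of the $C^0$-norm as the maximum of the absolute pointwise difference. Since at each point $p\in\M$ we have $\bigl(h(\lambda')-h(\lambda'')\bigr)(p)=(\lambda'-\lambda'')\bigl(g(p)-f(p)\bigr)$, the constant factor $|\lambda'-\lambda''|$ pulls out of the maximum over $p$:
$$\|h(\lambda')-h(\lambda'')\|_{C^0}=\max_{p\in\M}|\lambda'-\lambda''|\,|g(p)-f(p)|=|\lambda'-\lambda''|\,\max_{p\in\M}|g(p)-f(p)|.$$
Finally, since $\max_{p\in\M}|g(p)-f(p)|=\|f-g\|_{C^0}$ because the norm is symmetric in its two arguments, the right-hand side equals $|\lambda'-\lambda''|\cdot\|f-g\|_{C^0}$, which is exactly the claim.

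Because every step is an elementary rearrangement of the linear path followed by factoring a nonnegative scalar out of a maximum, I do not anticipate any genuine obstacle here. The only point worth stating explicitly is that the $C^0$-norm is the supremum of the absolute pointwise difference, which is what makes the homogeneity in the scalar $\lambda'-\lambda''$ immediate and yields the claimed equality (rather than merely an inequality).
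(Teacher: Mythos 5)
Your proposal is correct and follows exactly the same route as the paper: substitute the definition of $h$, collect the coefficients of $f$ and $g$ to get $h(\lambda')-h(\lambda'')=(\lambda'-\lambda'')(g-f)$, and pull the scalar out of the $C^0$-norm. Nothing further is needed.
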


\begin{proof}
{\setlength\arraycolsep{2pt}\begin{eqnarray*}
{\|h(\lambda')-h(\lambda'')\|}_{C^0}&=&{\|(1-\lambda')f+\lambda'
g-(1-\lambda'')f-\lambda''
g\|}_{C^0}\\&=&{\|(\lambda''-\lambda')f-(\lambda''-\lambda')g\|}_{C^0}=|\lambda'-\lambda''|\cdot{\|f-g\|}_{C^0}.\end{eqnarray*}}
\end{proof}

\begin{lem}\label{pathstabilityF0}
 If $\h(\lambda)
\in \F^0(\M)$ for every $\lambda \in [0,1]$, then
$d_E((\Gamma_{f},\lf),(\Gamma_{g},\Lg))\le {\|f-g\|}_{C^0}.$
\end{lem}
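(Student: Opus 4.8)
The plan is to exploit the hypothesis that the \emph{entire} segment stays in the top stratum. Since $h(\lambda)\in\F^0(\M)$ for every $\lambda\in[0,1]$, no critical point is ever created or annihilated and no two critical values ever coincide, for either event would force $h$ to meet $\F^1_\alpha(\M)$ or $\F^1_\beta(\M)$. As every $h(\lambda)$ is Morse, the Hessian is nonsingular at each critical point, so I would apply the implicit function theorem to $\nabla h(\lambda)=0$ to obtain smooth curves $\lambda\mapsto p_i(\lambda)$ tracking the critical points; since $\lambda$ ranges over the compact interval $[0,1]$ with no bifurcation, these curves are defined on all of $[0,1]$ and yield an index-preserving bijection between $K_f$ and $K_g$. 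Because critical values never cross, this bijection respects the order of the labels, and the adjacency pattern of the Reeb graph is constant along the path; hence it assembles into an order-preserving graph isomorphism $\Phi\colon V(\Gamma_f)\to V(\Gamma_g)$.

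Consequently $(\Gamma_f,\lf)$ and $(\Gamma_g,\Lg)$ differ only in their labels, and a single elementary deformation $T$ of type (R) swapping no pair of vertices (admissible, since the clause ``at most two'' includes the case of none) carries $(\Gamma_f,\lf)$ onto a graph isomorphic to $(\Gamma_g,\Lg)$: it relabels each vertex $v=p_i(0)$ by $\Lg(\Phi(v))$. By Definition \ref{cost 1} its cost is $c(T)=\max_i\bigl|\lf(p_i(0))-\Lg(\Phi(p_i(0)))\bigr|=\max_i|c_i(0)-c_i(1)|$, where $c_i(\lambda)=h(\lambda)\bigl(p_i(\lambda)\bigr)$ denotes the $i$-th critical value along the path. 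Since $d_E$ is an infimum of costs, it then suffices to bound $c(T)$.

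The analytic heart is to estimate each $|c_i(0)-c_i(1)|$. I would differentiate $c_i$ and use that $p_i(\lambda)$ is a critical point of $h(\lambda)$, so that $\nabla\bigl(h(\lambda)\bigr)\bigl(p_i(\lambda)\bigr)=0$; the chain-rule contribution from the moving point drops out and one is left with
$$\frac{d c_i}{d\lambda}(\lambda)=\frac{\partial h}{\partial\lambda}\bigl(p_i(\lambda)\bigr)=(g-f)\bigl(p_i(\lambda)\bigr),$$
whence $|c_i'(\lambda)|\le{\|f-g\|}_{C^0}$ for every $\lambda$. Integrating over $[0,1]$ gives $|c_i(1)-c_i(0)|\le{\|f-g\|}_{C^0}$, so $c(T)\le{\|f-g\|}_{C^0}$ and therefore $d_E((\Gamma_f,\lf),(\Gamma_g,\Lg))\le c(T)\le{\|f-g\|}_{C^0}$. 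The step I expect to need the most care is justifying, from the single hypothesis $h(\lambda)\in\F^0(\M)$, that $\Phi$ is simultaneously order- and adjacency-preserving, so that a lone type-(R) relabeling suffices; the envelope identity $c_i'=(g-f)\circ p_i$ is then exactly what converts the $C^0$ control on $f-g$ into the desired bound on the relabeling cost.
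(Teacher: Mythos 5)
Your proof is correct and is essentially the argument the paper delegates to Proposition 5.4 of \cite{DiLa12}: track the critical points along the segment via the implicit function theorem (possible because every $h(\lambda)$ is Morse and the surface is compact), observe that the ordered Reeb graph is constant along the path so that a single type-(R) relabeling suffices, and bound the drift of each critical value by the first-variation identity $c_i'(\lambda)=(g-f)(p_i(\lambda))$, which integrates to $|c_i(1)-c_i(0)|\le\|f-g\|_{C^0}$. The one step you assert rather than derive --- that the adjacency pattern cannot change while $h(\lambda)$ remains in $\F^0(\M)$ --- is the standard local structural stability of simple Morse functions (each $h(\lambda)$ has a $C^2$-neighborhood of $RL$-equivalent functions, so the order-preserving isomorphism class of the Reeb graph is locally constant, hence constant, on $[0,1]$), which is exactly what the hypothesis of staying in the open stratum provides.
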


\begin{proof}
The statement can be proved in the same way as \cite[Prop.
5.4]{DiLa12}.
\end{proof}

\begin{lem}\label{pointonF1}
Let $h(\lambda)$ intersect $\F^1(\M)$ transversely at $h(\overline
\lambda)$, $0<\overline\lambda<1$, and nowhere else. Then, for
every constant value $\delta
>0$, there exist two real numbers $\lambda', \lambda''$ with
$0<\lambda'<\overline\lambda<\lambda''<1$,  such that
$$d_E((\Gamma_{h(\lambda')},\ell_{h(\lambda')}),(\Gamma_{h(\lambda'')},\ell_{h(\lambda'')}))\le
\delta.$$
\end{lem}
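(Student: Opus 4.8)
The plan is to analyze what happens to the Reeb graph as the parameter $\lambda$ crosses the single value $\overline\lambda$ at which $h(\lambda)$ meets the codimension-1 stratum $\F^1(\M)$ transversely. The key point is that for $\lambda$ near $\overline\lambda$ but on either side, $h(\lambda)$ lies in $\F^0(\M)$, so the labeled Reeb graphs $(\Gamma_{h(\lambda')},\ell_{h(\lambda')})$ and $(\Gamma_{h(\lambda'')},\ell_{h(\lambda'')})$ are well defined. Since $\F^1(\M)=\F^1_\alpha(\M)\sqcup\F^1_\beta(\M)$, I would split into cases according to which type of stratum is crossed, and in each case identify the single elementary deformation that converts the graph just before the crossing into the graph just after.

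\textbf{The case analysis.}
First I would treat the $\F^1_\alpha(\M)$ case: here $h(\overline\lambda)$ has exactly one degenerate critical point. Passing through such a point corresponds to a birth or death of a cancelling pair of critical points (a minimum/saddle or maximum/saddle pair), which on the Reeb graph side is exactly an elementary deformation of type (B) or (D) in the sense of Definition~\ref{elementaryDef}. Next I would treat the $\F^1_\beta(\M)$ case: here two critical points share a single critical level at $\overline\lambda$, all critical points remaining non-degenerate. As the two critical values cross each other, the graph undergoes either a relabeling of two vertices whose order is swapped --- a deformation of type (R) --- or, when the two critical points are adjacent in the graph so that the swap alters adjacencies, a deformation of type (K$_1$), (K$_2$), or (K$_3$). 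In every case, $(\Gamma_{h(\lambda'')},\ell_{h(\lambda'')})$ is obtained from $(\Gamma_{h(\lambda')},\ell_{h(\lambda')})$ by a single elementary deformation $T$, so that $T\in\mathcal T\bigl((\Gamma_{h(\lambda')},\ell_{h(\lambda')}),(\Gamma_{h(\lambda'')},\ell_{h(\lambda'')})\bigr)$ and hence $d_E\le c(T)$.

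\textbf{Controlling the cost.}
It remains to bound $c(T)$ by $\delta$. By Definition~\ref{cost 1}, the cost of each of these elementary deformations is controlled by differences of labels of the vertices involved, namely by quantities of the form $|\ell_{h(\lambda')}(v)-\ell_{h(\lambda'')}(v)|$ or half-differences $\tfrac12|\ell(u_1)-\ell(u_2)|$ of label values at $\lambda'$ and $\lambda''$. The crucial observation is that all the vertices participating in $T$ are critical points whose critical values coalesce (or are created/annihilated) precisely at $\overline\lambda$. Since the critical values depend continuously on $\lambda$ --- which follows from the transversality and the fact that the relevant critical points vary smoothly near a nondegenerate situation, away from $\overline\lambda$ --- these label differences tend to $0$ as $\lambda',\lambda''\to\overline\lambda$. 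Therefore, choosing $\lambda'$ and $\lambda''$ close enough to $\overline\lambda$ on either side forces $c(T)\le\delta$, which gives the desired inequality $d_E\le\delta$.

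\textbf{Main obstacle.}
The step I expect to require the most care is the precise identification, in the $\F^1_\beta(\M)$ case, of which elementary deformation occurs, together with the verification that the costs collapse to zero in the limit $\lambda',\lambda''\to\overline\lambda$. When two adjacent saddles exchange heights, the local reconfiguration of edges must be matched against the exact edge-rewriting prescribed in the definitions of (K$_1$), (K$_2$), (K$_3$), and one must check that no other part of the graph is affected. Establishing the continuity of the critical values (and of the relevant half-difference of labels, which is what actually controls the $(B)/(D)$ cost) near the degenerate parameter $\overline\lambda$ is where the transversality hypothesis and the local normal-form description of the stratum crossing are genuinely used; this is the technical heart of the argument.
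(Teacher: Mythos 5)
Your high-level picture agrees with the paper's: split according to whether $\overline h=h(\overline\lambda)$ lies in $\F^1_\alpha(\M)$ (birth/death of a cancelling pair, giving a type (B)/(D) move) or in $\F^1_\beta(\M)$ (two critical values crossing, giving a type (R) or (K$_i$) move), and show the cost of the move vanishes as $\lambda',\lambda''\to\overline\lambda$. However, there are two genuine gaps. First, your central claim that ``$(\Gamma_{h(\lambda'')},\ell_{h(\lambda'')})$ is obtained from $(\Gamma_{h(\lambda')},\ell_{h(\lambda')})$ by a single elementary deformation $T$'' is false as stated: as $\lambda$ varies, \emph{all} critical values of $h(\lambda)=(1-\lambda)f+\lambda g$ drift, not only the ones that coalesce at $\overline\lambda$, whereas a single deformation of type (B), (D) or (K$_i$) changes only the labels of the two vertices it involves. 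One must therefore compose with an auxiliary type (R) relabeling (of small cost) to absorb the drift of the remaining labels; without it the two labeled graphs are simply not related by your $T$. Second, the identification of \emph{which} elementary deformation occurs at the crossing, and the quantitative decay of its cost, is precisely the content you defer to the ``technical heart'' without supplying it; asserting that it follows from ``continuity of critical values'' begs the question in the $\F^1_\alpha$ case, where the two relevant critical points do not even exist on one side of $\overline\lambda$.

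The paper closes both gaps with the machinery of universal deformations. It introduces an explicit local model $G(\eta,\cdot)$ of $\overline h$, supported in a coordinate chart around the degenerate (or coincident) critical points and equal to $\overline h$ elsewhere, so that for $G$ a \emph{single} elementary deformation really does relate the graphs of $G(-\eta,\cdot)$ and $G(\eta,\cdot)$, with cost computed exactly as $2(|\eta|/3)^{3/2}$ (type (B)/(D), from the normal form $\pm x^2+y^3$) or $2\eta$ (the $\F^1_\beta$ cases). Transversality is then used to assert that the actual family $F(s,\cdot)=\overline h+s(g-f)$ is also a universal deformation, hence equivalent to $G$; this equivalence yields graph isomorphisms whose label discrepancies depend continuously on $s$ and vanish at $s=0$, and these discrepancies are exactly the extra type (R) deformations of cost at most $\delta/3$ that your argument omits. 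A correct completion of your proof would have to reproduce this (or an equivalent Cerf-theoretic local analysis); as written, the proposal describes the statement to be proved rather than proving it.
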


\begin{proof}
In this proof we use the notion of universal deformation of a function. More
details on universal deformations may be found in \cite{
Ma82, Se72}. In particular, we will consider two different
universal deformations $F$ and $G$ of $\overline{h}=h(\overline
\lambda)$. Firstly we show how $F$ and $G$ yield the claim, and
then construct them.

We use the fact that, being two universal deformations
of $\overline h\in\F^1(\M)$, $F$ and $G$ are equivalent. This
means that there exist a diffeomorphism $\eta(s)$ of $\R$ with
$\eta(0)=0$, and a local diffeomorphism $\phi(s,(x,y))$, with
$\phi(s,(x,y))=(\eta(s),\psi(\eta(s),(x,y)))$ and
$\phi(0,(x,y))=(0,(x,y))$, such that $F=(\eta^*G)\circ \phi$.
Hence, apart from the labels,  the Reeb graphs of $F(s,\cdot)$ and $G(\eta(s),\cdot)$ are
isomorphic. Moreover, the difference  the labels
at  corresponding vertices in the Reeb graphs of $F(s,\cdot)$ and
$G(\eta(s),\cdot)$ continuously depends on $s$, and is $0$ for
$s=0$. Therefore, for every $\delta>0$, taking $|s|$ sufficiently
small, it is possible to transform the labeled Reeb graph of
$F(s,\cdot)$ into that of $G(\eta(s),\cdot)$, or viceversa, by a
deformation of type (R) whose cost is not greater than $\delta/3$.
Moreover, as equality (\ref{costalphabeta}) will show, for every
$\delta>0$, $|s|$ can be taken sufficiently small that the
distance between the labeled Reeb graphs of $G(\eta(s),\cdot)$ and
$G(\eta(-s),\cdot)$ is not greater that $\delta/3$. Thus, applying
the triangle inequality, we deduce that, for every $\delta>0$,
there exists a sufficiently small $s>0$ such that the distance
between the labeled Reeb graphs of $F(s,\cdot)$ and $F(-s,\cdot)$
is not greater than $\delta$. The claim follows taking
$\lambda'=\overline\lambda-s$ and $\lambda''=\overline\lambda+s$.

We now construct the universal deformations $F(s,p)$ and $G(s,p)$,
with $s\in\R$ and $p\in\M$. We define $F$ by setting
$F(s,p)=\overline{h}(p)+s\cdot (g-f)(p)$. This deformation is
universal because $h(\lambda)$ intersects $\F^1(\M)$ transversely
at $h(\overline \lambda)$. In order to construct $G$, let us
consider separately the two cases $\F^1_{\alpha}(\M)$ and
$\F^1_{\beta}(\M)$.\\

{\em Case $\overline{h}\in\F^1_{\alpha}(\M)$:} Let $\overline p$
be the sole degenerate critical point of $\overline h$. Let
$(x,y)$ be a suitable local coordinate system around $\overline p$
in which the canonical expression of $\overline h$ is
$\overline{h}(x,y)=\overline h(\overline p)\pm x^2+y^3$.
\begin{figure}[htbp]
\psfrag{f}{$\overline{h}=G(0,\cdot)$}\psfrag{f1}{$G(\eta,\cdot)$,
$\eta>0$}\psfrag{f2}{$G(\eta,\cdot)$,
$\eta<0$}\psfrag{B}{(B)}\psfrag{D}{(D)}\psfrag{p}{$p$}\psfrag{q}{$q$}
\begin{center}
\includegraphics[width=12cm]{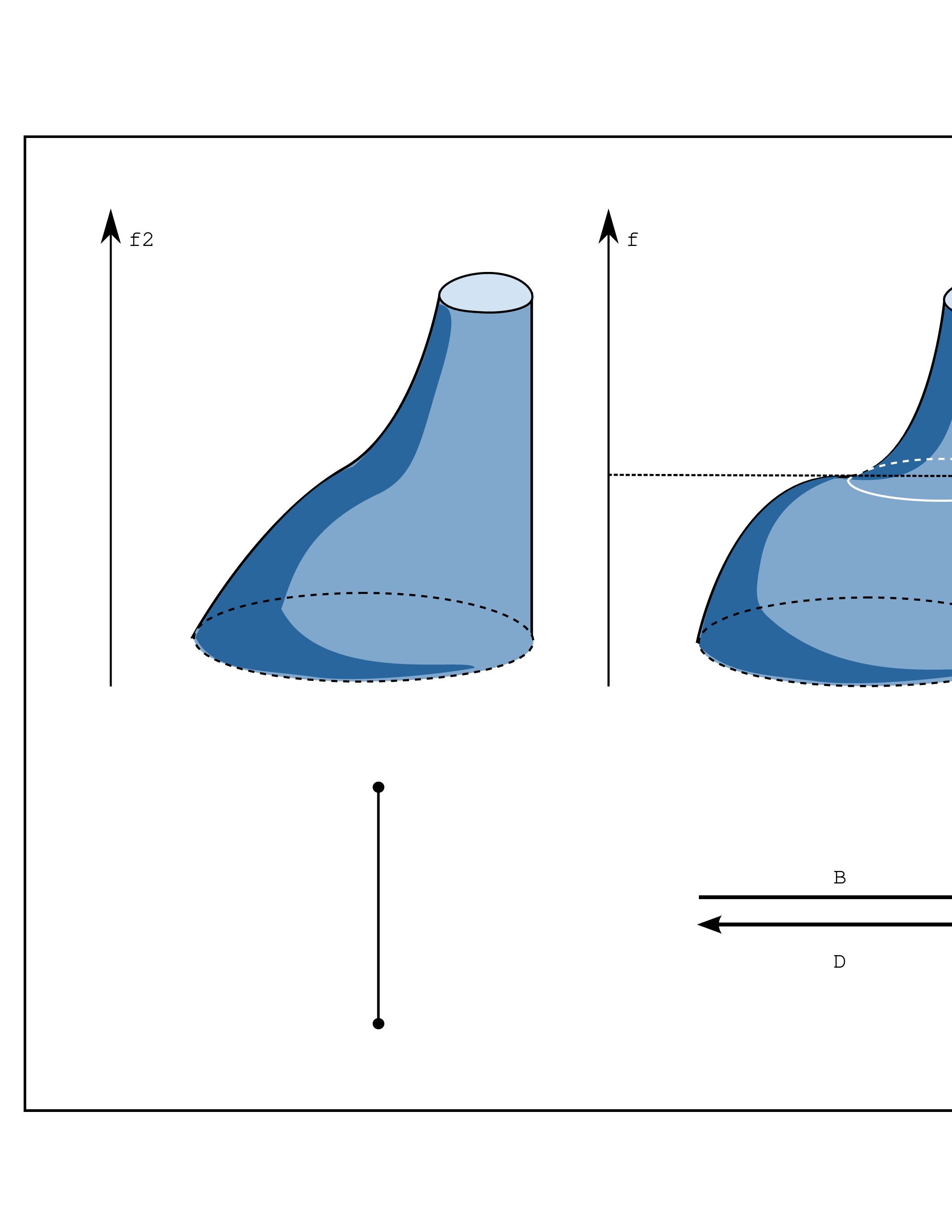}
\end{center}
\caption{\footnotesize{Center: A function $\overline
h\in\F^1_{\alpha}(\M)$; left-right: The universal deformation
$G(\eta,\cdot)$ with the associated labeled Reeb graphs for
$\eta<0$ and $\eta>0$.}\label{BD}}
\end{figure}
Let $\omega: \M\to \R$ be a smooth function equal to $1$ in a
neighborhood of $\overline p$, which decreases moving from
$\overline p$, and whose support is contained in the chosen
coordinate chart around $\overline p$. Finally, let
$G(\eta,(x,y))=\overline{h}(x,y)-\eta\cdot \omega(x,y)\cdot y$,
where $\eta\in\R$. For $\eta<0$, $G(\eta,\cdot)$ has
no critical points in the support of $\omega$ and is equal to
$\overline h$ everywhere else, while, for $\eta>0$,
$G(\eta,\cdot)$ has exactly two critical points in the support of
$\omega$, precisely $p_1=\left(0,-\sqrt{\frac{\eta}{3}}\right)$
and $p_2=\left(0,\sqrt{\frac{\eta}{3}}\right)$, and is equal to
$\overline h$ everywhere else  (see Figure~\ref{BD}). Therefore,
for every $\eta>0$ sufficiently small, the labeled Reeb graph of
$G(-\eta,\cdot)$ can be transformed into that of  $G(\eta,\cdot)$
by an elementary deformation $T$ of type~(B). Obviously, in the case $\eta<0$, the deformation we consider is of
type (D).

By
Definition~\ref{cost 1} and  Proposition
\ref{inverse}, a direct computation shows that the cost
of $T$ is
{\setlength\arraycolsep{2pt}\begin{eqnarray}\label{costalpha}
c(T)&=&2\cdot\left(\frac{|\eta|}{3}\right)^{3/2}.
\end{eqnarray}}

{\em Case $\overline{h}\in\F^1_{\beta}(\M)$:} Let $\overline{p}$
and $\overline{q}$ be the critical points of $\overline{h}$ such
that $\overline{h}(\overline{p})=\overline{h}(\overline{q})$. Since $\overline{p}$ is
non-degenerate, there exists a suitable local coordinate system
$(x,y)$ around $\overline{p}$ in which the canonical expression of
$\overline{h}$ is $\overline{h}(x,y)=\overline{h}(\overline{p})+
x^2+ y^2$ if $\overline{p}$ is a minimum, or
$\overline{h}(x,y)=\overline{h}(\overline{p})- x^2- y^2$ if
$\overline{p}$ is a maximum, or
$\overline{h}(x,y)=\overline{h}(\overline{p})\pm x^2\mp y^2$ if
$\overline{p}$ is a saddle point. Let $\omega: \M\to \R$ be a smooth function equal to $1$ in a
neighborhood of $\overline p$, which decreases moving from
$\overline p$, and whose support is contained in the coordinate
chart around $\overline{p}$ in which $\overline{h}$ has one of the
above expressions. Finally, let
$G(\eta,(x,y))=\overline{h}(x,y)+\eta\cdot \omega(x,y)$, where
$\eta=\eta(s)$, $s\in\R$. For every $\eta\in\R$, with $|\eta|$ sufficiently small,
$G(\eta,\cdot)$ has the same critical points, with the same
indices,
 as $\overline{h}$. As for critical values,
they are the same as well, apart from the value taken at
$\overline{p}$:
$G(\eta,\overline{p})=\overline{h}(\overline{p})+\eta$. 

We 
distinguish the following two situations illustrated in Figures \ref{MinSadMax} and  \ref{Saddles}:
\begin{enumerate}
\item the points $\overline{p}$ and $\overline{q}$ belong to two
different connected components of
$\overline{h}^{-1}(\overline{h}(\overline{p}))$;
\begin{figure}[htbp]
\begin{center}
\psfrag{p}{$\overline{p}$}\psfrag{q}{$\overline{q}$}\psfrag{f2}{$\widetilde{f_2}$}\psfrag{K1}{(K$_1$)}
\begin{tabular}{ccc}
\includegraphics[width=4cm]{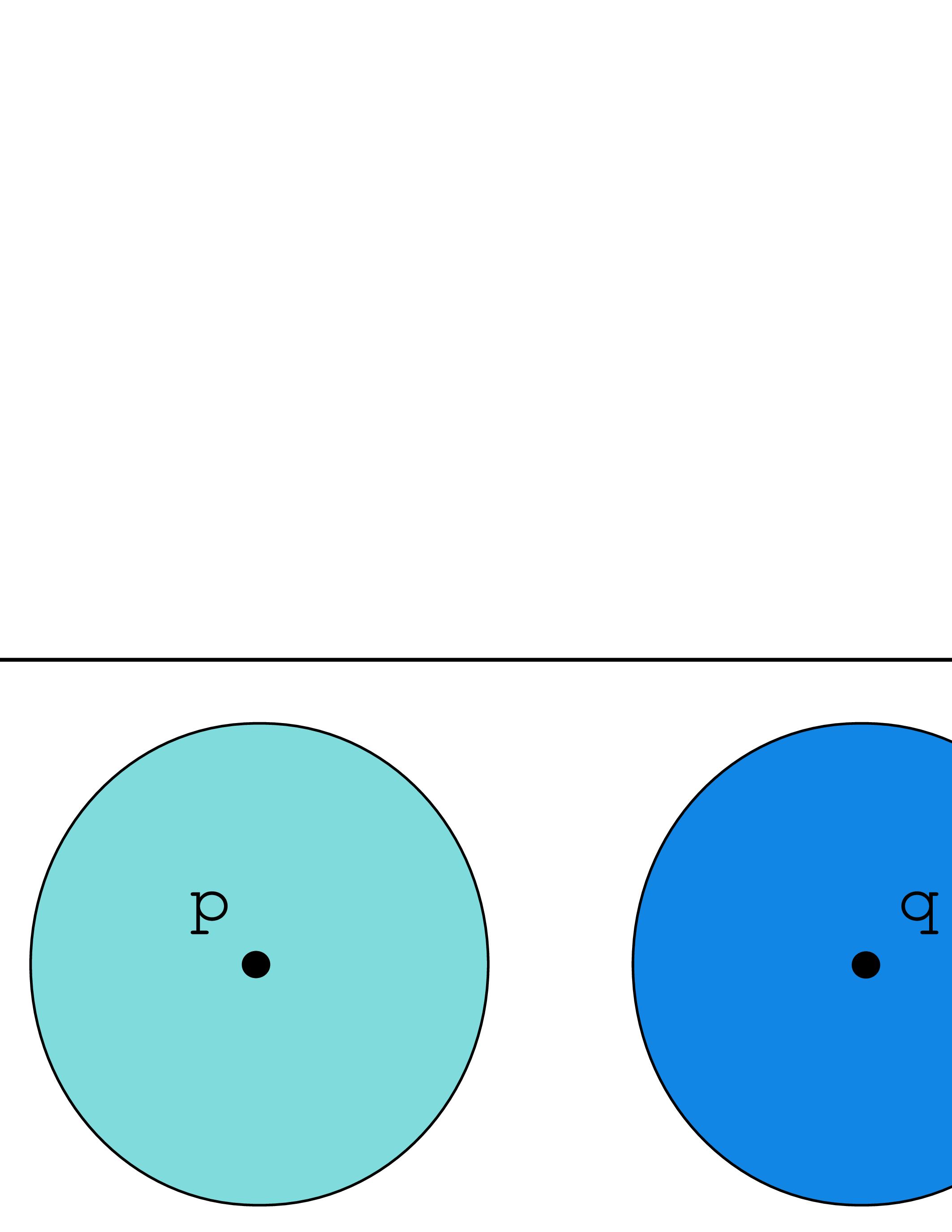}&\includegraphics[width=4cm]{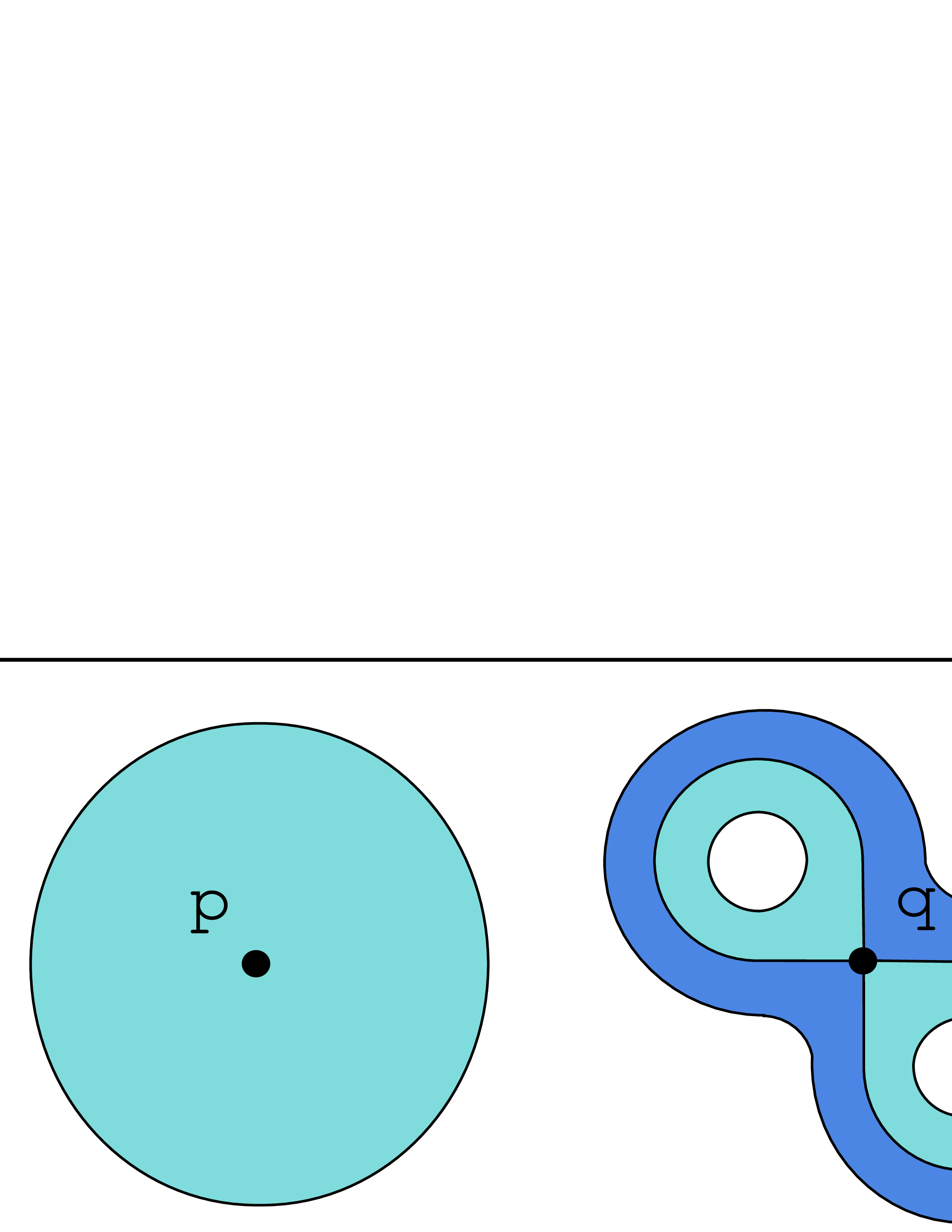}&
\includegraphics[width=4cm]{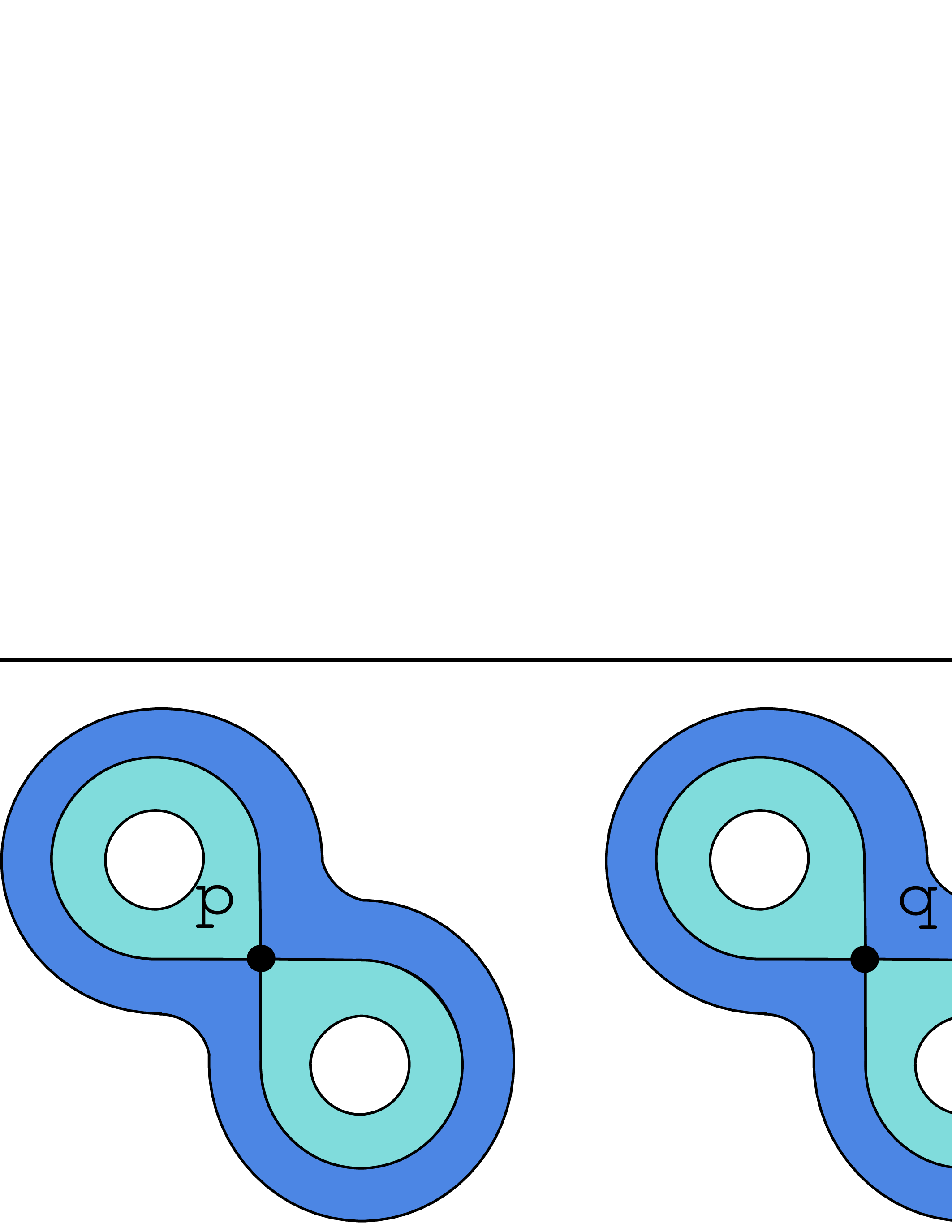}
\end{tabular}
\end{center}
\caption{\footnotesize{Two critical points in different connected
components of the same critical level. The dark (resp. light)
regions correspond to points below (resp. above) this critical
level. Possibly inverting the colors of one or both the
components, we have all the possible cases.}\label{MinSadMax}}
\end{figure}
\item the points $\overline{p}$ and $\overline{q}$ belong to the
same connected component of
$\overline{h}^{-1}(\overline{h}(\overline{p}))$.
\begin{figure}[htbp]
\psfrag{a}{$(a)$}\psfrag{b}{$(b)$}\psfrag{c}{$(c)$}\psfrag{d}{$(d)$}\psfrag{p}{$\overline{p}$}\psfrag{q}{$\overline{q}$}
\begin{center}
\begin{tabular}{cc}
\includegraphics[width=4cm]{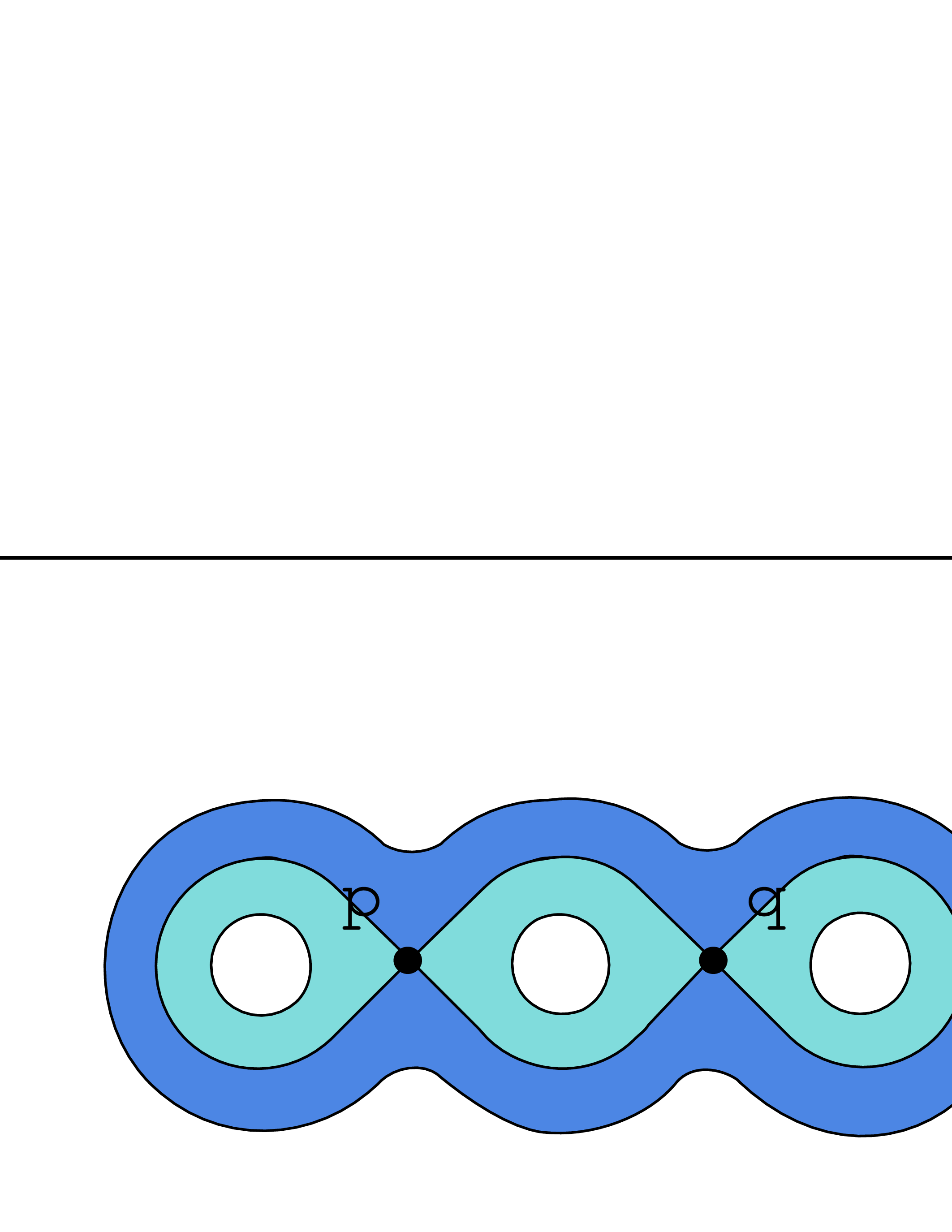}&\includegraphics[width=4cm]{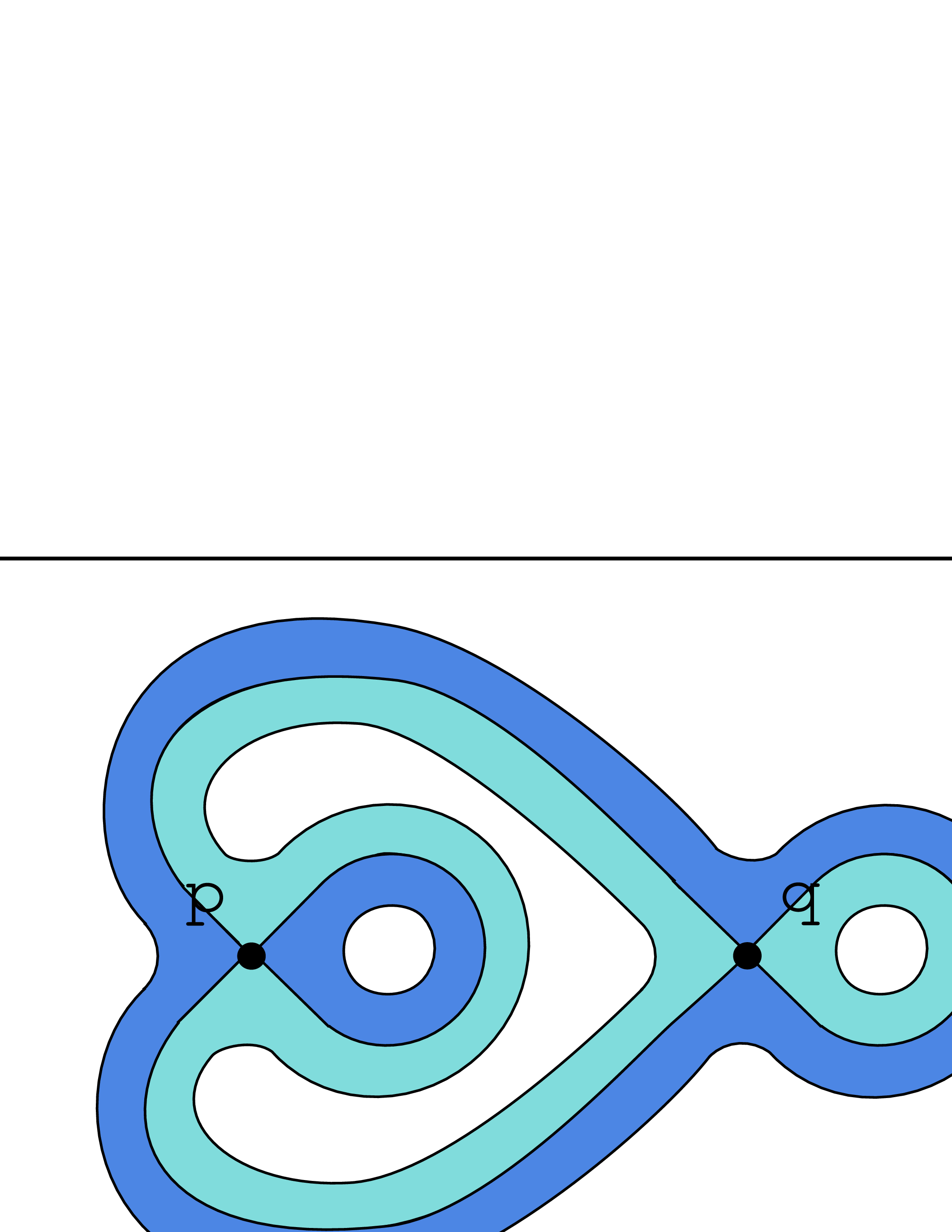}\\
\includegraphics[width=4cm]{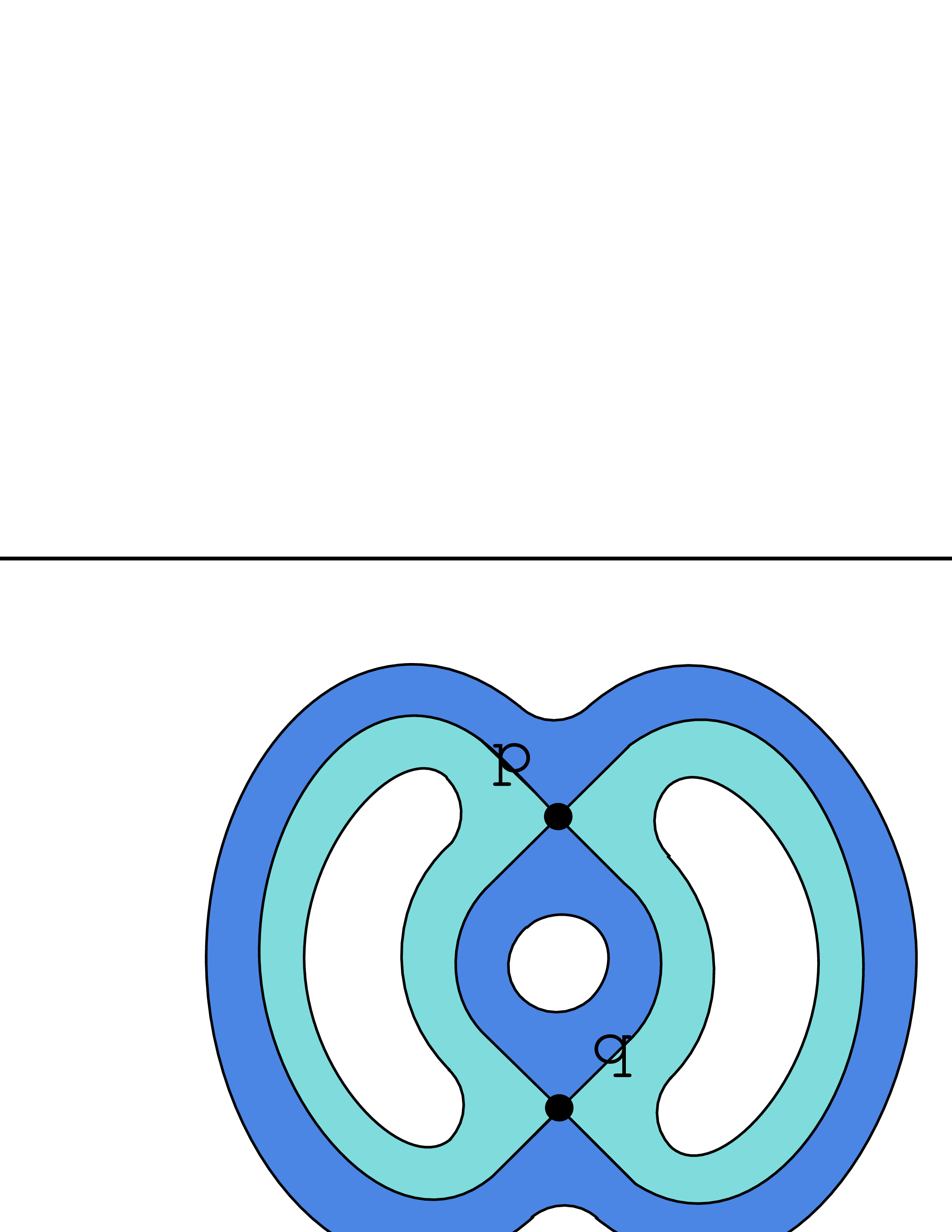}&\includegraphics[width=4cm]{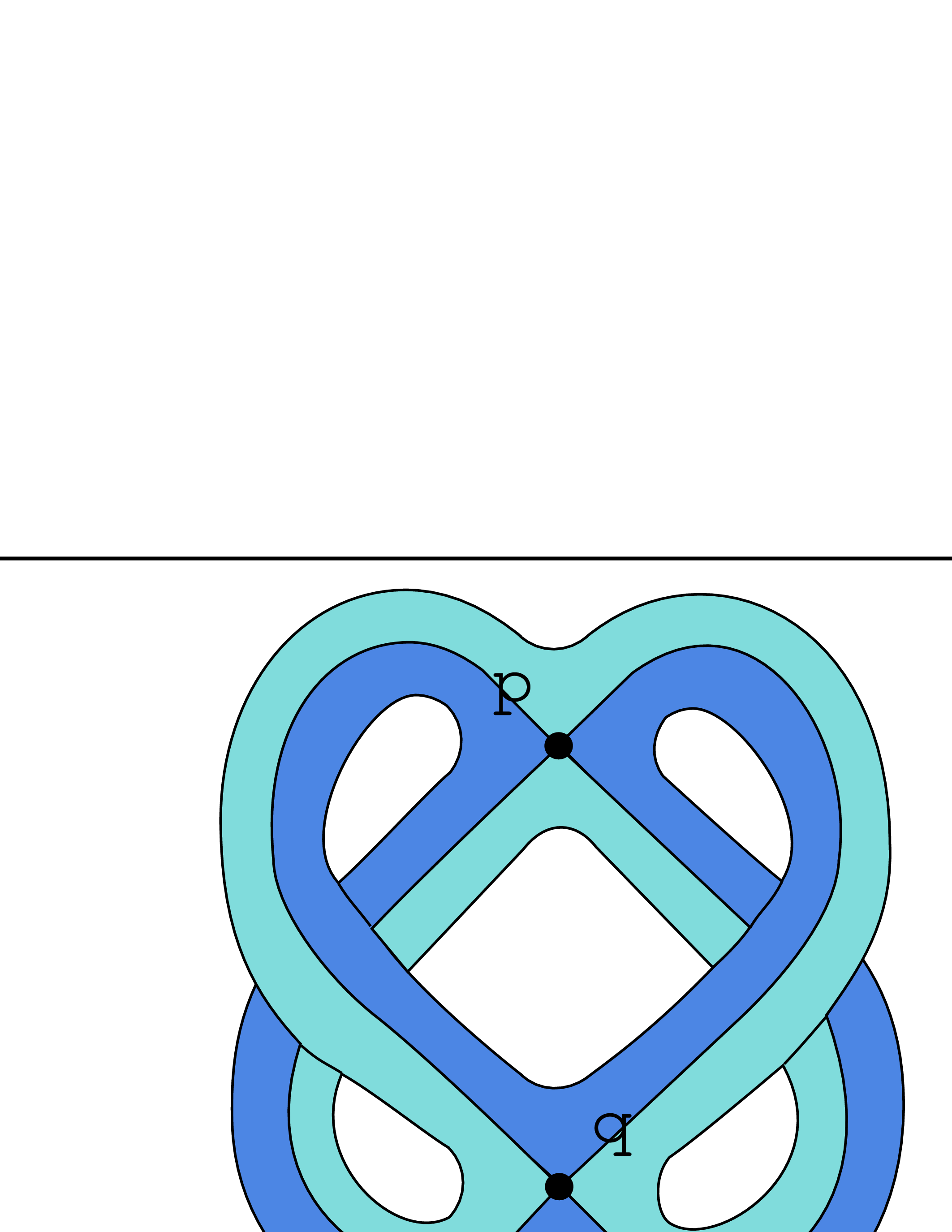}\end{tabular}
\end{center}
\caption{\footnotesize{Two critical points in the same connected
component of the same critical level. The dark (resp. light)
regions correspond to points below (resp. above) this critical
level. Possibly inverting the colors of this component, we have
all the possible cases.}\label{Saddles}}
\end{figure}
\end{enumerate}

In the situation (1),  for every  $\eta>0$ sufficiently small, the
labeled Reeb graphs of $G(-\eta,\cdot)$ and $G(\eta,\cdot)$ can be
obtained one from the other through an elementary deformation $T$
of type (R) (see, e.g., Figure~\ref{RFbeta}).
\begin{figure}[htbp]
\psfrag{f}{$\overline{h}=G(0,\cdot)$}\psfrag{f2}{$G(\eta,\cdot),\eta>0$}\psfrag{f1}{$G(\eta,\cdot),\eta<0$}\psfrag{R}{(R)}\psfrag{K3}{(K$_3$)}\psfrag{p}{$\overline{p}$}\psfrag{q}{$\overline{q}$}
\begin{center}
\includegraphics[width=12cm]{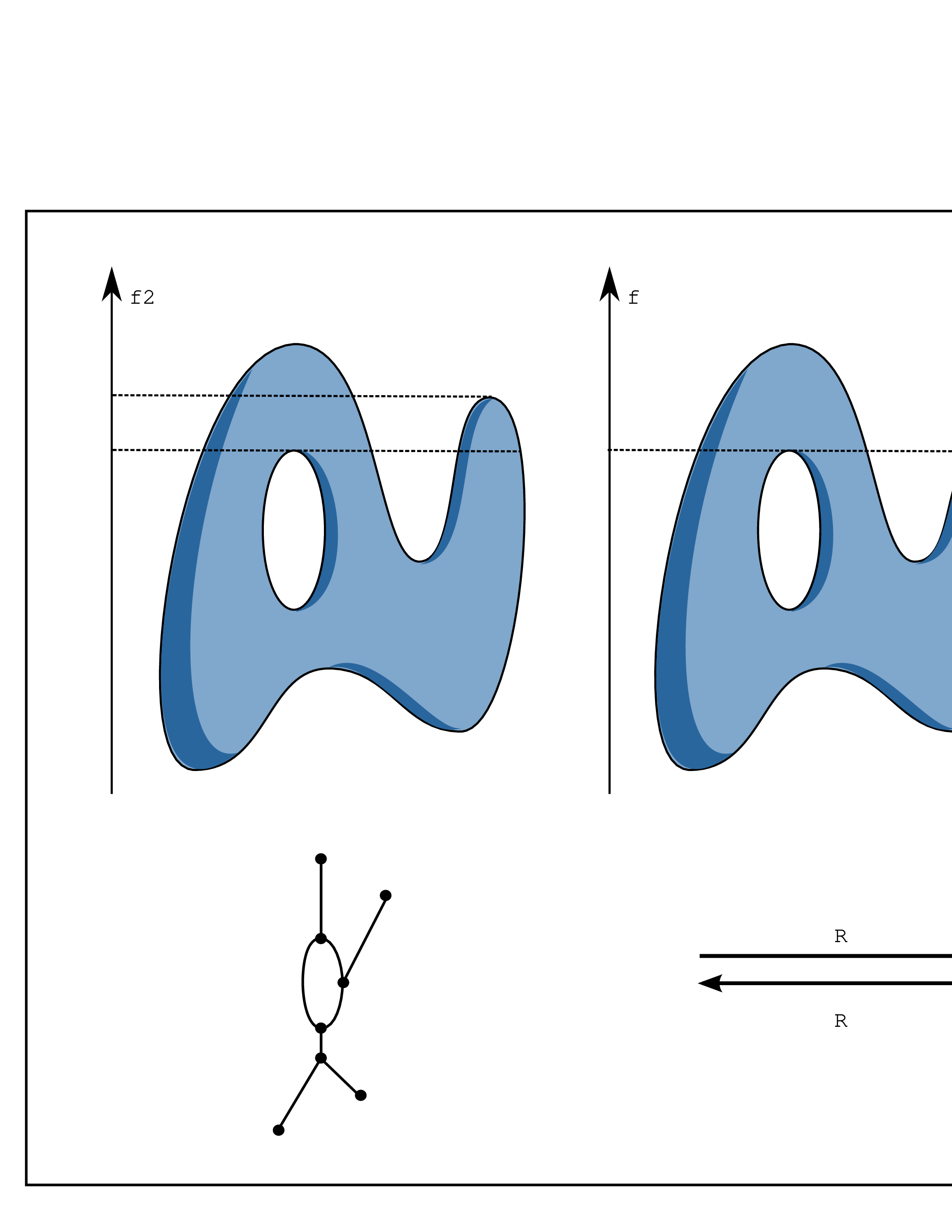}
\end{center}
\caption{\footnotesize{Center: A function
$\overline{h}\in\F^1_{\beta}(\M)$ as in case (1); left-right: The
universal deformation $G(\eta,\cdot)$ with the associated  labeled
Reeb graphs for $\eta<0$ and $\eta>0$.}\label{RFbeta}}
\end{figure}

In the situation (2), the following elementary deformations need to be
considered:
\begin{itemize}
\item If $\overline{p}$ and $\overline{q}$ are as in Figure
\ref{Saddles} $(a)$, then, for every $\eta>0$ sufficiently small, the
labeled Reeb graphs of $G(-\eta,\cdot)$ and $G(\eta,\cdot)$ can be
obtained one from the other through an elementary deformation $T$
of type (K$_1$) (see, e.g., Figure \ref{K1}).
\item If $\overline{p}$ and $\overline{q}$ are as in Figure
\ref{Saddles} $(b)$, then, for every $\eta>0$ sufficiently small, the
labeled Reeb graphs of $G(-\eta,\cdot)$ and $G(\eta,\cdot)$ can be
obtained one from the other through an elementary deformation $T$
of type (K$_3$) or (K$_2$) (see, e.g., Figure \ref{K2K3}).

\item If $\overline{p}$ and $\overline{q}$ are as in Figure
\ref{Saddles} $(c)$ or $(d)$, then, for every $\eta>0$ sufficiently
small, the labeled Reeb graphs of $G(-\eta,\cdot)$ and
$G(\eta,\cdot)$ can be obtained one from the other through an
elementary deformation $T$ of type (R) (see, e.g., Figures
\ref{R1}-\ref{R2}).
\end{itemize}
In all the cases, for every $\eta>0$ sufficiently small, the cost
of the considered deformation $T$ is:
{\setlength\arraycolsep{2pt}\begin{eqnarray}\label{costbeta}
c(T)&=&|\overline{h}(\overline{p})-\eta-(\overline{h}(\overline{p})+\eta)|=2\eta.
\end{eqnarray}}
In conclusion, from equalities (\ref{costalpha}) and
(\ref{costbeta}), for every $\eta>0$ sufficiently small, we get
$$d_E((\Gamma_{G(-\eta,\cdot)},\ell_{G(-\eta,\cdot)}),(\Gamma_{G(\eta,\cdot)},\ell_{G(\eta,\cdot)}))\le\max\left\{2\cdot\left(\frac{\eta}{3}\right)^{3/2},2\eta\right\}.$$
Thus, for every $\delta>0$, we can always take a value $|s|$
sufficiently small that $|\eta(s)|$ results small enough to imply
the following inequality:
{\setlength\arraycolsep{2pt}\begin{eqnarray}\label{costalphabeta}
d_E((\Gamma_{G(-\eta(s),\cdot)},\ell_{G(-\eta(s),\cdot)}),(\Gamma_{G(\eta(s),\cdot)},\ell_{G(\eta(s),\cdot)}))\le\delta/3.
\end{eqnarray}}
\begin{figure}[htbp]
\psfrag{f}{$\overline{h}=G(0,\cdot)$}\psfrag{f1}{$G(\eta,\cdot),\eta>0$}\psfrag{f2}{$G(\eta,\cdot),\eta<0$}\psfrag{R}{(R)}
\psfrag{K1}{(K$_1$)}\psfrag{K2}{(K$_2$)}\psfrag{K3}{(K$_3$)}\psfrag{p}{$\overline{p}$}\psfrag{q}{$\overline{q}$}
\centering
\includegraphics[width=12cm]{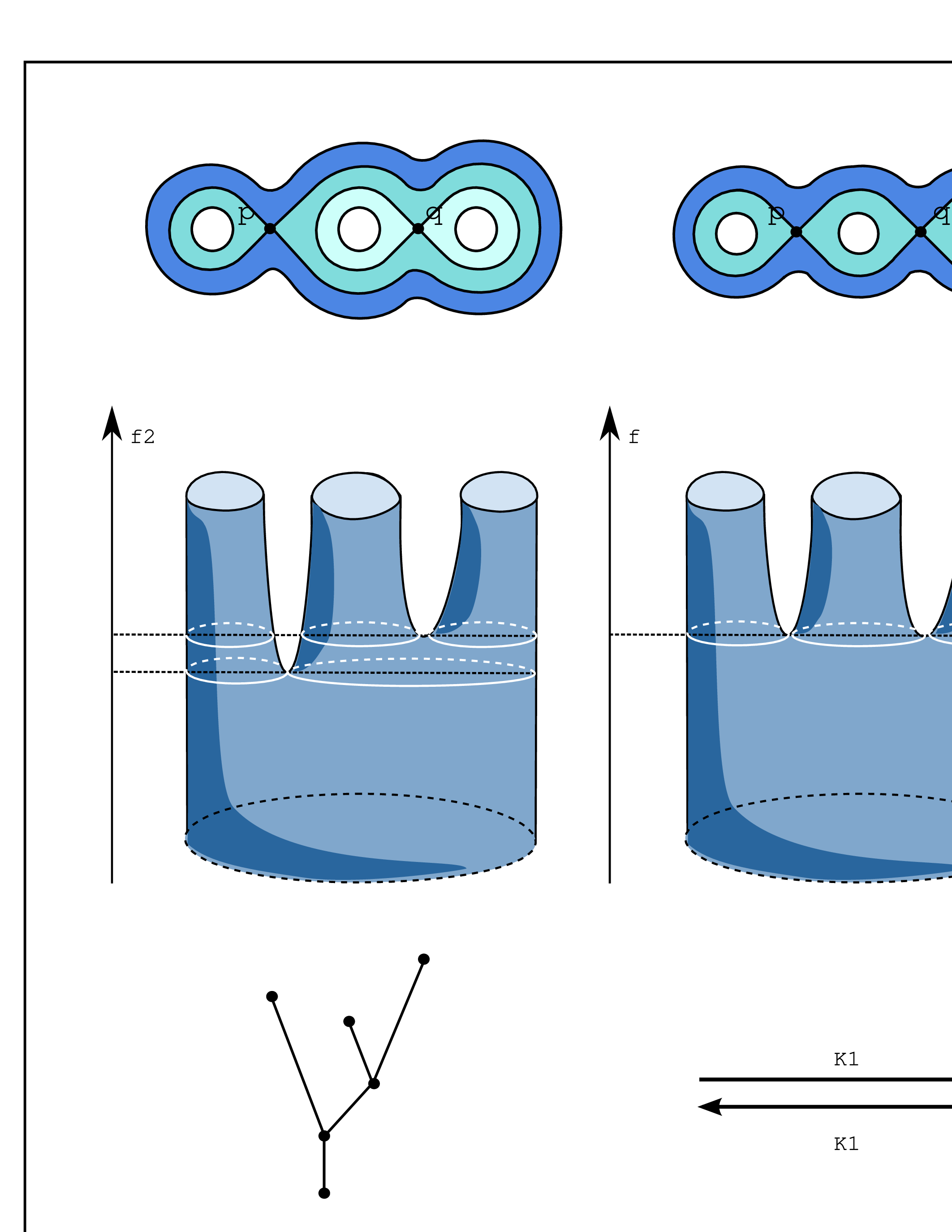}
\caption{\footnotesize{Center: A function
$\overline{h}\in\F^1_{\beta}(\M)$ as in case (2) with
$\overline{p},\overline{q}$ as in Figure \ref{Saddles}~$(a)$;
left-right: The universal deformation $G(\eta,\cdot)$ with the
associated  labeled Reeb graphs for $\eta<0$ and
$\eta>0$.}\label{K1}}
\psfrag{f}{$\overline{h}=G(0,\cdot)$}\psfrag{f1}{$G(\eta,\cdot),\eta>0$}\psfrag{f2}{$G(\eta,\cdot),\eta<0$}\psfrag{R}{(R)}
\psfrag{K1}{(K$_1$)}\psfrag{K2}{(K$_2$)}\psfrag{K3}{(K$_3$)}\psfrag{p}{$\overline{p}$}\psfrag{q}{$\overline{q}$}
\centering
\includegraphics[width=12cm]{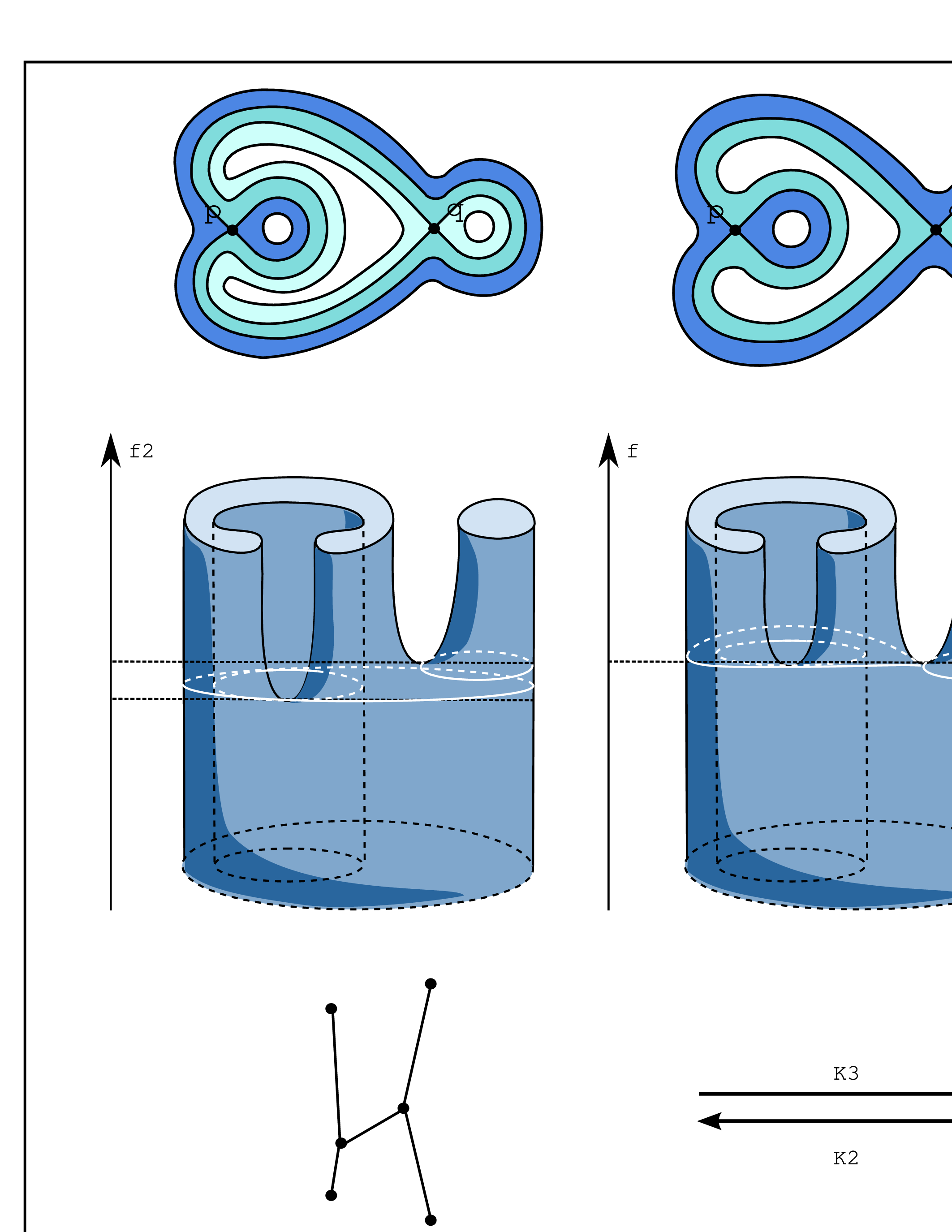}
\caption{\footnotesize{Center: A function
$\overline{h}\in\F^1_{\beta}(\M)$ as in case (2) with
$\overline{p},\overline{q}$ as in Figure \ref{Saddles}~$(b)$;
left-right: The universal deformation $G(\eta,\cdot)$ with the
associated  labeled Reeb graphs for $\eta<0$ and
$\eta>0$.}\label{K2K3}}
\end{figure}

\begin{figure}[htbp]
\psfrag{f}{$\overline{h}=G(0,\cdot)$}\psfrag{f1}{$G(\eta,\cdot),\eta>0$}\psfrag{f2}{$G(\eta,\cdot),\eta<0$}\psfrag{R}{(R)}
\psfrag{K1}{(K$_1$)}\psfrag{K2}{(K$_2$)}\psfrag{K3}{(K$_3$)}\psfrag{p}{$\overline{p}$}\psfrag{q}{$\overline{q}$}
\centering
\includegraphics[width=12cm]{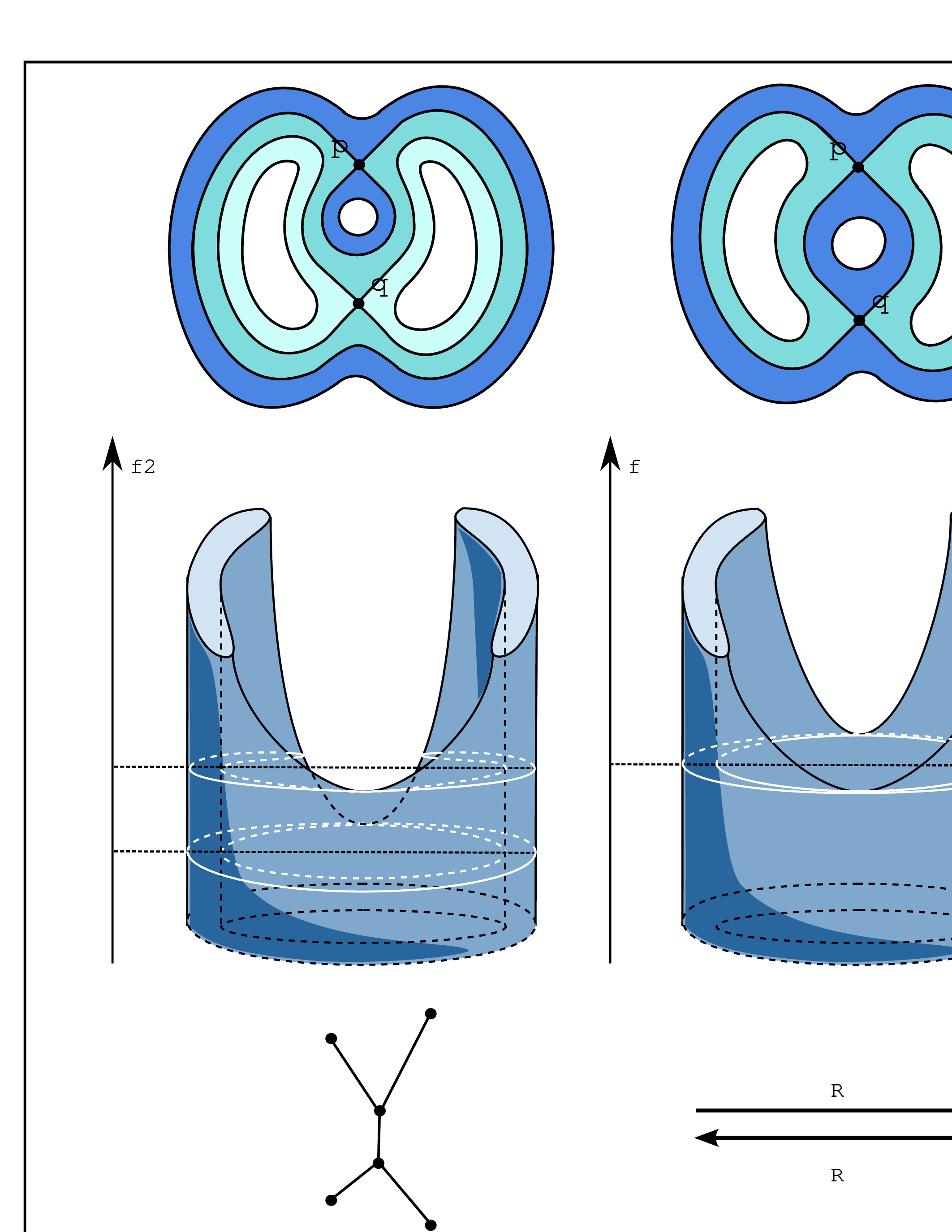}
\caption{\footnotesize{Center: A function
$\overline{h}\in\F^1_{\beta}(\M)$ as in case (2) with
$\overline{p},\overline{q}$ as in Figure \ref{Saddles}~$(c)$;
left-right: The universal deformation $G(\eta,\cdot)$ with the
associated  labeled Reeb graphs for $\eta<0$ and
$\eta>0$.}\label{R1}}
\psfrag{f}{$\overline{h}=G(0,\cdot)$}\psfrag{f1}{$G(\eta,\cdot),\eta>0$}\psfrag{f2}{$G(\eta,\cdot),\eta<0$}\psfrag{R}{(R)}
\psfrag{K1}{(K$_1$)}\psfrag{K2}{(K$_2$)}\psfrag{K3}{(K$_3$)}\psfrag{p}{$\overline{p}$}\psfrag{q}{$\overline{q}$}
\centering
\includegraphics[width=12cm]{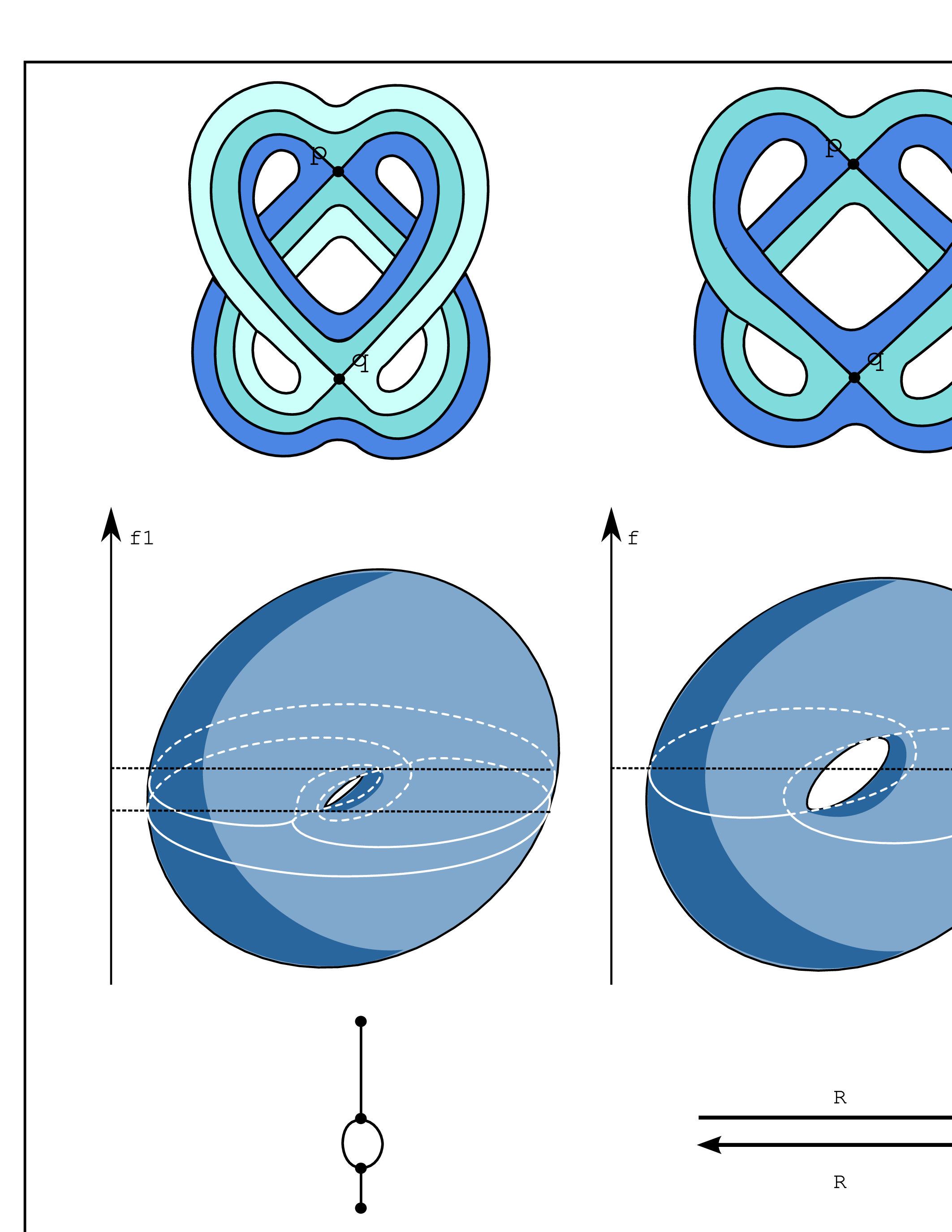}\\
\caption{\footnotesize{Center: A function
$\overline{h}\in\F^1_{\beta}(\M)$ as in case (2) with
$\overline{p},\overline{q}$ as in Figure \ref{Saddles}~$(d)$;
left-right: The universal deformation $G(\eta,\cdot)$ with the
associated  labeled Reeb graphs for $\eta<0$ and
$\eta>0$.}\label{R2}}
\end{figure}

\end{proof}

\begin{lem}\label{pathstability}
If $\h(\lambda)$ belongs to $\F^0(\M)$ for every $\lambda \in
[0,1]$ apart from one value $0<\overline{\lambda}<1$ at which $\h$
transversely intersects $\F^1(\M)$, then
$d_E((\Gamma_{f},\lf),(\Gamma_{g},\Lg))\le {\|f-g\|}_{C^0}.$
\end{lem}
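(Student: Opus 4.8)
The plan is to split the path $h$ at its single transversal intersection with $\F^1(\M)$ and to recombine the three resulting estimates through the triangle inequality for $d_E$, which is available since Theorem~\ref{editdist} establishes that $d_E$ is a pseudo-metric. Fix an arbitrary $\delta>0$. Applying Lemma~\ref{pointonF1} at the point $h(\overline\lambda)\in\F^1(\M)$, I obtain $\lambda',\lambda''$ with $0<\lambda'<\overline\lambda<\lambda''<1$ such that
$$d_E\big((\Gamma_{h(\lambda')},\ell_{h(\lambda')}),(\Gamma_{h(\lambda'')},\ell_{h(\lambda'')})\big)\le\delta.$$

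Next I would bound the two outer sub-paths using Lemma~\ref{pathstabilityF0}. The convex combination of $f=h(0)$ and $h(\lambda')$ satisfies $(1-t)f+t\,h(\lambda')=h(t\lambda')$ for $t\in[0,1]$, so it traverses exactly the functions $h(s)$ with $s\in[0,\lambda']$; since $h$ meets $\F^1(\M)$ only at $\overline\lambda>\lambda'$, this sub-path lies entirely in $\F^0(\M)$. Hence Lemma~\ref{pathstabilityF0} applies to the pair $f,\,h(\lambda')$, and combined with Lemma~\ref{pathH} it yields
$$d_E\big((\Gamma_f,\lf),(\Gamma_{h(\lambda')},\ell_{h(\lambda')})\big)\le\|f-h(\lambda')\|_{C^0}=\lambda'\,\|f-g\|_{C^0}.$$
Symmetrically, the convex combination of $h(\lambda'')$ and $g=h(1)$ traverses $h(s)$ for $s\in[\lambda'',1]$, which again lies in $\F^0(\M)$, so
$$d_E\big((\Gamma_{h(\lambda'')},\ell_{h(\lambda'')}),(\Gamma_g,\Lg)\big)\le\|h(\lambda'')-g\|_{C^0}=(1-\lambda'')\,\|f-g\|_{C^0}.$$

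Finally I would assemble the three pieces. By the triangle inequality,
$$d_E\big((\Gamma_f,\lf),(\Gamma_g,\Lg)\big)\le\big(\lambda'+(1-\lambda'')\big)\,\|f-g\|_{C^0}+\delta.$$
Since $\lambda'<\lambda''$ we have $\lambda'+(1-\lambda'')=1-(\lambda''-\lambda')\le1$, so the right-hand side is at most $\|f-g\|_{C^0}+\delta$. As $\delta>0$ was arbitrary, letting $\delta\to0$ gives the claim. The one point requiring care is precisely this last step: the values $\lambda',\lambda''$ produced by Lemma~\ref{pointonF1} depend on $\delta$ and need not converge to $\overline\lambda$, but this causes no difficulty because I do not attempt to take a limit in $\lambda',\lambda''$; I merely use the uniform factor bound $\lambda'+(1-\lambda'')\le1$ and drive $\delta$ to zero.
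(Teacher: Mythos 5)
Your proof is correct and follows essentially the same route as the paper's: invoke Lemma~\ref{pointonF1} to bridge the singular value $\overline{\lambda}$ at cost at most $\delta$, bound the two outer sub-paths via Lemma~\ref{pathstabilityF0} and Lemma~\ref{pathH}, and combine with the triangle inequality before letting $\delta\to 0$. Your explicit check that the convex combination of $f$ and $h(\lambda')$ reparametrizes the restriction of $h$ to $[0,\lambda']$, and your closing remark on why the $\delta$-dependence of $\lambda',\lambda''$ is harmless, are details the paper leaves implicit but they match its argument exactly.
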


\begin{proof}
Let  $\overline{h}=h(\overline{\lambda})$. By Lemma
\ref{pointonF1},  for every real number $\delta>0$,we can find two
values $0<\lambda'<\overline{\lambda}<\lambda ''<1$ such that
$d_E((\Gamma_{h(\lambda')},\ell_{h(\lambda')}),(\Gamma_{h(\lambda'')},\ell_{h(\lambda'')}))\le
\delta.$

 Applying the triangle inequality, we have:
{\setlength\arraycolsep{2pt}\begin{eqnarray*}
d_E((\Gamma_{f},\lf),(\Gamma_{g},\Lg))&\le&d_E((\Gamma_{f},\lf),(\Gamma_{h(\lambda')},\ell_{h(\lambda')}))+d_E((\Gamma_{h(\lambda')},\ell_{h(\lambda')}),
(\Gamma_{h(\lambda'')},\ell_{h(\lambda'')}))\\&&+d_E((\Gamma_{h(\lambda'')},\ell_{h(\lambda'')}),(\Gamma_{g},\Lg)).
\end{eqnarray*}}
Moreover, we get
$$d_E((\Gamma_{f},\lf),(\Gamma_{h(\lambda')},\ell_{h(\lambda')}))\le
\|f-h(\lambda')\|_{C^0}=\lambda'\cdot \|f-g\|_{C^0},$$ and
$$d_E((\Gamma_{h(\lambda'')},\ell_{h(\lambda'')}),(\Gamma_{g},\Lg))\le
\|h(\lambda'')-g\|_{C^0}=(1-\lambda'')\cdot \|f-g\|_{C^0},$$ where
the inequalities follow from Lemma \ref{pathstabilityF0}, and
equalities from Lemma \ref{pathH} with $f=h(0),g=h(1)$. Hence,
$$d_E((\Gamma_{f},\lf),(\Gamma_{g},\Lg))\le(1+\lambda'-\lambda'')\cdot\|f-g\|_{C^0}+\delta.$$
In conclusion, given that $0<\lambda'<\lambda''$, the inequality
$d_E((\Gamma_{f},\lf),(\Gamma_{g},\Lg))\le \|f-g\|_{C^0}+\delta$
holds. This yields the claim by the arbitrariness of $\delta$.
\end{proof}

We are now ready to prove the stability Theorem \ref{global}.

\begin{proof}[Proof of Theorem \ref{global}]
Recall from \cite{Hi76} that $\F^0(\M)$ is open in $\F(\M)$
endowed with the $C^2$ topology. Thus, for every sufficiently
small real number $\delta>0$, the neighborhoods $N(f,\delta)$ and
$N(g,\delta)$ are contained in $\F^0(\M)$. Take $\widehat{f}\in
N(f,\delta)$ and $\widehat{g} \in N(g,\delta)$ such that the path
$h:[0,1]\to \F(\M)$, with
$h(\lambda)=(1-\lambda)\widehat{f}+\lambda\widehat{g}$, belongs to
$\F^0(\M)$ for every $\lambda \in [0,1]$, except for at most a
finite number $n$ of values, $\mu_1,\mu_2,\dots,\mu_n$, at which
$h$ transversely intersects $\F^1(\M)$. We begin by proving our
statement for $\widehat{f}$ and $\widehat{g}$, and then show its
validity for $f$ and $g$. We proceed by induction on $n$. If $n=0$
or $n=1$, the inequality
$d_E((\Gamma_{\widehat{f}},\ell_{\widehat{f}}),(\Gamma_{\widehat{g}},\ell_{\widehat{g}}))\le{\|\widehat{f}-\widehat{g}\|}_{C^0}$
holds because of Lemma~\ref{pathstabilityF0}
or~\ref{pathstability}, respectively. Let us assume the claim is
true for $n\ge 1$, and prove it for $n+1$. Let
$0<\mu_1<\lambda_1<\mu_2<\lambda_2<\ldots<\mu_n<\lambda_n<\mu_{n+1}<1$,
with $h(0)=\widehat{f}$, $h(1)=\widehat{g}$,
$h(\mu_i)\in\F^1(\M)$, for every $i=1,\ldots,n+1$, and
$h(\lambda_j)\in\F^0(\M)$, for every $j=1,\ldots,n$. We consider
$h$ as the concatenation of the paths $h^1,h^2:[0,1]\to \F(\M)$,
defined, respectively, as
$h^1(\lambda)=(1-\lambda)\widehat{f}+\lambda h(\lambda_n)$, and
$h^2(\lambda)=(1-\lambda)h(\lambda_n)+\lambda \widehat{g}$. The
path $h^1$ transversally intersect $\F^1(\M)$ at $n$ values
$\mu_1,\ldots,\mu_n$. Hence, by the inductive hypothesis, we have
$d_E((\Gamma_{\widehat{f}},\ell_{\widehat{f}}),(\Gamma_{h(\lambda_n)},\ell_{h(\lambda_n)}))\le
{\|\widehat{f}-h(\lambda_n)\|}_{C^0}$. Moreover, the path $h^2$
transversally intersect $\F^1(\M)$ only at the value $\mu_{n+1}$.
Consequently, by Lemma~\ref{pathstability}, we have
$d_E((\Gamma_{h(\lambda_n)},\ell_{h(\lambda_n)}),(\Gamma_{\widehat{g}},\ell_{\widehat{g}}))\le
{\|h(\lambda_n)-\widehat{g}\|}_{C^0}$. Using the triangle
inequality and Lemma~\ref{pathH}, we can conclude that:
{\setlength\arraycolsep{2pt}\begin{eqnarray}\label{triangular}
d_E((\Gamma_{\widehat{f}},\ell_{\widehat{f}}),(\Gamma_{\widehat{g}},\ell_{\widehat{g}}))&\le&d_E((\Gamma_{\widehat{f}},\ell_{\widehat{f}}),(\Gamma_{h(\lambda_n)},\ell_{h(\lambda_n)}))+
d_E((\Gamma_{h(\lambda_n)},\ell_{h(\lambda_n)}),(\Gamma_{\widehat{g}},\ell_{\widehat{g}}))\nonumber\\
&\le
&\lambda_n{\|\widehat{f}-\widehat{g}\|}_{C^0}+(1-\lambda_n){\|\widehat{f}-\widehat{g}\|}_{C^0}={\|\widehat{f}-\widehat{g}\|}_{C^0}.
\end{eqnarray}}

Let us now estimate $d_E((\Gamma_{f},\lf),(\Gamma_{g},\Lg))$. By
the triangle inequality, we have:
$$d_E((\Gamma_{f},\lf),(\Gamma_{g},\Lg))\le
d_E((\Gamma_{f},\lf),(\Gamma_{\widehat{f}},\ell_{\widehat{f}}))+d_E((\Gamma_{\widehat{f}},\ell_{\widehat{f}}),(\Gamma_{\widehat{g}},\ell_{\widehat{g}}))+d_E((\Gamma_{\widehat{g}},\ell_{\widehat{g}}),(\Gamma_{g},\Lg)).$$
Since $\widehat{f}\in N(f,\delta)\subset \F^0(\M)$ and
$\widehat{g} \in N(g,\delta)\subset \F^0(\M)$, the following facts
hold: $(a)$ for every $\lambda\in[0,1]$, $(1-\lambda)f+\lambda
\widehat{f},(1-\lambda)g+\lambda \widehat{g}\in\F^0(\M)$; $(b)$
${\|f-\widehat{f}\|}_{C^0}\le\delta$ and
${\|\widehat{g}-g\|}_{C^0}\le\delta$. Hence, from $(a)$ and
Lemma~\ref{pathstabilityF0}, we get
$d_E((\Gamma_{f},\lf),(\Gamma_{\widehat{f}},\ell_{\widehat{f}}))\le{\|f-\widehat{f}\|}_{C^0}$,
and
$d_E((\Gamma_{g},\Lg),(\Gamma_{\widehat{g}},\ell_{\widehat{g}}))\le{\|\widehat{g}-g\|}_{C^0}$.
Using inequality (\ref{triangular}) and the triangle inequality of
$\|\cdot\|_{C^0}$, we deduce that
{\setlength\arraycolsep{2pt}\begin{eqnarray*}
d_E((\Gamma_{f},\lf),(\Gamma_{g},\Lg))&\le &
{\|f-\widehat{f}\|}_{C^0}+\|\widehat{f}-\widehat{g}\|_{C^0}+{\|\widehat{g}-g\|}_{C^0}\\
&\le &
\|f-g\|_{C^0}+2({\|f-\widehat{f}\|}_{C^0}+{\|\widehat{g}-g\|}_{C^0}).
\end{eqnarray*}}
Hence, from $(b)$, we have
$d_E((\Gamma_{f},\lf),(\Gamma_{g},\Lg))\le\|f-g\|_{C^0}+4\delta$.
This yields the conclusion by the arbitrariness of $\delta$.
\end{proof}

\section{Relationships with other stable metrics}\label{lowbound}

In this section, we consider relationships between the edit
distance and other  metrics for shape comparison: the natural pseudo-distance
 between functions \cite{DoFr04}, the functional distortion distance between Reeb graphs \cite{BaGeWa},
and the bottleneck distance between persistence diagrams
\cite{CoEdHa07}. More precisely, the main result we are going to show states that
the natural pseudo-distance between two simple Morse functions $f$
and $g$ and the edit distance between the corresponding Reeb
graphs actually coincide (Theorem \ref{equal}). As a consequence, we deduce that the edit distance is a metric
(Corollary \ref{defpos}), and that it is more discriminative than
the bottleneck distance between persistence
diagrams  (Corollary \ref{bottle}) and the functional distortion distance between Reeb graphs (Corollary \ref{distortion}). \\

The natural pseudo-distance is a dissimilarity measure between any
two  functions defined on the same compact manifold obtained by
minimizing the difference in the functions via a
re-parameterization of the manifold \cite{DoFr04}. In general, the natural pseudo-distance is only a pseudo-metric.
However it turns out to be a metric in some particular cases such
as the case of simple Morse functions on a smooth closed connected
surface, considered up to $R$-equivalence, as proved in
\cite{CaDiLa13}. We give the definition in this context.

\begin{defi}\label{pseudod}
The \emph{natural pseudo-distance} between $R$-equivalence classes
of simple Morse functions $f,g$ on the same surface $\M$ is
defined as
$$
d_N([f],[g])= \underset{\xi \in {\mathcal D}(\M)}\inf \|f-g\circ
\xi\|_{C^0},
$$
where ${\mathcal D}(\M)$ is the set of self-diffeomorphisms on
$\M$.
\end{defi}

In order to study $d_N$, it is often useful to consider the
following  fact.

\begin{prop}\label{homeo}
Letting  ${\mathcal H}(\M)$ be the set of self-homeomorphisms on
$\M$, it holds that $d_N([f],[g])= \underset{\xi \in {\mathcal H}(\M)}\inf
\|f-g\circ \xi\|_{C^0}$.
\end{prop}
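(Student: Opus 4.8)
The plan is to use the inclusion $\mathcal D(\M)\subseteq\mathcal H(\M)$ together with the density of diffeomorphisms among homeomorphisms of a surface. Since every self-diffeomorphism is in particular a self-homeomorphism, minimizing over the larger class $\mathcal H(\M)$ can only lower the value, so at once
$$\inf_{\xi\in\mathcal H(\M)}\|f-g\circ\xi\|_{C^0}\le\inf_{\xi\in\mathcal D(\M)}\|f-g\circ\xi\|_{C^0}=d_N([f],[g]).$$
All the content therefore lies in the reverse inequality.

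To prove it, I would fix $\eps>0$ and choose $\xi\in\mathcal H(\M)$ with $\|f-g\circ\xi\|_{C^0}<\inf_{\zeta\in\mathcal H(\M)}\|f-g\circ\zeta\|_{C^0}+\eps$. The essential analytic input is the classical two-dimensional smoothing theorem: on a smooth compact surface the self-diffeomorphisms are dense in the self-homeomorphisms with respect to the uniform ($C^0$) topology. Hence there exists $\xi'\in\mathcal D(\M)$ that is $C^0$-close to $\xi$ to any prescribed accuracy. Since $\M$ is compact, $g$ is uniformly continuous, so the uniform proximity of $\xi'$ to $\xi$ forces $\|g\circ\xi-g\circ\xi'\|_{C^0}$ to be arbitrarily small; in particular we may arrange $\|g\circ\xi-g\circ\xi'\|_{C^0}<\eps$.

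A single application of the triangle inequality for $\|\cdot\|_{C^0}$ then gives
$$d_N([f],[g])\le\|f-g\circ\xi'\|_{C^0}\le\|f-g\circ\xi\|_{C^0}+\|g\circ\xi-g\circ\xi'\|_{C^0}<\inf_{\zeta\in\mathcal H(\M)}\|f-g\circ\zeta\|_{C^0}+2\eps,$$
and letting $\eps\to 0$ yields $d_N([f],[g])\le\inf_{\xi\in\mathcal H(\M)}\|f-g\circ\xi\|_{C^0}$, which together with the first inequality proves the equality. The one genuine obstacle is the smoothing step: one must invoke an approximation result that is \emph{uniform} (so that post-composition with the fixed continuous $g$ remains controlled through its uniform continuity) and valid for surfaces, keeping in mind that density of diffeomorphisms in homeomorphisms is a low-dimensional phenomenon and need not persist in higher dimensions.
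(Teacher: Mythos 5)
Your proposal is correct and follows essentially the same route as the paper: the easy inequality from the inclusion $\mathcal D(\M)\subseteq\mathcal H(\M)$, then uniform ($C^0$) approximation of a near-optimal homeomorphism by a diffeomorphism (the paper cites \cite[Cor.~1.18]{Wh61} for exactly this density statement), followed by the uniform continuity of $g$ on the compact surface and the triangle inequality. The only difference is cosmetic: the paper phrases the second half as a proof by contradiction, while you run a direct $\eps$-argument.
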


\begin{proof}
Let $d=d_N([f],[g])$. Clearly $\underset{\xi \in {\mathcal
H}(\M)}\inf \|f-g\circ \xi\|_{C^0}\le d$. By contradiction, assuming that
$\underset{\xi \in {\mathcal H}(\M)}\inf \|f-g\circ \xi\|_{C^0}<
d$, there  exists a homeomorphism $\overline{\xi}$ such that
$\|f-g\circ \overline{\xi}\|_{C^0}<d$. On the other hand, by
\cite[Cor. 1.18]{Wh61}, for every metric $\delta$ on $\M$ and for
every $n\in\N$, there exists a diffeomorphism $\xi_n:\M\to\M$ such
that $\delta(\xi_n(p),\overline{\xi}(p))<1/n$, for every $p\in
\M$. Hence, by the continuity of $g$, and applying the reverse
triangle inequality, we deduce that
$$\lim_{n\to\infty}\left|\|f-g\circ \overline{\xi}\|_{C^0}-\|f-g\circ \xi_n\|_{C^0}\right|\le \lim_{n\to\infty} \|g\circ\xi_n-g\circ \overline{\xi}\|_{C^0}=0.$$
Therefore, for $n$ sufficiently large, there  exists a
diffeomorphism $\xi_n$ such that $\|f-g\circ \xi_n\|_{C^0}< d$,
yielding a contradiction.
\end{proof}

The following Lemmas \ref{c(T)=R}-\ref{c(T)=K} state that the cost
of each elementary deformation upper-bounds the natural
pseudo-distance. Their proofs deploy the concepts of elementary
cobordism and rearrangement, whose detailed treatment  can be
found in \cite{Mi65}.

\begin{lem}\label{c(T)=R}
For every elementary deformation $T\in {\mathcal
T}((\Gamma_f,\lf),(\Gamma_g,\Lg))$ of type (R), $c(T)\ge d_N(
[f],[g])$.
\end{lem}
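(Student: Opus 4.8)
The plan is to prove the equivalent inequality $d_N([f],[g])\le c(T)$ by producing, for every $\eps>0$, a simple Morse function $f'\in\F^0(\M)$ whose labeled Reeb graph is isomorphic to $(\Gamma_g,\Lg)$ and satisfies $\|f-f'\|_{C^0}\le c(T)+\eps$. Granting this, Proposition~\ref{unique}(2) applied to $f'$ and $g$ yields a diffeomorphism $\xi\in\mathcal D(\M)$ with $f'=g\circ\xi$, so that $d_N([f],[g])\le\|f-g\circ\xi\|_{C^0}=\|f-f'\|_{C^0}\le c(T)+\eps$, and the arbitrariness of $\eps$ gives the claim. Since $T$ is of type (R) the underlying graph is unchanged, $\Gamma_g=\Gamma_f$, and the only datum to realize is the new labeling $\ell=\Lg$, with $c(T)=\max_{v\in V(\Gamma_f)}|\lf(v)-\ell(v)|$. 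I would then distinguish the two possibilities allowed by Definition~\ref{elementaryDef}(R).

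If $\ell$ induces the same vertex order as $\lf$, I would simply reparametrize the values of $f$. Choose an increasing diffeomorphism $\rho$ of $\R$ with $\rho(\lf(v))=\ell(v)$ for every $v\in V(\Gamma_f)$, built by interpolating the shifts $\ell(v)-\lf(v)$ linearly between consecutive critical values and smoothing. Because the order is preserved and $|\ell(v)-\lf(v)|\le c(T)$ at each vertex, $\rho$ can be taken with $|\rho(t)-t|\le c(T)$ for all $t$ in the range of $f$. Then $f'=\rho\circ f$ is a simple Morse function with the same critical points as $f$ and with $\ell_{f'}=\ell$, whence $(\Gamma_{f'},\ell_{f'})=(\Gamma_f,\ell)\cong(\Gamma_g,\Lg)$, while $\|f-f'\|_{C^0}=\sup_{t\in\I f}|t-\rho(t)|\le c(T)$. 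This settles the case with $\eps=0$.

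The genuinely surface-theoretic case is when $\ell$ reverses the order of the two non-adjacent vertices $u_1,u_2$ with consecutive values $\lf(u_1)<\lf(u_2)$. No reparametrization of $\R$ reverses an order, so here I would invoke the rearrangement of elementary cobordisms \cite{Mi65}. The key qualitative fact is that, $u_1$ and $u_2$ being non-adjacent in $\Gamma_f$ with consecutive critical values, their invariant (ascending and descending) manifolds inside the slab $f^{-1}(]\lf(u_1)-\delta,\lf(u_2)+\delta[)$ are disjoint; this is exactly the hypothesis under which the critical values of $u_1$ and $u_2$ can be interchanged, keeping every other critical point and index fixed, so as to produce a simple Morse function $f'$ with $(\Gamma_{f'},\ell_{f'})\cong(\Gamma_g,\Lg)$ and with $f'=f$ outside that slab.

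The main obstacle is quantitative. A naive rearrangement only bounds $\|f-f'\|_{C^0}$ by the width of the slab, which may be as large as $2c(T)$ (indeed $\lf(u_2)-\lf(u_1)<2c(T)$, and the same holds for the target values). To reach the sharp bound I would localize the construction to two disjoint tubular neighborhoods of the invariant manifolds of $u_1$ and of $u_2$, disjoint precisely by non-adjacency: near $u_1$ I move only its value to $\ell(u_1)$, near $u_2$ only its value to $\ell(u_2)$, and I leave $f$ essentially unchanged on the regular region between the tubes. Because these modifications have disjoint supports their effects are not added but taken in maximum, giving $\sup_{\M}|f-f'|\le\max\{|\lf(u_1)-\ell(u_1)|,|\lf(u_2)-\ell(u_2)|\}+O(\delta)\le c(T)+O(\delta)$. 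On the complement of the two tubes I would simultaneously apply the reparametrization $\rho$ of the first case to absorb any order-preserving relabeling of the remaining vertices, again contributing at most $c(T)$. Letting the slab thickness $\delta\to0$ produces the family $f'$ required in the first paragraph, and the conclusion follows.
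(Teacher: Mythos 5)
Your strategy is sound and leads to a correct proof, but it is genuinely different from the paper's. The paper argues directly on the surface: it cuts $\M$ along the level sets $f^{-1}(c_i\pm\eps)$ and $g^{-1}(c'_i\pm\eps)$ into disks, pairs of pants and cylinders, builds a homeomorphism piece by piece by following the gradient flows of $f$ and $g$, glues, and obtains a self-homeomorphism $\xi$ with $\xi|_{K_f}=\Phi$ and $\|f-g\circ\xi\|_{C^0}=\max_i|c_i-c'_i|=c(T)$; Proposition~\ref{homeo} then converts this into the bound on $d_N$. You instead produce a function $f'$ near $f$ whose labeled Reeb graph realizes the target labeling and invoke Proposition~\ref{unique}(2) to get the diffeomorphism. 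This is not circular, since statement (2) of that proposition is proved via \cite{Ku98,Sh03} and \cite[Lemma 1]{Ku09} independently of the present lemma; but note that the paper's self-contained construction is precisely what justifies its remark that Proposition~\ref{unique}(2) can \emph{alternatively} be derived from this lemma, a byproduct your proof gives up. Your route is essentially the one the paper itself uses for the (K$_i$) deformations in Lemma~\ref{c(T)=K} (Milnor's Preliminary Rearrangement plus the interpolating function $G$), transplanted to type (R); it buys a shorter argument at the cost of leaning on the cited uniqueness results. Two details in your order-reversing case should be repaired, though neither is fatal: first, the slab on which you rearrange must be wide enough to contain the \emph{target} values $\ell(u_1),\ell(u_2)$ as well as the old ones (they need not lie within $O(\delta)$ of $[\lf(u_1),\lf(u_2)]$), which is possible because $\ell$ preserves the order of $u_1,u_2$ relative to every other vertex; second, in the rearranged function $h=G(f,\mu)$ the values on the region between the two tubes are \emph{not} essentially unchanged --- there $h$ is the convex combination $(1-\mu)G(f,0)+\mu G(f,1)$ of the two reparametrizations, and the sharp bound $c(T)$ comes from the fact that a convex combination of deviations is dominated by their maximum, not from the modification being supported only inside the tubes.
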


\begin{proof}
Since $T$ is of type (R),  there exists an edge preserving
bijection $\Phi:V(\Gamma_f)\to V(\Gamma_g)$. Hence, $f$ and $g$
have the same number of critical points of the same type:
$K_f=\{p_1,\ldots,p_n\}$, $K_g=\{p'_1,\ldots,p'_n\}$,  with
$\Phi(p_i)=p'_i$, and $p_i,p'_i$ both being of index 0,1, or 2.

Let $c_i=f(p_i)$ and $c'_i=g(p_i')$ for $i=1,\ldots,n$. We shall
construct a homeomorphism $\xi:\M\to\M$ such that
$\xi_{|K_f}=\Phi$ and ${\|f-g\circ
\xi\|}_{C^0}=\underset{i=1,\ldots, n}\max|c_i-c_i'|=c(T)$. By
Proposition~\ref{homeo}, this will yield the claim.

Let us endow $\M$ with a Riemannian metric, and consider the
smooth vector field $X=-\frac{\nabla f}{\|\nabla f\|^2}$ on
$\M\setminus K_f$, and the smooth vector field $Y=\frac{\nabla
g}{\|\nabla g\|^2}$ on $\M\setminus K_g$. Let us denote by
$\varphi_t(p)$ and $\psi_t(p)$ the flow lines defined by $X$ and
$Y$, on $\M\setminus K_f$ and $\M\setminus K_g$, respectively.
We observe that $f$ strictly decreases along $X$-trajectories,
while $g$ strictly increases along $Y$-trajectories. Moreover, no
two $X$-trajectories (resp. $Y$-trajectories) pass through the
same $p$. Hence, $\varphi_t(p)$ and $\psi_t(p)$ are injective
functions of $t$ and $p$, separately. By \cite[Prop. 1.3]{Pade82},
$\varphi$ and $\psi$ are continuous in $t$ and $p$ when restricted
to  compact submanifolds of $\M\setminus K_f$ and $\M\setminus
K_g$, respectively.

Let us fix a real number $\eps>0$ sufficiently small so that, for
 $i=1,\ldots,n$, $f^{-1}([c_i-\eps,c_i+\eps])\cap
K_f=\{p_i\}$ and $g^{-1}([c'_i-\eps,c'_i+\eps])\cap K_g=\{p'_i\}$.

In order to construct the desired homeomorphism $\xi$ on $\M$, the
main idea is to cut $\M$ into cobordisms and define suitable
homeomorphisms on each of these cobordisms that can be glued
together to obtain $\xi$. The fact that $\xi$ is not required to
be differentiable but only continuous facilitates the gluing
process.

Let us consider the cobordisms obtained cutting $\M$ along the
level curves $f^{-1}(c_i\pm\eps)$ and $g^{-1}(c'_i\pm\eps)$ for
$i=1,\ldots,n$.  According to whether these cobordisms contain
points of maximum, minimum, saddle points, or no critical points
at all, we treat the cases differently.\\

\noindent{\bf Case 1:} Let $p_i,p_i'$  be points of  maximum  or
minimum of $f$ and $g$, respectively. Let $D=D_i$  (resp. $D'=D_i'$,
) be the connected component of $f^{-1}([c_i-\eps,c_i+\eps])$
(resp. $g^{-1}([c'_i-\eps,c'_i+\eps])$) that contains $p_i$
(resp. $p'_i$). $D$ and $D'$ are topolological disks. Let
$\sigma^D:\partial D\to\partial D'$ be a given homeomorphism
between the boundaries of $D$ and $D'$.

\noindent{\bf Claim 1.} There exists a homeomorphism $\xi^D:D\to
D'$ such that:
\begin{itemize}
\item[$(a_1)$] $\xi^D|_{\partial D}=\sigma^D$;
 \item[$(b_1)$] $\underset{p\in D}\max|f(p)-g\circ \xi^D(p)|=|c_i-c'_i|.$
\end{itemize}
\noindent{\em Proof of Claim 1.} We first prove Claim 1 for
maxima. We set $\xi^D(p_i)=p'_i$, and, for every $p\in
D\setminus\{p_i\}$, $\xi^D(p)=p'$, where
$p'=\psi_{f(p)-c_i+\eps}\circ\sigma^D\circ\varphi_{f(p)-c_i+\eps}(p)$.
In plain words, for each $p\in D$ we follow the $X$-flow downwards
until the intersection with $f^{-1}(c_i-\eps)$; then we apply the
homeomorphism $\sigma^D$ to go from $f^{-1}(c_i-\eps)$ to
$g^{-1}(c_i'-\eps)$; finally, we follow the $Y$-flow upwards.

The function $\xi^D$ is injective as can be seen using the
aforementioned injectivity property of $\varphi$ and $\psi$.
Moreover, $\xi^D$ is surjective because, for every $p\in
D\setminus \{p_i\}$, there exists a flow line passing for $p$.
Furthermore, $\xi^D$ is continuous on $D\setminus\{p_i\}$ because
composition of continuous functions. The continuity can be
extended to the whole $D$ as can be seen taking a sequence $(q_j)$
in $D \setminus\{p_i\}$ converging to $p_i$. Since $\lim_jf(q_j)
= c_i$, by construction of $\xi^D$ it holds that
$\lim_jg(\xi^D(q_j))=c_i'$. We see  that
$\lim_j\xi^D(q_j)=p_i'$ because $p_i'$ is the only point of $D'$
where $g$ takes value equal to $c_i'$,. Therefore $\xi^D$ is continuous on $D$.
Moreover, since $\xi^D$ is a continuous bijection from a compact
space to a Hausdorff space, it is a homeomorphism. Finally,
property $(a_1)$  holds by construction and property  $(b_1)$
holds because, for every $p\in D$, $g(\xi^D(p))=f(p)+c_i'-c_i$.

To prove Claim 1 when  $p_i,p'_i$ are minimum points of $f$ and
$g$, it is sufficient to  replace $\varphi_{f(p)-c_i+\eps}(p)$ and
$\psi_{f(p)-c_i+\eps}(p)$ by $\varphi_{f(p)-c_i-\eps}(p)$ and
$\psi_{f(p)-c_i-\eps}(p)$, respectively.\\

\noindent{\bf Case 2:} Let $p_i,p'_i$ be two splitting saddle
points or two joining saddle points of $f$ and $g$, respectively,
and let $P$ and $P'$ be the connected component of
$f^{-1}([c_i-\eps,c_i+\eps])$ and $g^{-1}([c'_i-\eps,c'_i+\eps])$,
respectively,  that contain $p_i$ and $p'_i$.  Let
$\sigma^P:\partial^- P\to\partial^-P'$ be a given homeomorphism
between the lower boundaries of $P$ and $P'$.

\noindent{\bf Claim 2.} There exists a homeomorphism $\xi^P:P\to
P'$ such that:
\begin{itemize}
\item[$(a_2)$] $\xi^P|_{\partial^{-} P}=\sigma^P$; \item[$(b_2)$]
$\underset{p\in P}\max|f(p)-g\circ \xi^P(p)|=|c_i-c'_i|.$
\end{itemize}

\noindent{\em Proof of Claim 2.} Let us consider the case
$p_i,p'_i$ are two splitting saddle points of $f$ and $g$
respectively, so that $P$ and $P'$ are two upside-down pairs of
pants. We let $p_a$, $p_b$ be the only two points of intersection
of $f^{-1}(c_i-\eps/2)$ with the trajectories  of the gradient
vector field $X$ coming from $p_i$. Analogously, we let $p_a'$,
$p_b'$ be the only two points of intersection  of
$g^{-1}(c'_i-\eps/2)$ with the trajectories of the gradient vector
field $Y$ leading to $p_i'$.

The pair of pants $P$ can be decomposed into $P=M\cup N\cup O$
with $M=\{p\in P: f(p)\in[c_i-\eps,c_i-\eps/2]\}$, $N=\{p\in P:
f(p)\in[c_i-\eps/2,c_i]\}$ and $O=\{p\in P:
f(p)\in[c_i,c_i+\eps]\}$. Analogously,  the pair of pants $P'$ can
be decomposed into $P'=M'\cup N'\cup O'$ with $M'=\{p'\in P':
g(p')\in[c_i'-\eps,c_i'-\eps/2]\}$, $N'=\{p'\in P':
g(p')\in[c'_i-\eps/2,c'_i]\}$, and $O'=\{p'\in P':
g(p')\in[c'_i,c'_i+\eps]\}$.

The construction of $\xi^P$ is based on gluing three
homeomorphisms $\xi^M:M\to M'$, $\xi^N:N\to N'$, $\xi^O:O\to O'$
together.

First, we observe that, $M$ and $M'$ being cylinders, it is
possible to construct a homeomorphism $\xi^M$ that extends
$\sigma^P$ to $M$ in such a way that $\xi^M(p_a)=p_a'$ and
$\xi^M(p_b)=p_b'$, also sending the level-sets of $f$ into those
of $g$. In this way $\underset{p\in M}\max|f(p)-g\circ
\xi^M(p)|=|c_i-c'_i|.$

Next, we define  $\xi^N$   by setting $\xi^N(p_i)=p'_i$, and, for
every $p\ne p_i$,  $\xi^N(p)=p'$, where
$p'=\psi_{f(p)-c_i+\eps/2}\circ\xi^M\circ\varphi_{f(p)-c_i+\eps/2}(p)$.
It agrees with $\xi^M$ on $\partial M\cap \partial N$ and
$\underset{p\in N}\max|f(p)-g\circ \xi^N(p)|=|c_i-c'_i|$.
Moreover,  $\xi^N$ is bijective and continuous  on $N\setminus \{
p_i\}$ by arguments similar to those used in the proof of Claim 1.
To see that continuity  extends to $p_i$, let $(q_j)$ be a
sequence converging to $ p_i$. The sequence
$(\varphi_{f(p)-c_i+\eps/2}(q_j))$  has at most two accumulating
points, precisely   the points $p_a$, and $p_b$. By the
construction of $\xi^M$, $\xi^M(p_a)=p_a'$ and $\xi^M(p_b)=p_b'$,
hence the sequence $(\xi^N(q_j))$ converges to $p_i'$. In
conclusion, $\xi^N$ is bijective and continuous, therefore it is a
homeomorphism.

Finally, we construct $\xi^O$  by using again the trajectories of
$X$ and $Y$: for each $p\in O$ we follow the flow of $X$ downwards
until the intersection with $f^{-1}(c_i)$. If the intersection
point $q$ is different from  $p_i$, we set $\xi^O(p)$ equal to the
point $p'$ on the trajectory of $ \xi^N(q)$ such that
$p'=\psi_{f(p)-c_i}(\xi^N(q))$. Otherwise, if $q=p_i$, we consider
a sequence $(r_j)$ of points in the same connected component of
$O\setminus \{p_i\}$ as $p$ and converging to $p$. The
intersection of $f^{-1}(c_i)$ with the downward flow through  $r_j$, $j\in\N$, gives a sequence $(q_j)$ converging to $p_i$ and
belonging to one and the same component of  $f^{-1}(c_i)\setminus
\{p_i\}$ as $p$. By the continuity of $\xi^N$ the sequence
$(\xi^N(q_j))$ converges to $p_i'$ and its points belong to one
and the same component of $g^{-1}(c_i')\setminus\{p_i'\}$. Hence
the sequence $(\psi_{f(r_j)-c_i}(\xi^N(q_j)))$ converges to a
point $p'$. We set $\xi^O(p)=p'$. By the continuity of $\p$ and
$\psi$, this definition does not depend on the choice of the
sequence $(r_j)$. By construction, $\xi^O$ is continuous and the proof that it is a
homeomorphism can be handled by arguments similar to those used
for $\xi^N$. Moreover, it agrees with $\xi^N$ on $\partial N\cap
\partial O$ and $\underset{p\in N}\max|f(p)-g\circ
\xi^N(p)|=|c_i-c'_i|.$

In conclusion, $\xi^P$ can be constructed by gluing the
homeomorphisms $\xi^M$, $\xi^N$, $\xi^O$ together and the
properties $(a_2)$ and $(b_2)$ hold by construction.

The case when $p_i,p'_i$ are two joining saddle points of $f$ and $g$,
respectively, can be treated analogously. We have only to take
into account that $P$ is now a pairs of pants, and hence
 $M=\{p\in P: f(p)\in[c_i-\eps,c_i-\eps/2]\}$ is a pair of cylinders each containing one point of intersection between  $f^{-1}(c_i-\eps/2)$ and the trajectories  of the gradient
vector field $X$ coming from $p_i$.  Similarly for $P'$.\\

%
%
%

\noindent{\bf Case 3:} Let $p_i,p_j$ (resp. $p_i',p_j'$) be
critical points connected by an edge in the Reeb graph of $f$
(resp. $g$), and assume $c_i<c_j$ (resp. $c'_i<c'_j$). Let
$C=\{p\in \M: [p]\in e(p_i,p_j),\ c_i+\eps\le f(p)\le c_j-\eps\}$
and $C'=\{p\in\M: [p]\in e(p'_i,p'_j),\ c'_i+\eps\le g(p)\le
c'_j-\eps\}$. $C$ and $C'$ are two topological cylinders. Let
$\sigma^C:\partial^- C\to\partial^-C'$
  be a given homeomorphism between the lower boundaries of $C$ and
  $C'$.

\noindent{\bf Claim 3.} There exists a homeomorphism $\xi^C:C\to
C'$ such that:
\begin{itemize}
\item[$(a_3)$] $\xi^C|_{\partial^{-} C}=\sigma^C$; \item[$(b_3)$]
$\underset{p\in C}\max|f(p)-g\circ
\xi^C(p)|=\max\{|c_i-c'_i|,|c_j-c'_j|\}.$
\end{itemize}
\noindent{\em Proof of Claim 3.} To prove Claim 3, for every $p\in
C$, we set $\lambda_p$ equal to the only value in $[0,1]$ for
which $f(p)=(1-\lambda_p)(c_i+\eps)+\lambda_p(c_j-\eps)$, and
define $\xi^C(p)=p'$, with
$p'=\psi_{\lambda_p(c'_j-c'_i-2\eps)}\circ\sigma^C\circ\varphi_{\lambda_p(c_j-c_i-2\eps)}(p)$.

By the same arguments as used to prove the previous Claims 1 and
2, $\xi^C$ is a homeomorphism. It satisfies property $(a_3)$ by
construction. To prove  $(b_3)$, it is sufficient to observe that,
for every $p\in C$, {\setlength\arraycolsep{2pt}\begin{eqnarray*}
|f(p)-g(\xi^C(p))|&=&|(1-\lambda_p)(c_i+\eps)+\lambda_p(c_j-\eps)-(c'_i+\eps+\lambda_p(c'_j-c'_i-2\eps))|\\
&=&|(1-\lambda_p)(c_i-c'_i)+\lambda_p(c_j-c'_j)|.
\end{eqnarray*}}

Let us now construct the desired homeomorphism $\xi:\M\to\M$. Let
$\{p_1,\ldots,p_s\}\subseteq K_f$, $s\le n$, be the set of
critical points of $f$ of index 0 or 2, and, for $i = 1,\ldots,
s$, let $D_i, D_i'$  be as in Claim 1.

The spaces $W = \overline{\M \setminus \bigcup_{i=1}^s D_i}$ and $W'=\overline{\M
\setminus \bigcup_{i=1}^s D_i'}$ can be decomposed into the union
of cobordisms containing either no critical points or exactly one
critical point of index 1.  By Claims 2 and 3, it is possible to
extend a given homeomorphism  $\sigma^W:
\partial^-W\to\partial^-W'$ defined between  the lower boundaries of $W$ and $W'$ to a homeomorphism $\xi^W:W\to W'$
by gluing all the homeomorphisms on cobordisms along their
boundary components in the direction of the increasing of the
functions $f$ and $g$. Next, by Claim 1, we can glue this
homeomorphism $\xi^W$ along each boundary component of $W$ to a
homeomorphism $\xi^{D_i}: D_i\to D_i'$, for $i=1,\ldots, s$. As a
result, we get the desired self-homeomorphism $\xi$ of $\M$ such
that $\underset{p\in \M}\max|f(p)-g\circ
\xi(p)|=\underset{i=1,\ldots ,n}\max|c_i-c'_i|$.
\end{proof}

\begin{lem}\label{c(T)=BD}
For every elementary deformation $T\in {\mathcal
T}((\Gamma_f,\lf),(\Gamma_g,\Lg))$ of type {\em (B)} or {\em (D)},
$ c(T)\ge d_N([f],[g]) $.
\end{lem}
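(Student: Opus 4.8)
The plan is to prove $c(T)\ge d_N([f],[g])$ by exhibiting a self-homeomorphism $\xi$ of $\M$ with $\|f-g\circ\xi\|_{C^0}\le c(T)$; by Proposition~\ref{homeo} this suffices. First I would cut down the number of cases. Since $T^{-1}$ is of type (B) whenever $T$ is of type (D) with $c(T^{-1})=c(T)$ by Proposition~\ref{inverse}, while $d_N$ is symmetric, it is enough to treat $T$ of type (D), say deleting the saddle $u_1$ and the degree-$1$ vertex $u_2$. Replacing $(f,g)$ by $(-f,-g)$ leaves both $c(T)$ and $d_N([f],[g])$ unchanged and still gives a type (D) deformation, so I may further assume $u_2$ is a maximum and set $c_1=\ell_f(u_1)<c_2=\ell_f(u_2)$, whence $c(T)=(c_2-c_1)/2$.

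Next I would isolate the region where $f$ and $g$ differ. By Definition~\ref{elementaryDef}, $f$ and $g$ share all vertices and labels except for the cancelling pair $(u_1,u_2)$, and $f$ has no critical value in $]c_1,c_2[$. Fixing $\eps>0$ small, let $Q$ be the connected component of $f^{-1}([c_1-\eps,c_2+\eps])$ containing $u_1,u_2$; for small $\eps$ it contains no other critical point. A handle count (attaching the $1$-handle at the splitting saddle $u_1$, then the $2$-handle capping the maximum $u_2$) gives $\chi(Q)=0$ with two boundary circles, so the orientable connected $Q$ is an annulus with lower boundary at level $c_1-\eps$ and upper boundary at level $c_2+\eps$. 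Outside $\mathring Q$ the labeled Reeb graphs of $f$ and $g$ coincide, so there all corresponding critical values are equal; hence, exactly as in the gluing argument at the end of the proof of Lemma~\ref{c(T)=R} (Claims 1--3), any prescribed boundary homeomorphism on $\partial Q$ extends to a homeomorphism of $\M\setminus\mathring Q$ onto the complement of the corresponding annulus $Q'$ for $g$, matching level sets and so contributing $0$ to the $C^0$-cost.

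The heart of the matter is then a homeomorphism $\xi\colon Q\to Q'$, extending the chosen boundary homeomorphism, with $\max_{p\in Q}|f(p)-g(\xi(p))|\le (c_2-c_1)/2$, where $Q'$ is the monotone cylinder of $g$ over the edge $e(v_1,v_2)$ between levels $c_1-\eps$ and $c_2+\eps$. The idea is to ``flatten the bump'': I would build an auxiliary continuous function $G\colon Q\to[c_1-\eps,c_2+\eps]$, to serve as $g\circ\xi$, which equals $c_1-\eps$ on $\partial^- Q$ and $c_2+\eps$ on $\partial^+ Q$, sends both $u_1$ and $u_2$ to the midlevel $m=(c_1+c_2)/2$, and interpolates along $f$-trajectories so that every regular level set $G^{-1}(t)$ is a single circle. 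The single-circle condition makes $(Q,G)$ foliated like the monotone cylinder $(Q',g)$, so $G$ is realizable as $g\circ\xi$ by matching level sets, while a direct estimate shows $|G-f|$ attains its maximum $(c_2-c_1)/2$ precisely at $u_1,u_2$, is smaller elsewhere, and vanishes on $\partial Q$.

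Gluing $\xi$ on $Q$ to the zero-cost homeomorphism on $\M\setminus\mathring Q$ produces a self-homeomorphism of $\M$ with $\|f-g\circ\xi\|_{C^0}\le (c_2-c_1)/2+O(\eps)$; letting $\eps\to 0$ and invoking Proposition~\ref{homeo} yields $d_N([f],[g])\le c(T)$. I expect the main obstacle to be the construction of $G$ (equivalently $\xi$) on $Q$: one must collapse the capped branch carrying $u_2$ into the cylinder so that all regular levels of $G$ become single circles, do this compatibly with the prescribed boundary homeomorphism, and still keep the value discrepancy bounded by half the bump height. This is exactly where the elementary cobordism and rearrangement techniques of \cite{Mi65} enter, and handling the figure-eight critical level at the saddle and the fold at the cap continuously is the delicate point.
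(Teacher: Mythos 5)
Your reductions (to type (D) via $c(T^{-1})=c(T)$ and to the case of a maximum via $(-f,-g)$) and your identification of the relevant annulus $Q$ are fine, but the core of your argument --- a homeomorphism $\xi\colon Q\to Q'$ onto the monotone cylinder with $\max_{p\in Q}|f(p)-g(\xi(p))|\le (c_2-c_1)/2$, with the maximum ``attained precisely at $u_1,u_2$'' --- does not exist, for a topological reason. Since $Q'$ is a product cylinder for $g$, every level set of $G=g\circ\xi$ in the interior of $Q$ is a circle that separates the two boundary components of the annulus $Q$, i.e.\ is essential. Let $C$ be the $G$-level set through the maximum $u_2$. The bound $|G-f|\le (c_2-c_1)/2$ forces $G(u_2)\ge c_2-(c_2-c_1)/2=m$ and, on $C$, $f\ge G(u_2)-(c_2-c_1)/2\ge c_1$; hence $C\subseteq f^{-1}([c_1,c_2])\cap Q$. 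But that set is the union of the cap disk $D$ (bounded by one loop of the figure-eight level $f^{-1}(c_1)$ and containing $u_2$) and an annulus $A$ meeting $D$ only at the saddle $u_1$. A connected circle through $u_2$ contained in $D\cup A$ either lies entirely in $D$ (hence bounds a disk and is inessential) or becomes disconnected upon removing the single point $u_1$, which is impossible for a circle. So the infimum $(c_2-c_1)/2$ is \emph{not attained} by any single homeomorphism; this is exactly the classical non-attainment phenomenon for the natural pseudo-distance, and it is why no argument producing one optimal $\xi$ can work here.

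The statement is still true, but it must be proved by a limiting argument, which is what the paper does. It factors $T$ (via Proposition~\ref{DversusDR}) into a relabeling $S_0$ of type (R) that pushes $c_1$ and $c_2$ to within $2\eps$ of the midlevel, at cost $\frac{c_2-c_1}{2}-\eps$ and realized by an actual self-homeomorphism thanks to Lemma~\ref{c(T)=R}, followed by a deformation $S_1$ of type (D) of cost $\eps$; the now nearly-level cancelling pair is then removed by Milnor's Cancellation Theorem, producing a function $h'_\eps$ that is $R$-equivalent to $g$ and differs from $h_\eps$ by at most $4\eps$ in the $C^0$-norm, whence $d_N([f],[g])\le \frac{c_2-c_1}{2}+3\eps$ for every $\eps>0$. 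Your sketch could be repaired along the same lines by weakening the bound on $Q$ to $(c_2-c_1)/2+\delta$ --- the slack lets the essential circle through $u_2$ dip below the level $c_1$ and escape the cap $D$ --- and then letting $\delta\to 0$, but as written the exact-bound construction of $G$ is obstructed and the proof has a genuine gap at its central step.
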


\begin{proof}
We prove the assertion only for the case when $T$ is of type (D),
because the other case will then follow from $c(T^{-1})=c(T)$ and
the symmetry property of $d_N$.

By definition of elementary deformation of type (D), $T$
transforms $(\Gamma_f,\lf)$ into a labeled Reeb graph that differs
from $(\Gamma_f,\lf)$  in that two vertices, say $p_1,p_2\in K_f$,
have been deleted together with their connecting edges. Otherwise
vertices, adjacencies and labels are the same. Assuming
$f(p_1)=c_1, f(p_2)=c_2$, with $c_1<c_2$, we have
$c(T)=\frac{c_2-c_1}{2}$. We recall that $f^{-1}([c_1,c_2])\cap
K_f=\{p_1,p_2\}$. By Proposition \ref{DversusDR}, there exists a
deformation $S=(S_0,S_1)\in {\mathcal
T}((\Gamma_f,\lf),(\Gamma_g,\Lg))$, $S_0$ being of type (R), $S_1$
of type (D), such that $c(S)= c(T)$. In particular, as shown in
the proof of the same proposition (formulas (\ref{c(S0)}) and
(\ref{c(S1)})), for every $\eps>0$ sufficiently small, $S_0$ and
$S_1$ can be built so that $c(S_0)=\frac{c_2-c_1}{2}-\eps$ and
$c(S_1)=\eps$.

For any $h_\eps$ for which $S_0(\Gamma_f,\lf)\cong
(\Gamma_{h_\eps}, \ell_{h_\eps})$, by Lemma \ref{c(T)=R} we have
$d_N([f],[h_\eps])\le c(S_0)=\frac{c_2-c_1}{2}-\eps$. Thus,
$$d_N([f],[g])\le d_N([f],[h_\eps])+d_N([h_\eps],[g])\le
\frac{c_2-c_1}{2}-\eps+ d_N([h_\eps],[g]).$$ Therefore, proving
that $d_N([h_\eps],[g])\le 4\eps$ will yield the claim,  by the
arbitrariness of $\eps>0$.

Let $W_\eps$ be the connected component of
$h_\eps^{-1}([\frac{c_1+c_2}{2}-2\eps,\frac{c_1+c_2}{2}+2\eps])$
containing $p_1,p_2$, and let us assume that $\eps$ is so small
that
$h_\eps^{-1}([\frac{c_1+c_2}{2}-2\eps,\frac{c_1+c_2}{2}+2\eps])$
does not contain  other critical points of $h_\eps$. By the
Cancellation Theorem in \cite[Sect. 5]{Mi65},  it is possible to
define a new simple Morse function $h'_\eps:\M\to\R$ which
coincides with $h_\eps$ on $\M\setminus W_\eps$, and has no
critical points in $W_\eps$. In particular,
$(\Gamma_{h'_\eps},\ell_{h'_\eps})\cong(\Gamma_g,\Lg)$, implying
that $h'_\eps$ and $g$ are $R$-equivalent. It necessarily holds that
$$d_N([h_\eps],[h'_\eps])\le \max_{p\in \M}|h_\eps(p)-h'_\eps(p)|=\max_{p\in W_\eps}|h_\eps(p)-h'_\eps(p)|\le 4\eps.$$
Moreover, by the $R$-equivalence of $h'_\eps$ and $g$, we have
$d_N([h'_\eps],[g])=0$, so that  $d_N([h_\eps],[g])\le 4\eps$ by
the triangle inequality property of $d_N$.
\end{proof}

\begin{lem}\label{c(T)=K}
For every elementary deformation  $T\in{\mathcal
T}((\Gamma_f,\lf),(\Gamma_g,\Lg))$ of type (K$_i$), $i=1,2,3$,
$c(T)\ge d_N([f],[g])$.
\end{lem}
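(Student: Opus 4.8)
The plan is to realize the $R$-equivalence class of $g$ by a simple Morse function $\tilde f$ that is $C^0$-close to $f$: once this is done, estimating $d_N([f],[\tilde f])$ by the identity reparametrization (so that $d_N([f],[\tilde f])\le\|f-\tilde f\|_{C^0}$) and using $d_N([\tilde f],[g])=0$, the triangle inequality for $d_N$ will give the claim. Write $u_1,u_2$ for the two vertices relabeled by $T$ and set $c_a=\lf(u_1)$, $c_b=\lf(u_2)$, $\ell_1=\Lg(u_1)$, $\ell_2=\Lg(u_2)$, so that by Definition~\ref{cost 1} one has $c(T)=\max\{|c_a-\ell_1|,|c_b-\ell_2|\}$. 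I would construct $\tilde f\in\F^0(\M)$ with the same critical points as $f$ and the same indices, agreeing with $f$ in value at every critical point other than $u_1,u_2$, with $\tilde f(u_1)=\ell_1$, $\tilde f(u_2)=\ell_2$, and with the level adjacencies reorganized exactly as prescribed by $T$, so that $(\Gamma_{\tilde f},\ell_{\tilde f})\cong(\Gamma_g,\Lg)$. Proposition~\ref{unique}(2) then gives that $\tilde f$ and $g$ are $R$-equivalent, whence $d_N([\tilde f],[g])=0$, and the statement reduces to the single estimate $\|f-\tilde f\|_{C^0}\le c(T)$.

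To build $\tilde f$ I would move the two saddle values from $(c_a,c_b)$ to $(\ell_1,\ell_2)$ by a finite sequence of elementary cobordism rearrangements in the sense of \cite{Mi65}, keeping every other critical point fixed at its value throughout. As long as the ascending $0$-sphere of the lower and the descending $0$-sphere of the upper of two consecutive saddles are disjoint in an intermediate level curve, their values may be interchanged without altering the Reeb graph; the three configurations (K$_1$), (K$_2$), (K$_3$) are precisely the cases in which these $0$-spheres meet, so that the interchange crosses the stratum $\F^1_{\beta}(\M)$ and forces the edge reroutings listed in Definition~\ref{elementaryDef}. I would treat these one case at a time, following the local models already used for $\F^1_{\beta}(\M)$ in the proof of Lemma~\ref{pointonF1} (Figures~\ref{K1}--\ref{R2}), checking in each case that the resulting reorganization of the level adjacencies is the one recorded by $T$.

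For the cost, the point is that at every stage of this process only $u_1$ and $u_2$ change value, and I would let them travel monotonically toward their targets $\ell_1,\ell_2$. Each individual rearrangement (or $\F^1_{\beta}(\M)$ crossing) modifies the current function only in a neighbourhood of the trajectories issuing from the saddle being moved, by an amount no larger than the displacement of that saddle's value; since the displacements never exceed $|c_a-\ell_1|$ and $|c_b-\ell_2|$, the running $C^0$ distance to $f$ stays within $\max\{|c_a-\ell_1|,|c_b-\ell_2|\}=c(T)$. Passing to the endpoint $\tilde f$ and letting the auxiliary constant $\eps>0$ used to separate the critical levels tend to $0$ yields $\|f-\tilde f\|_{C^0}\le c(T)$, and hence $d_N([f],[g])\le d_N([f],[\tilde f])+d_N([\tilde f],[g])\le c(T)$.

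The main obstacle lies entirely in this construction, and is twofold: verifying, separately for (K$_1$), (K$_2$), (K$_3$), that the meeting of the two $0$-spheres produces exactly the prescribed rerouting, and, more delicately, carrying out the moves so that the $C^0$ change is governed by the displacement of the saddle values rather than by the a priori, and much larger, width of the cobordisms swept out when $u_1,u_2$ must pass other critical values. As a guide to sharpness I would keep in mind the elementary identity $\max\{|c_a-\ell_1|,|c_b-\ell_2|\}=\tfrac{(c_b-c_a)+(\ell_1-\ell_2)}{2}+\bigl|\tfrac{c_a+c_b}{2}-\tfrac{\ell_1+\ell_2}{2}\bigr|$ (valid for $c_a<c_b$, $\ell_2<\ell_1$): it exhibits $c(T)$ as the least cost of a factorization of $T$ into a relabeling, an infinitesimal interchange, and a relabeling, and in the case where the intervals $[c_a,c_b]$ and $[\ell_2,\ell_1]$ overlap it furnishes an alternative argument that invokes only Lemma~\ref{c(T)=R} together with an explicit, arbitrarily cheap $\F^1_{\beta}(\M)$ perturbation at the crossing.
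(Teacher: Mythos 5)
Your overall strategy is the paper's: produce a single function $\widehat h\in\F^0(\M)$ with $(\Gamma_{\widehat h},\ell_{\widehat h})\cong T(\Gamma_f,\lf)$ and $\|f-\widehat h\|_{C^0}\le c(T)$, deduce $d_N([\widehat h],[g])=0$ from $R$-equivalence (Proposition~\ref{unique}(2)), and conclude by the triangle inequality for $d_N$; both you and the paper get $\widehat h$ from the rearrangement machinery of \cite{Mi65}. The difference is in execution, and it matters for the two obstacles you flag. The paper does not iterate: it chooses an interval $[a,b]$ containing $c_1,c_2,c'_1,c'_2$ and no other critical values, restricts to the connected component $W$ of $f^{-1}([a,b])$ containing $p_1,p_2$, and applies the Preliminary Rearrangement Theorem once, producing $h(p)=G(f(p),\mu(p))$ with $G(x,t)=(1-t)\,G(x,0)+t\,G(x,1)$, where $G(\cdot,0)$ and $G(\cdot,1)$ deviate from the identity by at most $|c_1-c'_1|$ and $|c_2-c'_2|$ respectively; the bound $\|f-\widehat h\|_{C^0}\le\max\{|c_1-c'_1|,|c_2-c'_2|\}=c(T)$ then falls out of convexity, with no accumulation of errors over a sequence of moves to control and nothing to ``sweep past.'' This explicit interpolation is precisely the step you leave as an assertion (``modifies the current function \ldots\ by an amount no larger than the displacement of that saddle's value''), so your plan is sound but its quantitative core is exactly what the one-shot construction supplies. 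Your closing alternative (factor $c(T)$ into two relabelings plus an arbitrarily cheap $\F^1_{\beta}(\M)$ crossing and invoke Lemma~\ref{c(T)=R}) does work when $[c_a,c_b]$ and $[\ell_2,\ell_1]$ overlap, with the caveat that the intermediate relabelings must themselves be order-preserving deformations of type (R), which again uses that no other critical values intervene between the old and new labels.
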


\begin{proof}
For an elementary deformation $T$  of type (K$_i$), $i=1,2,3$,
 the sets $K_f$ and $K_g$ have the same cardinality, and all
but at most two of the critical values of $f$ and $g$ coincide.
Let $K_f=\{p_1,\ldots,p_n\}$ and $K_g=\{p'_1,\ldots,p'_n\}$, with
$f(p_k)=c_k$, $g(p'_k)=c'_k$ for every $k=1,\dots,n$. Assuming that
 the points $p_1,p_2$ correspond to the vertices
$u_1,u_2$ of $\Gamma_f$ shown in Table \ref{deformations}, rows
3-4, it holds that $c_1<c_2$, $c'_1>c'_2$, and $c_k=c'_k$ for
$k=3,\ldots ,n$. Moreover,  $K_f\cap
f^{-1}([c_1,c_2])=\{p_1,p_2\}$ and $K_g\cap
g^{-1}([c'_2,c'_1])=\{p'_1,p'_2\}$. Since $f,g\in\F^0(\M)$, there
exist $a,b\in\R$, with $a<b$, such that $c_1,c_2$ and $c'_1,c'_2$
are the sole critical values of $f$ and $g$, respectively, that
belong to the interval $[a,b]$. Let us denote by $W$ the connected
component of $f^{-1}([a,b])$ containing $p_1,p_2$. Under our
assumptions, we can apply the Preliminary Rearrangement Theorem
\cite[Thm 4.1]{Mi65}, and deduce that, for some choice of a
gradient-like vector field $X$ for $f$, there exists a Morse
function $h:W\to\R$ that has the same gradient-like vector field
as $f$, coincides with $f_{|W}$ near $\partial W$ and is equal to
$f$ plus a constant in some neighborhood of $p_1$ and in some
neighborhood of $p_2$. Moreover, $K_h=K_{f_{|W}}$, $h(p_1)=c'_1$,
$h(p_2)=c'_2$. We can extend $h$ to the whole surface by defining
$$
\widehat{h}(p)=\left\{
\begin{array}{ll}
f(p),&\mbox{if}\,p\in\M\setminus W,\\ h&\mbox{if}\,p\in W.
\end{array}
\right.
$$
Hence, $\widehat{h}\in\F^0(\M)$ and
$(\Gamma_{\widehat{h}},\ell_{\widehat{h}})\cong T(\Gamma_f,\lf)$, implying that $\widehat{h}$ is $R$-equivalent to $g$.
Therefore, by Definition \ref{equivalence},
$d_N([f],[g])=d_N([f],[\widehat{h}])$.

Let us prove that $d_N([f],[\widehat{h}])\le c(T)$. We observe
that, by the definitions of $d_N$ and $\widehat{h}$, we get:
{\setlength\arraycolsep{2pt}\begin{eqnarray}\label{inequalityDh}
d_N([f],[\widehat{h}])&\le &
\|f-\widehat{h}\|_{C^0}=\underset{p\in\M}\max|f(p)-\widehat{h}(p)|=\underset{p\in
W}\max|f(p)-h(p)|.
\end{eqnarray}}
To estimate  the value of $\underset{p\in W}\max|f(p)-h(p)|$, we
review the construction of the function $h$, as given in
\cite{Mi65}. Let $\mu:W\to [a,b]$ be a smooth function that is
constant on each trajectory of $X$, zero near the set of points on
trajectories going to or from $p_1$, and one near the set of
points on trajectories going to or from $p_2$. Then the function
$h$ can be defined as $h(p)=G(f(p),\mu(p))$, where $G:[a,b]\times
[0,1]\to [a,b]$ is a smooth function defined as $G(x,t)=(1-t)\cdot
G(x,0)+t\cdot G(x,1)$, with the following properties (see also
Figure \ref{c(K)>D1}):
\begin{itemize}
\item $\frac{\partial G}{\partial x}(x,0)=1$ for $x$ in a
neighborhood of $c_1$ (in particular $G(x,0)=x+c'_1-c_1$  for $x$
in a neighborhood of $c_1$),\\ $\frac{\partial G}{\partial
x}(x,1)=1$ for $x$ in a neighborhood of $c_2$ (in particular
$G(x,1)=x+c'_2-c_2$ for $x$ in a neighborhood of $c_2$); \item For
all $x$ and $t$, $G(x,t)$ monotonically increases from $a$ to $b$
as $x$ increases from $a$ to $b$;  \item $G(x,t)=x$ for $x$ near
to $a$ or $b$ and for every $t\in [0,1]$.
\end{itemize}
\begin{figure}[htbp]
\begin{center}
\psfrag{x}{$x$}\psfrag{y}{$z$}
\psfrag{z=G(x,0)}{$z=G(x,0)$}\psfrag{z=G(x,1)}{$z=G(x,1)$}
\psfrag{aa}{$(a,a)$} \psfrag{bb}{$(b,b)$} \psfrag{ci}{$c_1$}
\psfrag{cj}{$c_2$}\psfrag{ci'}{$c'_1$}
\psfrag{cj'}{$c'_2$}\psfrag{c1-c1'}{$c'_1-c_1$}\psfrag{c2-c2'}{$c_2-c'_2$}
\psfrag{a}{$a$} \psfrag{b}{$b$}\psfrag{c}{$\min
f$}\psfrag{d}{$\max f$}\psfrag{S}{$(\Gamma_g,\Lg)$}
\includegraphics[width=8cm]{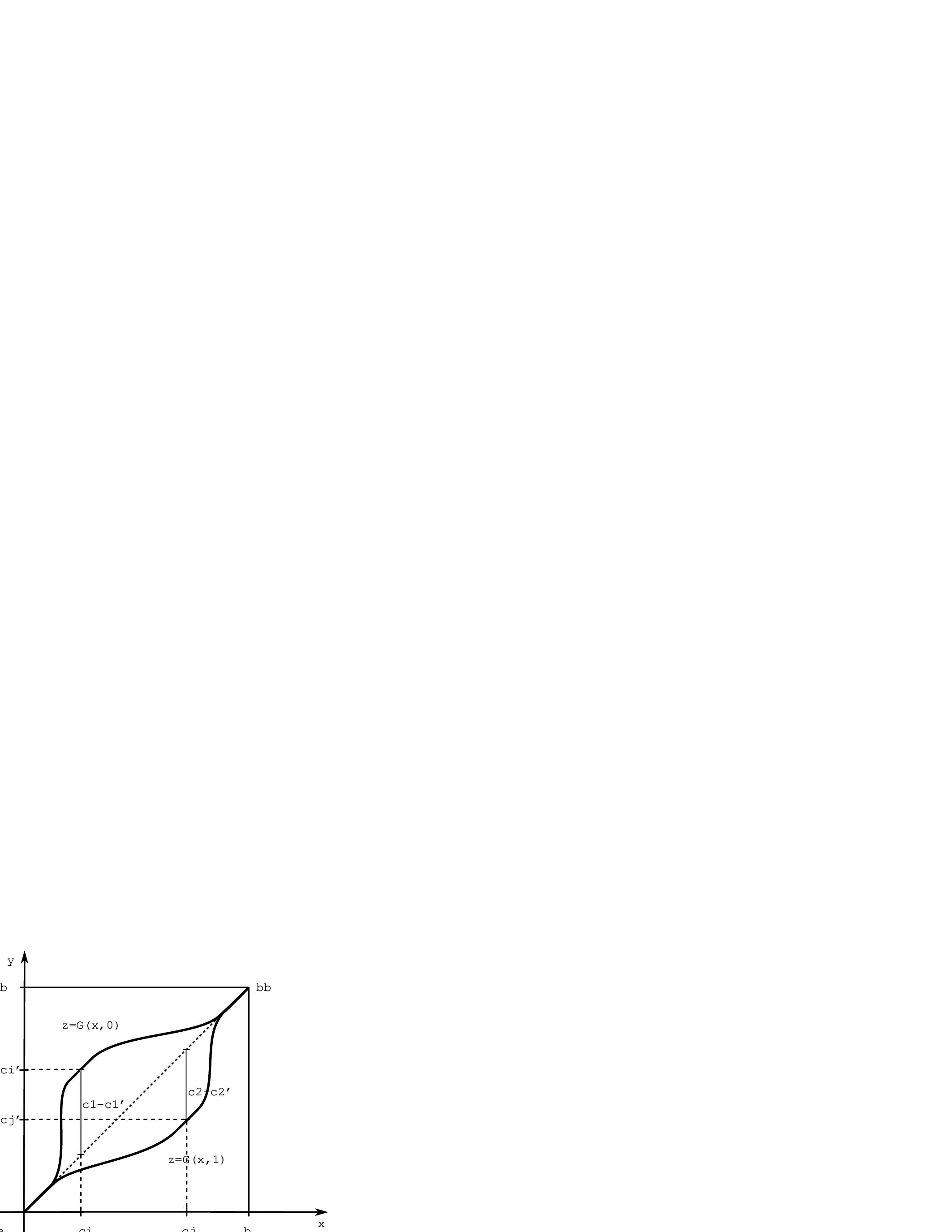}
\caption{\footnotesize{The function $G$ introduced in \cite{Mi65}
and used in the proof of Lemma \ref{c(T)=K}.}}\label{c(K)>D1}
\end{center}
\end{figure}

By the construction of $h$ and the
inequality (\ref{inequalityDh}), we have:
{\setlength\arraycolsep{2pt}\begin{eqnarray*}
d_N([f],[\widehat{h}])&\le &\underset{p\in
W}\max|f(p)-G(f(p),\mu(p))|=\max\{|f(p)-G(f(p),0)|,|f(p)-G(f(p),1)|\}\\&=&\max\{|c_1-c'_1|,|c_2-c'_2|\}=c(T).
\end{eqnarray*}}
\end{proof}

\begin{theorem}\label{equal}
Let $f,g\in\F^0(\M)$, and $(\Gamma_{f},\lf)$, $(\Gamma_{g},\Lg)$
be the associated labeled Reeb graphs. Then
$d_E((\Gamma_{f},\lf),(\Gamma_{g},\Lg))= d_N([f],[g]).$
\end{theorem}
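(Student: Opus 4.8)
The plan is to establish the two inequalities $d_N([f],[g])\le d_E((\Gamma_{f},\lf),(\Gamma_{g},\Lg))$ and $d_E((\Gamma_{f},\lf),(\Gamma_{g},\Lg))\le d_N([f],[g])$ separately; together they give the asserted equality.

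For the lower bound $d_N\le d_E$, I would combine the three cost estimates of Lemmas \ref{c(T)=R}, \ref{c(T)=BD} and \ref{c(T)=K} with the triangle inequality for $d_N$. Fix any $T=(T_1,\ldots,T_r)\in{\mathcal T}((\Gamma_{f},\lf),(\Gamma_{g},\Lg))$. Applying Proposition \ref{defGl} inductively, each intermediate graph $T_i\cdots T_1(\Gamma_{f},\lf)$ is isomorphic to the labeled Reeb graph of some $f_i\in\F^0(\M)$ \emph{on the same surface} $\M$; this is exactly what Proposition \ref{defGl} guarantees, and it rests on the fact that every elementary deformation preserves the first Betti number of the graph and hence the genus. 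Put $f_0=f$. Since $(\Gamma_{f_r},\ell_{f_r})\cong(\Gamma_{g},\Lg)$, Proposition \ref{unique}(2) forces $f_r$ to be $R$-equivalent to $g$, so $d_N([f_r],[g])=0$. Each $T_i$ is an elementary deformation of $(\Gamma_{f_{i-1}},\ell_{f_{i-1}})$ into $(\Gamma_{f_i},\ell_{f_i})$, whence the appropriate lemma among \ref{c(T)=R}, \ref{c(T)=BD}, \ref{c(T)=K} yields $c(T_i)\ge d_N([f_{i-1}],[f_i])$. Summing and using the triangle inequality for $d_N$ gives
$$d_N([f],[g])\le\sum_{i=1}^{r} d_N([f_{i-1}],[f_i])\le\sum_{i=1}^{r} c(T_i)=c(T),$$
and passing to the infimum over $T$ produces $d_N([f],[g])\le d_E((\Gamma_{f},\lf),(\Gamma_{g},\Lg))$.

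For the upper bound $d_E\le d_N$, the idea is to feed the stability Theorem \ref{global} the reparametrized function and to use that the isomorphism class of a labeled Reeb graph is an $R$-invariant. For any $\xi\in{\mathcal D}(\M)$ the composite $g\circ\xi$ again lies in $\F^0(\M)$ and is $R$-equivalent to $g$, so by Proposition \ref{unique}(2) one has $(\Gamma_{g\circ\xi},\ell_{g\circ\xi})\cong(\Gamma_{g},\Lg)$. Since $d_E$ is a function of isomorphism classes, Theorem \ref{global} applied to the pair $f,\,g\circ\xi$ gives
$$d_E((\Gamma_{f},\lf),(\Gamma_{g},\Lg))=d_E((\Gamma_{f},\lf),(\Gamma_{g\circ\xi},\ell_{g\circ\xi}))\le\|f-g\circ\xi\|_{C^0}.$$
Taking the infimum over $\xi\in{\mathcal D}(\M)$ yields $d_E((\Gamma_{f},\lf),(\Gamma_{g},\Lg))\le d_N([f],[g])$, which combined with the previous paragraph closes the proof.

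I expect the genuine difficulty of the statement to lie \emph{not} in this final assembly but in the three preliminary lemmas already established, especially the explicit cobordism and rearrangement constructions of Lemma \ref{c(T)=R}; once those are in hand the two estimates above are essentially formal. The one step in the argument itself that deserves attention is the realizability of all intermediate labeled graphs on a single fixed surface $\M$, so that $d_N$ is meaningful along the entire deformation: this is supplied by Proposition \ref{defGl} together with the invariance of the genus under the elementary deformations.
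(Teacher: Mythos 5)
Your proposal is correct and follows essentially the same route as the paper's own proof: the lower bound is obtained by decomposing a deformation into elementary steps realized by intermediate functions on $\M$ and applying Lemmas \ref{c(T)=R}--\ref{c(T)=K} with the triangle inequality for $d_N$, and the upper bound by applying Theorem \ref{global} to $f$ and $g\circ\xi$ and using the invariance of the labeled Reeb graph under $R$-equivalence. Your explicit appeals to Proposition \ref{defGl} and Proposition \ref{unique}(2) merely spell out details the paper leaves implicit.
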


\begin{proof}
The inequality $d_E((\Gamma_{f},\lf),(\Gamma_{g},\Lg))\ge
d_N([f],[g])$ holds because, for every
deformation $T\in\mathcal{T}((\Gamma_{f},\lf),(\Gamma_{g},\Lg))$,
$c(T)\ge d_N([f],[g])$. To see this, let $T=(T_1, \ldots, T_n)$, and set $T_i
\cdots T_1(\Gamma_{f},\lf)\cong
(\Gamma_{f^{(i)}},\ell_{f^{(i)}})$, $f=f^{(0)}$, $g=f^{(n)}$. From
Lemmas \ref{c(T)=R}-\ref{c(T)=K} and the triangle inequality property of
$d_N$, we get {\setlength\arraycolsep{2pt}\begin{eqnarray*}
c(T)&=& \underset{i=1}{\overset{n}\sum}c(T_i)\ge
\underset{i=1}{\overset{n}\sum}d_N([f^{(i-1)}],[f^{(i)}])\ge
d_N([f],[g]).
\end{eqnarray*}}
Conversely, by Theorem \ref{global},
$d_E((\Gamma_{f},\lf),(\Gamma_{g\circ\xi},\ell_{g\circ\xi}))\le
{\|f-g\circ\xi\|}_{C^0}$, for every $\xi\in
{\mathcal D}(\M)$.
Therefore $d_E((\Gamma_{f},\lf),(\Gamma_{g},\Lg))\le
\underset{\xi\in{\mathcal
D}(\M)}\inf\|f-g\circ\xi\|_{C^0}=d_N([f],[g])$ because $d_E((\Gamma_{f},\lf),(\Gamma_{g},\Lg))=d_E((\Gamma_{f},\lf),(\Gamma_{g\circ\xi},\ell_{g\circ\xi}))$.
\end{proof}

\begin{cor}\label{defpos}
For every $f,g\in\F^0(\M)$, the edit distance between the
associated labeled Reeb graphs is a metric on isomorphism classes
of labeled Reeb graphs.
\end{cor}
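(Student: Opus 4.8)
The plan is to upgrade the pseudo-metric $d_E$ of Theorem~\ref{editdist} to a genuine metric by establishing the only missing axiom, namely the coincidence (positivity) property: if $d_E((\Gamma_{f},\lf),(\Gamma_{g},\Lg))=0$, then $(\Gamma_{f},\lf)\cong(\Gamma_{g},\Lg)$. The three remaining metric axioms (finiteness, symmetry, triangle inequality) are already granted by Theorem~\ref{editdist}, so the entire argument reduces to this single implication.

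First I would invoke Theorem~\ref{equal}, which identifies the edit distance with the natural pseudo-distance, $d_E((\Gamma_{f},\lf),(\Gamma_{g},\Lg))=d_N([f],[g])$. Thus the hypothesis $d_E((\Gamma_{f},\lf),(\Gamma_{g},\Lg))=0$ immediately yields $d_N([f],[g])=0$. The key external input is that, in the present setting of simple Morse functions on a smooth closed connected orientable surface considered up to $R$-equivalence, $d_N$ is not merely a pseudo-metric but an actual metric, as established in \cite{CaDiLa13}. Consequently $d_N([f],[g])=0$ forces $[f]=[g]$, i.e.\ $f$ and $g$ are $R$-equivalent.

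Finally I would apply statement~(2) of the Uniqueness Theorem (Proposition~\ref{unique}): $f$ and $g$ are $R$-equivalent if and only if their labeled Reeb graphs $(\Gamma_{f},\lf)$ and $(\Gamma_{g},\Lg)$ are isomorphic. This gives exactly $(\Gamma_{f},\lf)\cong(\Gamma_{g},\Lg)$, completing the coincidence property and hence showing that $d_E$ is a metric on isomorphism classes of labeled Reeb graphs.

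Since every substantive ingredient is already available from earlier results, there is no real analytic obstacle here; the only thing to be careful about is the logical chaining, and in particular the legitimate appeal to the fact that $d_N$ is a true metric (and not just a pseudo-metric) on $R$-equivalence classes in this surface setting, which is precisely the content borrowed from \cite{CaDiLa13}. I would state this dependence explicitly so that the corollary does not silently rest on the weaker pseudo-metric property of $d_N$.
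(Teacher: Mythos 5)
Your proposal is correct and follows exactly the paper's own route: Theorem~\ref{equal} reduces the coincidence property to the fact, borrowed from \cite{CaDiLa13}, that $d_N$ is a genuine metric on $R$-equivalence classes, and Proposition~\ref{unique}(2) then converts $R$-equivalence into isomorphism of labeled Reeb graphs. The only difference is that you spell out the final appeal to the Uniqueness Theorem, which the paper leaves implicit in the phrase ``immediate consequence.''
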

\begin{proof}
The claim is an immediate consequence of Theorem \ref{equal}
together with \cite[Thm. 4.2]{CaDiLa13}, which states that the
natural pseudo-distance is actually a metric on the space
$\F^0(\M)$.
\end{proof}

\begin{cor}\label{bottle}
For every $f ,g \in\F^0(\M)$,
$d_E((\Gamma_{f},\lf),(\Gamma_{g},\Lg))\ge d_B(D_f,D_g)$, where
$d_B$ denotes the bottleneck distance between the persistence
diagrams $D_f$ and $D_g$ of $f$ and $g$. In some cases this
inequality is strict.
\end{cor}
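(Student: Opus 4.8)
The plan is to derive the inequality almost formally from Theorem~\ref{equal}, which identifies $d_E$ with the natural pseudo-distance $d_N$, combined with the classical stability of persistence diagrams. First I would recall from \cite{CoEdHa07} that the bottleneck distance is stable with respect to the $C^0$-norm, that is $d_B(D_f,D_h)\le \|f-h\|_{C^0}$ for all $f,h\in\F^0(\M)$. Second, I would observe that the persistence diagram is invariant under reparameterization of the domain: for every $\xi\in\mathcal D(\M)$ the sublevel sets satisfy $(g\circ\xi)^{-1}(]-\infty,t])=\xi^{-1}\bigl(g^{-1}(]-\infty,t])\bigr)$, so that $\xi^{-1}$ induces an isomorphism of the associated persistence modules compatible with the inclusion-induced maps, whence $D_{g\circ\xi}=D_g$.

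Combining these two facts, for every $\xi\in\mathcal D(\M)$ I would write
\[
d_B(D_f,D_g)=d_B(D_f,D_{g\circ\xi})\le \|f-g\circ\xi\|_{C^0}.
\]
Taking the infimum over $\xi\in\mathcal D(\M)$ and using Definition~\ref{pseudod}, the right-hand side becomes $d_N([f],[g])$, which equals $d_E((\Gamma_{f},\lf),(\Gamma_{g},\Lg))$ by Theorem~\ref{equal}. This yields the desired inequality $d_E((\Gamma_{f},\lf),(\Gamma_{g},\Lg))\ge d_B(D_f,D_g)$.

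For the strictness I would exhibit a pair $f,g\in\F^0(\M)$ whose persistence diagrams coincide, so that $d_B(D_f,D_g)=0$, while their labeled Reeb graphs fail to be isomorphic, so that $d_E((\Gamma_{f},\lf),(\Gamma_{g},\Lg))>0$ by Corollary~\ref{defpos} (which guarantees that $d_E$ is a genuine metric). Such examples are available already on a surface of genus at least $2$, where two non-isomorphic wirings of the saddle points produce the same multiset of birth--death pairs while the underlying labeled Reeb graphs differ; a concrete instance is most transparently presented as a figure. The main obstacle lies precisely in this last step: the inequality is an essentially routine consequence of the bottleneck stability of \cite{CoEdHa07} together with the reparameterization-invariance of persistence, whereas producing a strict example requires care, since one must verify both that all the persistence diagrams agree and that no label-preserving graph isomorphism exists, reflecting the fact that the labeled Reeb graph retains connectivity information that the persistence diagram discards.
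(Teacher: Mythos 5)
Your proposal is correct and follows essentially the same route as the paper: both reduce the inequality to Theorem~\ref{equal} together with the fact that the bottleneck distance lower-bounds the natural pseudo-distance, and both establish strictness via an example with identical persistence diagrams but non-isomorphic labeled Reeb graphs. The only difference is that you re-derive the intermediate fact $d_B\le d_N$ from bottleneck stability and reparameterization-invariance of persistence diagrams, whereas the paper simply cites it (cf.~\cite{CeDi*13}).
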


\begin{proof}
The inequality $d_E((\Gamma_{f},\lf),(\Gamma_{g},\Lg))\ge
d_B(D_f,D_g)$ holds because of Theorem \ref{equal} and  the fact that the bottleneck distance is a lower bound for
the natural pseudo-distance (cf. \cite{CeDi*13}).

As for the second statement, an example showing that the edit
distance between the labeled Reeb graphs of two functions
$f,g\in\F^0(\M)$ can be strictly greater than the bottleneck
distance between the persistence diagrams of $f$ and $g$ is
displayed in Figure \ref{esempio_tri}.
\begin{figure}[htbp]
\begin{center}
\psfrag{0}{$0$}\psfrag{1}{$1$} \psfrag{2}{$2$}\psfrag{3}{$3$}
\psfrag{4}{$4$} \psfrag{5}{$5$} \psfrag{6}{$6$}
\psfrag{7}{$7$}\psfrag{f}{$f$} \psfrag{g}{$g$} \psfrag{M}{$\max
z$} \psfrag{L}{$(\Gamma_f,\lf)$}\psfrag{S}{$(\Gamma_g,\Lg)$}
\includegraphics[width=12cm]{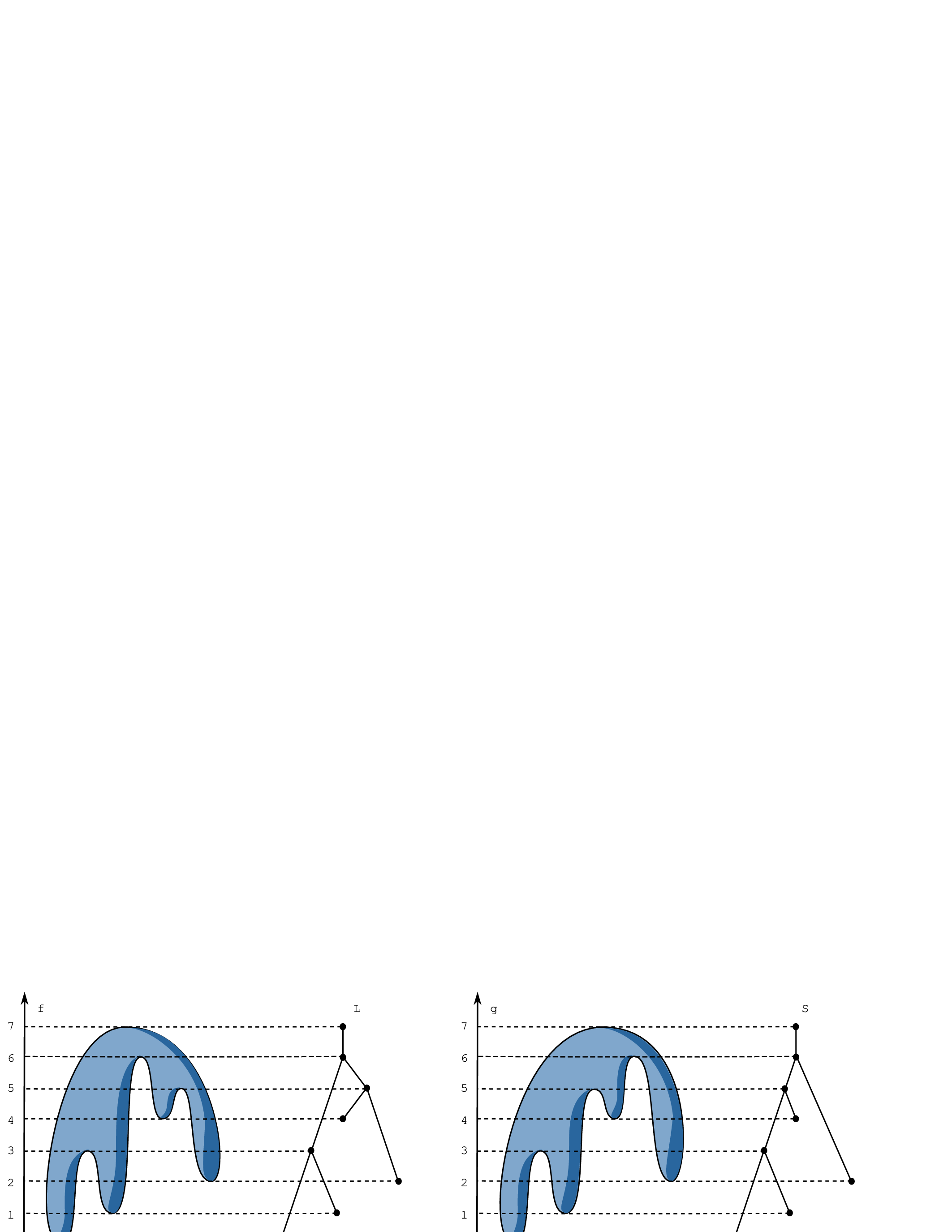}
\caption{\footnotesize{The example used in the proof of Corollary
\ref{bottle} to show that the edit distance between labeled
Reeb graphs can be more discriminative than the bottleneck
distance between persistence diagrams whenever the same functions
are considered.}}\label{esempio_tri}
\end{center}
\end{figure}
Indeed, $f$ and $g$ have the same persistence diagrams for any
homology degree implying that $d_B(D_f,D_g)=0$, whereas the labeled Reeb graphs are not
isomorphic, implying that $d_E((\Gamma_{f},\lf),(\Gamma_{g},\Lg))>0$.
\end{proof}


\begin{cor}\label{distortion}
For every $f ,g \in\F^0(\M)$,
$d_E((\Gamma_{f},\lf),(\Gamma_{g},\Lg))\ge d_{FD}(R_f,R_g)$, where
$d_{FD}$ denotes the functional distortion distance between the
spaces  $R_f=\M/\sim_f$ and $R_g=\M/\sim_g$. In some cases this
inequality is strict.
\end{cor}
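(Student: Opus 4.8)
The plan is to mirror the two-part structure of the proof of Corollary~\ref{bottle}. For the inequality $d_E\ge d_{FD}$, I would use Theorem~\ref{equal} to replace $d_E((\Gamma_{f},\lf),(\Gamma_{g},\Lg))$ by the natural pseudo-distance $d_N([f],[g])$, so that the claim reduces to showing that the functional distortion distance is a lower bound for the natural pseudo-distance, i.e. $d_{FD}(R_f,R_g)\le d_N([f],[g])$. This is the exact analogue of the step in Corollary~\ref{bottle} where the bottleneck distance was used, except that there the lower-bound property was quoted from \cite{CeDi*13}, whereas here it is deduced from the stability of $d_{FD}$ established in \cite{BaGeWa}.

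The derivation of $d_{FD}(R_f,R_g)\le d_N([f],[g])$ rests on two ingredients. The first is the stability estimate $d_{FD}(R_f,R_g)\le {\|f-g\|}_{C^0}$ proved in \cite{BaGeWa}. The second is the invariance of $d_{FD}$ under reparameterization: for every $\xi\in\mathcal{D}(\M)$, the diffeomorphism $\xi$ descends to a function-preserving homeomorphism between the Reeb spaces $R_{g\circ\xi}=\M/\!\sim_{g\circ\xi}$ and $R_g=\M/\!\sim_g$, since $\xi$ carries connected components of level sets of $g\circ\xi$ bijectively to those of $g$ while preserving the induced values. Hence $d_{FD}(R_f,R_{g\circ\xi})=d_{FD}(R_f,R_g)$. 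Combining the two ingredients, for every $\xi\in\mathcal{D}(\M)$ I obtain
\[
d_{FD}(R_f,R_g)=d_{FD}(R_f,R_{g\circ\xi})\le{\|f-g\circ\xi\|}_{C^0},
\]
and taking the infimum over $\xi$ yields $d_{FD}(R_f,R_g)\le\underset{\xi\in\mathcal{D}(\M)}\inf{\|f-g\circ\xi\|}_{C^0}=d_N([f],[g])$. Theorem~\ref{equal} then gives $d_E((\Gamma_{f},\lf),(\Gamma_{g},\Lg))=d_N([f],[g])\ge d_{FD}(R_f,R_g)$, as desired.

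For the strictness, the plan is to exhibit an explicit pair of functions $f,g\in\F^0(\M)$ in a figure, as done for the bottleneck distance, together with an upper bound for $d_{FD}(R_f,R_g)$ and a lower bound for $d_E((\Gamma_{f},\lf),(\Gamma_{g},\Lg))=d_N([f],[g])$ that do not meet. Conceptually, the gap is possible because $d_{FD}$ is computed through continuous maps between the one-dimensional spaces $R_f$ and $R_g$ that need not be induced by self-homeomorphisms of $\M$, so it enjoys strictly more flexibility than $d_N$; this is the same phenomenon, already noted in the Introduction, by which the functional distortion distance cannot fully discriminate shapes. I expect the strictness claim, rather than the inequality, to be the real obstacle: the inequality is essentially formal once the stability estimate of \cite{BaGeWa} and the reparameterization invariance of $d_{FD}$ are in hand, whereas the functional distortion distance is hard to compute, so producing an example on which $d_{FD}$ is provably smaller than $d_E$ requires constructing an explicit admissible pair of maps realizing a small distortion (for the upper bound on $d_{FD}$) together with an independent lower bound for $d_N=d_E$ on the same example, for instance via the bottleneck lower bound of Corollary~\ref{bottle}.
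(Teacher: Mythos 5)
Your proof of the inequality is correct and is essentially the paper's own argument: the paper also obtains $d_E\ge d_{FD}$ from the stability estimate $d_{FD}(R_f,R_g)\le\|f-g\|_{C^0}$ of \cite[Thm.~4.1]{BaGeWa}, run through the same reparameterization argument as the second inequality in the proof of Theorem~\ref{equal} (i.e.\ $d_{FD}(R_f,R_g)=d_{FD}(R_f,R_{g\circ\xi})\le\|f-g\circ\xi\|_{C^0}$ for every $\xi\in\mathcal D(\M)$, then take the infimum and apply $d_E=d_N$). Your justification of the invariance $d_{FD}(R_f,R_{g\circ\xi})=d_{FD}(R_f,R_g)$ via the function-preserving homeomorphism induced by $\xi$ on the quotients is sound.

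The gap is in the strictness claim: you describe the right strategy but do not exhibit an example, and without one the sentence ``in some cases this inequality is strict'' is not proved. The paper supplies the missing example (Figure~\ref{esempio_DF}): two functions $f,g\in\F^0(\M)$ whose labeled Reeb graphs are isomorphic as graphs and differ only in the labels $c_1,c_2$ versus $c_1+a,c_2+a$ of two vertices lying on a cycle. For this pair $d_E((\Gamma_f,\lf),(\Gamma_g,\Lg))=a$, because $a$ is both the cost of the type (R) deformation relabeling those two vertices (upper bound) and the bottleneck distance between the degree-one persistence diagrams (lower bound, via Corollary~\ref{bottle} and Theorem~\ref{equal}) --- exactly the certification scheme you propose. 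On the other hand, $d_{FD}(R_f,R_g)\le(c_2-c_1)/4$, realized by any pair of continuous maps $\Phi:R_f\to R_g$ and $\Psi:R_g\to R_f$ sending each point to a point with the same function value; taking $a$ large compared with $c_2-c_1$ makes the inequality strict. So your plan coincides with the paper's, but the proof is complete only once such a pair of functions and such maps are actually written down.
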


\begin{proof}
The inequality $d_E((\Gamma_{f},\lf),(\Gamma_{g},\Lg))\ge
d_{FD}(R_f,R_g)$ is a consequence of the stability of Reeb graphs
with respect to $d_{FD}$ \cite[Thm. 4.1]{BaGeWa}, and can be seen
in the same way as the second inequality shown in the proof of
Theorem \ref{equal}.

As for the second statement, an example showing that, for two
functions $f,g\in\F^0(\M)$,
$d_E((\Gamma_{f},\lf),(\Gamma_{g},\Lg))$ can be strictly greater
than $d_{FD}(R_f,R_g)$ is displayed in Figure \ref{esempio_DF}.
\begin{figure}[htbp]
\begin{center}
\psfrag{b}{$b$}\psfrag{d}{$d$}
\psfrag{c1}{$c_1$}\psfrag{c2}{$c_2$}
\psfrag{c1+a}{$c_1+a$}\psfrag{c2+a}{$c_2+a$}\psfrag{f}{$g$}
\psfrag{g}{$f$} \psfrag{M}{$\max z$}
\psfrag{S}{$(\Gamma_f,\lf)$}\psfrag{L}{$(\Gamma_g,\Lg)$}
\includegraphics[width=12cm]{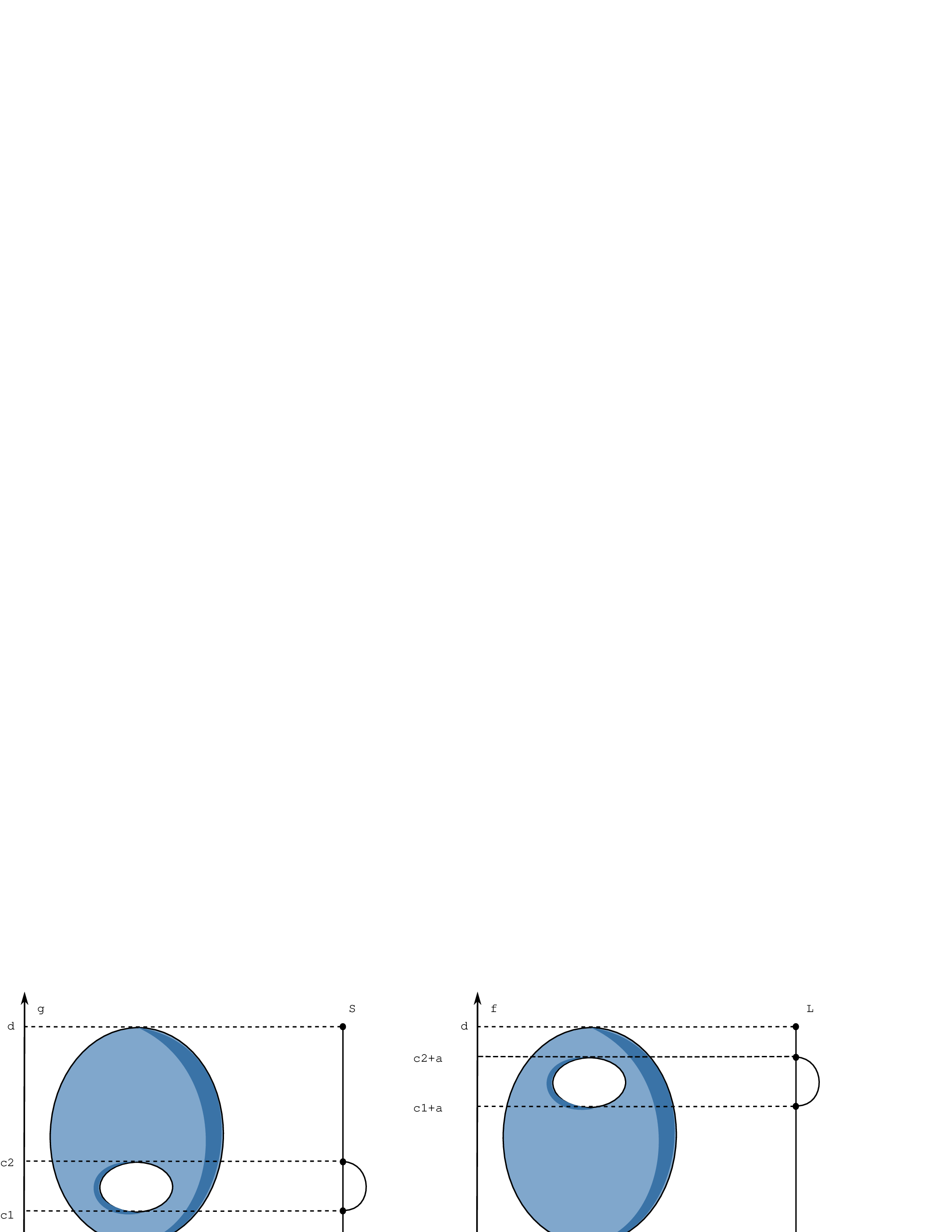}
\caption{\footnotesize{The example used in the proof of Corollary
\ref{distortion} to show that the edit distance between labeled
Reeb graphs can be more discriminative than the functional
distorsion distance between Reeb graphs whenever the same
functions are considered.}}\label{esempio_DF}
\end{center}
\end{figure}
In this case, $d_E((\Gamma_{f},\lf),(\Gamma_{g},\Lg))=a$, because
$a$ is both the cost of the deformation $T$ of type (R) that
changes the vertex label $c_i$ into $c_{i}+a$, $i=1,2$, and the value of the bottleneck
distance between the 1st homology degree (ordinary) persistence
diagrams of $f$ and $g$. On the other hand, $d_{FD}(R_f,R_g)\le
(c_2-c_1)/4$ as can be seen by considering any continuous map
$\Phi:R_f\to R_g$ that takes each point of $R_f$ to a point of
$R_g$ with the same function value, together with any continuous
map $\Psi:R_g\to R_f$ that takes each point of $R_g$ to a point of
$R_f$ with the same function value.
\end{proof}

\subsection*{Acknowledgments}
The authors wish to thank Professor V. V. Sharko for his
clarifications on the uniqueness property of Reeb graphs of
surfaces and for indicating the reference \cite{Ku98}. The
research described in this article has been partially supported by
GNSAGA-INdAM (Italy).

\bibliographystyle{amsplain}
\bibliography{biblio}

\end{document}